\documentclass[twocolumn]{autart}  
\usepackage{cite}
\usepackage{amsmath,amssymb,amsfonts}
\usepackage{algorithm}
\usepackage{algpseudocode}
\usepackage{graphicx}

\usepackage{textcomp}
\usepackage{xcolor}
\usepackage{soul,dsfont} 
\usepackage{tikz} 
\usepackage{color}
\usepackage{algorithm} 
\usepackage{array}
\usepackage{eqparbox}
\usepackage{url}
\usepackage{stfloats}
\usepackage{graphics}
\usepackage[mathscr]{euscript}
\graphicspath{{figures/}{./}}

\newtheorem{proposition}{Proposition}

\newtheorem{remark}{Remark}

\begin{document}

\begin{frontmatter}

\title{Data-Driven Tensor Decomposition Identification of Homogeneous Polynomial Dynamical Systems\thanksref{footnoteinfo}}

\thanks[footnoteinfo]{This paper was not presented at any IFAC 
meeting. Corresponding author Can Chen.}

\author[1]{Xin Mao}\ead{xinm@unc.edu},
\author[2]{Joshua Pickard}\ead{jpickard@broadinstitute.org},
\author[1,3,4]{Can Chen}\ead{canc@unc.edu} 

\address[1]{School of Data Science and Society, University of North Carolina at Chapel Hill, Chapel Hill, NC 27599, USA}  
\address[2]{Eric and Wendy Schmidt Center, Broad Institute of MIT and Harvard, Cambridge, MA 02142, USA}
\address[3]{Department of Mathematics, University of North Carolina at Chapel Hill, Chapel Hill, NC 27599, USA}  
\address[4]{Department of Biostatistics, University of North Carolina at Chapel Hill, Chapel Hill, NC 27599, USA}  

\begin{keyword}
homogeneous polynomial dynamical systems, system identification, data-driven methods, tensors, low-rank tensor decompositions, alternating least-squares algorithms, large-scale systems.
\end{keyword}

\begin{abstract}
Homogeneous polynomial dynamical systems (HPDSs), which can be equivalently represented by tensors, are essential for modeling  higher-order networked systems, including ecological networks, chemical reactions, and multi-agent robotic systems. However, identifying such systems from data is challenging due to the rapid growth in the number of parameters with increasing system dimension and polynomial degree. In this article, we adopt compact and scalable representations of HPDSs leveraging low-rank tensor decompositions, including  tensor train, hierarchical Tucker, and  canonical polyadic decompositions. These representations exploit the intrinsic multilinear structure of HPDSs and substantially reduce the dimensionality of the parameter space. Rather than identifying the full dynamic tensor, we develop a data-driven framework that directly learns the underlying factor tensors or matrices in the associated decompositions from time-series data. The resulting identification problem is solved using alternating least-squares algorithms tailored to each tensor decomposition, achieving both accuracy and computational efficiency. We further analyze the robustness of the proposed framework in the presence of measurement noise and characterize data informativity. Finally, we demonstrate the effectiveness of our framework with numerical examples. 
	\end{abstract}
    \end{frontmatter}

    \section{Introduction}
Homogeneous polynomial dynamical systems (HPDSs) play a central role in modeling networked systems with fixed-order higher-order interactions, such as ecological networks with cubic couplings and chemical reaction systems governed by homogeneous rate laws, where all terms share the same total degree with respect to concentrations \cite{bairey2016high,chellaboina2009modeling,grilli2017higher,cencetti2021temporal,craciun2019polynomial,mao2025model,chesi2009homogeneous,samardzija1983stability}. Beyond these classical examples, any polynomial dynamical system can be homogenized, allowing HPDSs to represent higher-order interactions of varying orders and extending their applicability to diverse complex systems, including fluid dynamics, epidemiology, and robotics \cite{lassila2014model,carlberg2013gnat,brauer2019mathematical,guckenheimer2013nonlinear,malizia2024reconstructing,cox2020applications,ivakhnenko2007polynomial,najm2009uncertainty}. Accurate modeling of such systems requires identifying the underlying HPDSs from time-series data. This task is essential for predicting system behavior, analyzing system-theoretic properties, uncovering hidden nonlinear interactions, and informing the design of effective control strategies. Consequently, the system identification of HPDSs constitutes a fundamental step toward leveraging these models in both theoretical studies and practical applications.

A variety of system identification techniques have been proposed for linear and nonlinear dynamical systems. Classical linear methods, such as subspace identification \cite{van2012subspace, jansson1998consistency}, ARX models \cite{isaksson2002identification,jansson2003subspace}, and least-squares regression \cite{ljung1998system,watson1967linear}, have proven effective for moderate-scale systems, offering computational efficiency and reliable parameter estimates. To capture richer nonlinear behaviors, nonlinear ARX models \cite{zhao2010recursive,de1997stabilizing}, Volterra series expansions \cite{batselier2017tensor,brockett1976volterra}, sparse regression techniques \cite{bertsimas2020sparse,bertsimas2020sparsehigh}, polynomial approximations \cite{ostertagova2012modelling,guo2020constructing}, and SINDy \cite{brunton2016discovering,delabays2025hypergraph} have been applied to identify governing equations from data by exploiting structural or sparsity assumptions. More recently, data-driven and machine learning-based methods, such as dynamic mode decomposition and physics-informed network models, have demonstrated the ability to approximate high-dimensional nonlinear dynamics directly from observations \cite{mao2025identification,proctor2016dynamic,pfluger2010spatially}. However, applying these methods to HPDSs faces significant challenges due to the combinatorial growth of parameters with increasing state dimension and polynomial degree, which leads to computational intractability and overfitting in high-dimensional settings.

In recent years, numerous studies have explored tensor representations for polynomial dynamical systems, recognizing that the coefficient structure of an HPDS can be naturally expressed as a high-order tensor, often called the dynamic tensor in analogy to the dynamic matrix of linear dynamical systems \cite{batselier2017tensor,dong2024controllability,batselier2022low,chen2022explicit,chen2021controllability,gorodetsky2018high,pickard2023observability,pickard2024geometric,pickard2024kronecker,cui2024discrete,cui2025analysis}. Such tensor-based representations provide an intuitive and expressive framework for encoding the multilinear interactions inherent in HPDSs and enable the use of tools from tensor algebra for system identification and analysis. Moreover, these formulations make it possible to exploit structural properties such as symmetry and sparsity that are difficult to capture using conventional approaches. However, directly estimating the full dynamic tensor of an HPDS from data is computationally prohibitive, as the number of tensor entries grows combinatorially with the system dimension and polynomial degree. Consequently, existing tensor-based identification methods that rely on full tensor representations are limited by the curse of dimensionality and often fail to scale to high-dimensional systems or higher-order nonlinearities.

To overcome this limitation, a natural approach is to leverage low-rank tensor decompositions, which can drastically reduce both computational and storage complexity while preserving modeling expressiveness. Several structured tensor decompositions have been proposed in the literature, including tensor train decomposition (TTD) \cite{oseledets2011tensor,oseledets2009breaking}, hierarchical Tucker decomposition (HTD) \cite{lubich2013dynamical,grasedyck2010hierarchical}, and canonical polyadic decomposition (CPD) \cite{phan2013candecomp,hong2020generalized}. TTD and HTD provide efficient and scalable representations of high-order tensors by decomposing them into a sequence or tree of lower-dimensional factors and are valued for their numerical stability. In contrast, CPD represents a tensor as a sum of rank-one components offering high compactness and is widely used for its conceptual simplicity and strong uniqueness properties under mild conditions. Although low-rank tensor decompositions have been extensively studied for data compression and numerical approximation \cite{bousse2017tensor,batselier2022low,kargas2020nonlinear,phan2020stable,yan2014image}, their application to system identification, particularly for nonlinear dynamical systems, remains relatively underexplored.

In this article, we develop a novel data-driven framework for identifying HPDSs under structured low-rank tensor parameterizations. We consider TTD-, HTD-, and CPD-based representations of HPDSs with unknown dynamic tensors and formulate the identification problem as a structured nonlinear least-squares optimization using time-series data. By exploiting the multilinear structure inherent to each decomposition, we decompose the nonlinear optimization into a sequence of subproblems that reduce to standard linear least-squares problems, whose dimensions depend only on the system size and the tensor ranks. Instead of recovering the full dynamic tensor, our framework directly identifies the low-rank tensor or matrix factors from time-series data, bypassing the need for full tensor reconstruction. As a result, the proposed methods scale efficiently to high-dimensional systems and higher polynomial degrees. We further establish the convergence of the proposed algorithms, analyze their robustness in the presence of  noise, and discuss the data informativity conditions necessary for system identification under each decomposition.

The proposed data-driven tensor decomposition identification framework has broad potential applications in real-world higher-order networked systems, including ecological networks, biological networks, and multi-agent robotic systems. For instance, it enables the recovery of structured multi-species interaction mechanisms directly from population time-series data while avoiding the combinatorial explosion associated with full tensor representations. This approach provides a scalable and interpretable means of uncovering latent ecological interaction patterns that extend beyond simple pairwise effects, with important implications for system analysis and the design of optimal strategies for ecological conservation. The remainder of this article is organized as follows. In Section~\ref{sec: preliminary}, we review key concepts in tensor algebra, including tensor-vector products, tensor matricization, and tensor decompositions. Section~\ref{sec: identification} develops the data-driven  identification framework for TTD-, HTD-, and CPD-based HPDSs, leveraging alternating least-squares methods. Section~\ref{sec: simulation} presents numerical examples that illustrate the effectiveness of the proposed algorithms. Finally, Section~\ref{sec: conclusion} concludes the article and discusses directions for future research.

  \section{Preliminaries}\label{sec: preliminary}
  A tensor is a multidimensional array that generalizes scalars, vectors, and matrices to higher-order structures \cite{kolda2009tensor, chen2024tensor}. The order of a tensor refers to the number of modes it possesses. A $k$th-order tensor is often denoted by $\mathscr A\in\mathbb R^{n_1\times n_2\times\cdots\times n_k}$, where $n_j$ is the size of $j$th mode. A tensor is called cubical if all of its modes have the same size, i.e., $n_1=n_2=\cdots=n_k$. A cubical tensor is said to be almost symmetric if its entries are invariant under any
permutation of the first $k-1$ indices.

\subsection{Tensor Products}
The outer product of two tensors generalizes the vector outer product. Let $\mathscr{A} \in \mathbb{R}^{n_1 \times n_2 \times \cdots \times n_{k_1}}$ and $\mathscr{B} \in \mathbb{R}^{m_1 \times m_2 \times \cdots \times m_{k_2}}$ be two tensors. Their outer product, denoted by $\mathscr{A} \circ \mathscr{B} \in \mathbb{R}^{n_1 \times \cdots \times n_{k_1} \times m_1 \times \cdots \times m_{k_2}}$, is defined entrywise as
\[
(\mathscr A\circ\mathscr B)_{i_1i_2\cdots i_{k_1}j_1j_2\cdots j_{k_2}}=\mathscr A_{i_1i_2\cdots i_{k_1}}\mathscr B_{j_1j_2\cdots j_{k_2}}.
\]

The tensor-vector product is a natural extension of the matrix-vector product. Let $\mathscr{A} \in \mathbb{R}^{n_1 \times n_2 \times \cdots \times n_k}$ be a $k$th-order tensor and $\textbf{v} \in \mathbb{R}^{n_p}$ a vector. The product of $\mathscr{A}$ along mode $p$, denoted by $\mathscr{A} \times_p \textbf{v} \in \mathbb{R}^{n_1 \times \cdots \times n_{p-1} \times n_{p+1} \times \cdots \times n_k}$, is defined entrywise as
\[
(\mathscr A \times_p \textbf v)_{j_1j_2\cdots j_{p-1}j_{p+1}\cdots j_k}=\sum_{j_p=1}^{n_p}\mathscr A_{j_1j_2\cdots j_k}\textbf v_{j_p}.
\]
This operation can be naturally extended to all modes of $\mathscr{A}$, known as the Tucker product
\begin{align*}
\mathscr A\times_1 \textbf v_1 \times_2 \textbf v_2\times_3\dots\times_k\textbf v_k \in\mathbb{R},
\end{align*}
where $\textbf{v}_p\in\mathbb{R}^{n_p}$ for $p=1,2,\dots,k$. If $\textbf v_p=\textbf v$ for all $p$, we write the product as $\mathscr A\textbf v^{k}=\mathscr A\times_1 \textbf v \times_2 \textbf v\times_3\dots\times_k\textbf v$ for simplicity.

The Kronecker product plays a fundamental role in tensor algebra. For matrices
$\textbf A\in\mathbb R^{m\times n}$ and
$\textbf B\in\mathbb R^{p\times q}$, their Kronecker product is defined as
\[
\textbf A\otimes \textbf B
=
\begin{bmatrix}
\textbf{A}_{11}\textbf B & \textbf{A}_{12}\textbf B & \cdots & \textbf{A}_{1n}\textbf B\\
\vdots & \vdots & \ddots & \vdots\\
\textbf{A}_{m1}\textbf B & \textbf{A}_{m2}\textbf B & \cdots & \textbf{A}_{mn}\textbf B
\end{bmatrix}\in\mathbb{R}^{mp\times nq}.
\]
The Kronecker product satisfies several useful properties that will be used throughout this paper:
(i) mixed-product property:
$(\textbf A\otimes\textbf B)(\textbf C\otimes\textbf D)
=
(\textbf A\textbf C)\otimes(\textbf B\textbf D)$,
whenever the products are well-defined; (ii) vectorization identity:
$\mathrm{vec}(\textbf A\textbf X\textbf B)
=
(\textbf B^\top\otimes\textbf A)\mathrm{vec}(\textbf X)$, where $\mathrm{vec}(\cdot)$ denotes the vectorization operation.

The Khatri-Rao product of two matrices
$\textbf A=[\textbf a_1 \ \textbf a_2 \ \cdots \ \textbf a_r]\in\mathbb R^{m\times r}$
and
$\textbf B=[\textbf b_1 \ \textbf b_2 \ \cdots \ \textbf b_r]\in\mathbb R^{n\times r}$
is defined as the column-wise Kronecker product
\[
\textbf A\odot\textbf B
=
[\textbf a_1\otimes\textbf b_1\quad\cdots\quad\textbf a_r\otimes\textbf b_r]\in\mathbb{R}^{mn\times r}.
\]
In particular, for any vectors $\textbf u$, $\textbf v\in\mathbb R^r$,
it holds that $
\mathrm{diag}(\textbf u)\,\textbf v
=
\textbf u\odot \textbf v$,
which follows directly from the column-wise definition of the Khatri-Rao product.

\subsection{Tensor Matricization}
Tensor matricization reshapes a  tensor into a  matrix or a  vector \cite{kolda2009tensor,ragnarsson2012block}. Consider a $k$th-order tensor $\mathscr{A}\in\mathbb{R}^{n_1\times n_2\times\cdots\times n_k}$. We partition its modes into two ordered, disjoint index sets,
$\mathcal{R}=\{r_1,r_2,\dots,r_d\}$ and $
\mathcal{C}=\{c_1,c_2,\dots,c_{k-d}\}$,
which specify the modes assigned to the rows and columns of the the matricization, respectively. To explicitly describe how tensor indices are mapped to matrix indices, we define the following index mapping function
\begin{equation*}
\psi(\{j_1,\dots,j_k\},\{n_1,\dots,n_k\})
= j_1+\sum_{i=2}^k (j_i-1)\prod_{l=1}^{i-1} n_l,
\end{equation*}
which maps a tensor index $\{j_1,j_2,\dots,j_k\}$, with $1\leq j_i\le n_i$, to a unique index following column-major ordering. The row and column indices of the matricized tensor are given by
$r=\psi(\{j_{r_1},\dots,j_{r_d}\},\{n_{r_1},\dots,n_{r_d}\})$ and $c=\psi(\{j_{c_1},\dots,j_{c_{k-d}}\},\{n_{c_1},\dots,n_{c_{k-d}}\})$, respectively. 

The mode-$\mathcal{R}$ matricization of $\mathscr{A}$ is defined as $$\textbf{A}_{(\mathcal{R})}\in\mathbb{R}^{\prod_{p=1}^{d}n_{r_p}\times\prod_{p=1}^{k-d}n_{c_{p}}},$$ whose entries satisfy $(\textbf A_{(\mathcal R)})_{rc} = \mathscr A_{j_1 j_2 \cdots j_k}$. A commonly used special case is the mode-$p$ matricization, obtained by taking
$\mathcal{R}=\{p\}$ and  $\mathcal{C}=\{1,2,\dots,p-1,p+1,\dots,k\}$, which yields the matrix
$\textbf{A}_{(p)}\in\mathbb{R}^{n_p\times(n_1n_2\cdots n_{p-1}n_{p+1}\cdots n_k)}$,
whose rows correspond to all mode-$p$ fibers of the tensor. 
Moreover, choosing all modes as row indices, i.e.,
$\mathcal{R}=\{1,2,\dots,k\}$ and $\mathcal{C}=\emptyset$,
reduces the matricization to the standard vectorization.

\subsection{Tensor Decompositions}

\subsubsection{Tensor Train Decomposition}
Tensor train decomposition (TTD) provides a numerically robust and scalable representation for high-dimensional tensors by decomposing them into a sequence of interconnected third-order factor tensors, see Fig. \ref{fig: ttd}. The resulting sequential structure enables efficient tensor contractions and stable manipulation of high-order tensors in large-scale settings. For a $k$th-order tensor $\mathscr A\in\mathbb{R}^{n_1\times n_2\times\dots\times n_k}$, the TTD form is mathematically expressed as
\begin{equation}\label{eq:ttd}
\mathscr A=\sum_{j_0=1}^{r_0}\sum_{j_1=1}^{r_1}\cdots\sum_{j_k=1}^{r_k}\mathscr T_{j_0:j_1}^{(1)}\circ \mathscr T_{j_1:j_2}^{(2)}\circ\dots\circ \mathscr T_{j_{k-1}:j_k}^{(k)},
\end{equation}
where $\{r_0, r_1,\dots, r_k\}$ are the TT-ranks with $r_0=r_k=1$, and $\mathscr T^{(p)}\in\mathbb{R}^{r_{p-1}\times n_p\times r_p}$ are third-order factor tensors. The colon ``:'' indicates that all indices along the corresponding dimension are
included, analogous to the colon operator in MATLAB. 

TTD representations are not unique due to an intrinsic gauge freedom, meaning that adjacent
factor tensors may be transformed by arbitrary nonsingular matrices without changing the
represented tensor. To obtain a numerically stable and essentially unique representation, canonical
forms such as left-orthogonality or mixed orthogonality are typically imposed, which remove this gauge ambiguity. A notable advantage of TTD is its numerical stability, which ensures that the TT-ranks of a given tensor $\mathscr A$ can be determined in a robust and
computationally reliable manner.

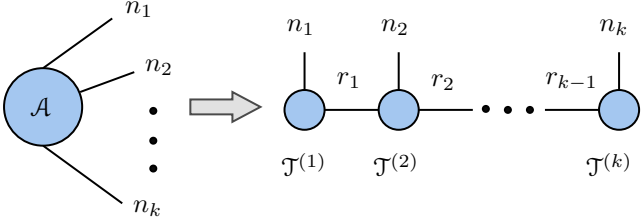
\begin{figure}[t]
\centering
\tikzset{every picture/.style={line width=0.75pt}} 

\begin{tikzpicture}[x=0.75pt,y=0.75pt,yscale=-1,xscale=1]

\draw    (153,129) -- (126,139) ;
\draw    (107.5,166) -- (147,195) ;
\draw    (107.5,127) -- (142,102) ;
\draw  [fill={rgb, 255:red, 74; green, 144; blue, 226 }  ,fill opacity=0.5 ] (88,146.5) .. controls (88,135.73) and (96.73,127) .. (107.5,127) .. controls (118.27,127) and (127,135.73) .. (127,146.5) .. controls (127,157.27) and (118.27,166) .. (107.5,166) .. controls (96.73,166) and (88,157.27) .. (88,146.5) -- cycle ;
\draw    (248,148) -- (275,148) ;
\draw  [color={rgb, 255:red, 74; green, 74; blue, 74 }  ,draw opacity=1 ][fill={rgb, 255:red, 155; green, 155; blue, 155 }  ,fill opacity=0.3 ] (181,142.75) -- (202,142.75) -- (202,139) -- (216,146.5) -- (202,154) -- (202,150.25) -- (181,150.25) -- cycle ;
\draw  [color={rgb, 255:red, 0; green, 0; blue, 0 }  ,draw opacity=1 ][fill={rgb, 255:red, 74; green, 144; blue, 226 }  ,fill opacity=0.5 ] (228,148) .. controls (228,142.48) and (232.48,138) .. (238,138) .. controls (243.52,138) and (248,142.48) .. (248,148) .. controls (248,153.52) and (243.52,158) .. (238,158) .. controls (232.48,158) and (228,153.52) .. (228,148) -- cycle ;
\draw  [fill={rgb, 255:red, 0; green, 0; blue, 0 }  ,fill opacity=0.96 ] (161,148.5) .. controls (161,147.67) and (161.67,147) .. (162.5,147) .. controls (163.33,147) and (164,147.67) .. (164,148.5) .. controls (164,149.33) and (163.33,150) .. (162.5,150) .. controls (161.67,150) and (161,149.33) .. (161,148.5) -- cycle ;
\draw  [fill={rgb, 255:red, 0; green, 0; blue, 0 }  ,fill opacity=0.96 ] (161,163.5) .. controls (161,162.67) and (161.67,162) .. (162.5,162) .. controls (163.33,162) and (164,162.67) .. (164,163.5) .. controls (164,164.33) and (163.33,165) .. (162.5,165) .. controls (161.67,165) and (161,164.33) .. (161,163.5) -- cycle ;
\draw  [fill={rgb, 255:red, 0; green, 0; blue, 0 }  ,fill opacity=0.96 ] (161,177.5) .. controls (161,176.67) and (161.67,176) .. (162.5,176) .. controls (163.33,176) and (164,176.67) .. (164,177.5) .. controls (164,178.33) and (163.33,179) .. (162.5,179) .. controls (161.67,179) and (161,178.33) .. (161,177.5) -- cycle ;
\draw  [fill={rgb, 255:red, 74; green, 144; blue, 226 }  ,fill opacity=0.5 ] (275,148) .. controls (275,142.48) and (279.48,138) .. (285,138) .. controls (290.52,138) and (295,142.48) .. (295,148) .. controls (295,153.52) and (290.52,158) .. (285,158) .. controls (279.48,158) and (275,153.52) .. (275,148) -- cycle ;
\draw    (238,138) -- (238,119) ;
\draw    (285,138) -- (285,119) ;
\draw    (295,148) -- (322,148) ;
\draw    (358,148) -- (385,148) ;
\draw  [fill={rgb, 255:red, 74; green, 144; blue, 226 }  ,fill opacity=0.5 ] (385,148) .. controls (385,142.48) and (389.48,138) .. (395,138) .. controls (400.52,138) and (405,142.48) .. (405,148) .. controls (405,153.52) and (400.52,158) .. (395,158) .. controls (389.48,158) and (385,153.52) .. (385,148) -- cycle ;
\draw    (395,138) -- (395,119) ;
\draw  [fill={rgb, 255:red, 0; green, 0; blue, 0 }  ,fill opacity=0.96 ] (327,147.5) .. controls (327,146.67) and (327.67,146) .. (328.5,146) .. controls (329.33,146) and (330,146.67) .. (330,147.5) .. controls (330,148.33) and (329.33,149) .. (328.5,149) .. controls (327.67,149) and (327,148.33) .. (327,147.5) -- cycle ;
\draw  [fill={rgb, 255:red, 0; green, 0; blue, 0 }  ,fill opacity=0.96 ] (338,148) .. controls (338,147.17) and (338.67,146.5) .. (339.5,146.5) .. controls (340.33,146.5) and (341,147.17) .. (341,148) .. controls (341,148.83) and (340.33,149.5) .. (339.5,149.5) .. controls (338.67,149.5) and (338,148.83) .. (338,148) -- cycle ;
\draw  [fill={rgb, 255:red, 0; green, 0; blue, 0 }  ,fill opacity=0.96 ] (349,148) .. controls (349,147.17) and (349.67,146.5) .. (350.5,146.5) .. controls (351.33,146.5) and (352,147.17) .. (352,148) .. controls (352,148.83) and (351.33,149.5) .. (350.5,149.5) .. controls (349.67,149.5) and (349,148.83) .. (349,148) -- cycle ;

\draw (147,92) node [anchor=north west][inner sep=0.75pt]   [align=left] {$\displaystyle n_{1}$};
\draw (151,192) node [anchor=north west][inner sep=0.75pt]   [align=left] {$\displaystyle n_{k}$};
\draw (145,159) node [anchor=north west][inner sep=0.75pt]   [align=left] {$ $};
\draw (157,121) node [anchor=north west][inner sep=0.75pt]   [align=left] {$\displaystyle n_{2}$};
\draw (227.7,100.84) node [anchor=north west][inner sep=0.75pt]   [align=left] {$\displaystyle n_{1}$};
\draw (383.7,100.96) node [anchor=north west][inner sep=0.75pt]   [align=left] {$\displaystyle n_{k}$};
\draw (274.78,101.32) node [anchor=north west][inner sep=0.75pt]   [align=left] {$\displaystyle n_{2}$};
\draw (252.7,127.84) node [anchor=north west][inner sep=0.75pt]   [align=left] {$\displaystyle r_{1}$};
\draw (299.7,128.84) node [anchor=north west][inner sep=0.75pt]   [align=left] {$\displaystyle r_{2}$};
\draw (356.7,127.84) node [anchor=north west][inner sep=0.75pt]   [align=left] {$\displaystyle r_{k-1}$};
\draw (100,140) node [anchor=north west][inner sep=0.75pt]    {$\mathscr{A}$};
\draw (225,168) node [anchor=north west][inner sep=0.75pt]    {$\mathscr{T}^{(1)}$};
\draw (271,168) node [anchor=north west][inner sep=0.75pt]    {$\mathscr{T}^{(2)}$};
\draw (377,168) node [anchor=north west][inner sep=0.75pt]    {$\mathscr{T}^{(k)}$};
\end{tikzpicture}
\caption{Illustration of the TT decomposition of a $k$th-order tensor into a sequence of third-order factor tensors.}\label{fig: ttd}
\end{figure}

\subsubsection{Hierarchical Tucker Decomposition} 
Hierarchical Tucker decomposition (HTD) provides a tree-structured representation of high-order tensors by recursively grouping tensor modes according to a binary dimension tree.
The binary tree $\mathcal T$ satisfies the following properties:
    (i) each node represents a subset of the tensor modes $\{1,2,\dots,k\}$;
    (ii) the root node corresponds to the full mode set $\{1,2,\dots,k\}$;
    (iii) each leaf node corresponds to a single mode;
    (iv) each parent node is the disjoint union of its two children.
An example of a dimension tree is shown in Fig.~\ref{fig: htdtree}.
The level of a node is defined as its distance from the root, and the depth of the tree is $d=\lceil\log_2 k\rceil$.

For a $k$th-order tensor $\mathscr A\in\mathbb{R}^{n_1\times n_2\times\dots\times n_k}$, each node $\mathcal P\in\mathcal T$ is associated with a factor matrix $\textbf V_\mathcal P$, which spans the column space of the mode-$\mathcal P$ matricization $\textbf A_{(\mathcal P)}$. A key property of HTD is that factor matrices $\textbf V_\mathcal P$ associated with internal nodes do not need to be stored explicitly. Instead, they are constructed
recursively from the factor matrices of the left and right child nodes,
$\textbf V_{\mathcal P_l}$ and $\textbf V_{\mathcal P_r}$, according to 
\begin{equation}\label{eq:HTrelation}
\textbf V_\mathcal P=(\textbf V_{\mathcal P_l}\otimes \textbf V_{\mathcal P_r})
\textbf C_\mathcal P,
\end{equation}
where $\textbf C_\mathcal P\in\mathbb{R}^{r_{\mathcal P_l}r_{\mathcal P_r}\times r_\mathcal P}$ is the transfer matrix, and $r_{\mathcal P_l}$, $r_{\mathcal P_r}$, and $r_\mathcal P$ denote the hierarchical ranks associated with the nodes $\mathcal P_l$, $\mathcal P_r$, and $\mathcal P$, respectively. Consequently, an HT representation is fully characterized by the factor matrices at the leaf nodes $\textbf V_p\in\mathbb{R}^{n_p\times r_p}$ and the transfer matrices
$\textbf C_\mathcal P$ at internal nodes. 

HTD can be viewed as a generalization of TTD that allows more flexible groupings of tensor modes. As with TTD, the HTD representation is not unique due to gauge freedom. For convenience and notational uniformity, we associate a transfer matrix with every node in the dimension tree, and set $\textbf C_\mathcal P=\textbf I_{r_\mathcal P}$
whenever $\mathcal P$ is a leaf node. Applying  (\ref{eq:HTrelation}) recursively from the leaves to the root (whose rank is set to $1$) yields the vectorized form
\begin{align*}
\scalebox{0.95}{
$\text{vec}({\mathscr A})\!=\!(\textbf V_{k}\otimes\cdots\otimes\textbf V_{1})(\otimes_{\mathcal Q\in\mathcal G_{d-1}}\textbf C_\mathcal Q)\cdots(\otimes_{\mathcal Q\in\mathcal G_{0}}\textbf C_\mathcal Q),$}
\end{align*}
where $\mathcal G_j$ denote the set of nodes at level $j$ of the dimension tree for $j=0,1,\dots,d-1$. This formulation is valid for any (possibly non-complete) balanced dimension tree, since leaf nodes that appear above the bottom level simply contribute identity matrices in the Kronecker products.

\begin{figure}[t]
\centering
\tikzset{every picture/.style={line width=0.75pt}} 

\begin{tikzpicture}[x=0.75pt,y=0.75pt,yscale=-1,xscale=1]

\draw [color={rgb, 255:red, 74; green, 144; blue, 226 }  ,draw opacity=1 ][line width=2.25]    (330,65.5) -- (279,83.5) ;
\draw [color={rgb, 255:red, 74; green, 144; blue, 226 }  ,draw opacity=1 ][line width=2.25]    (329,65.5) -- (380,83) ;
\draw [color={rgb, 255:red, 74; green, 144; blue, 226 }  ,draw opacity=1 ][line width=2.25]    (276,107.5) -- (241,127.5) ;
\draw [color={rgb, 255:red, 74; green, 144; blue, 226 }  ,draw opacity=1 ][line width=2.25]    (312,128) -- (275,106.5) ;
\draw [color={rgb, 255:red, 74; green, 144; blue, 226 }  ,draw opacity=1 ][line width=2.25]    (380,107.5) -- (351,128.5) ;
\draw [color={rgb, 255:red, 74; green, 144; blue, 226 }  ,draw opacity=1 ][line width=2.25]    (421,128) -- (379,107.5) ;
\draw [color={rgb, 255:red, 74; green, 144; blue, 226 }  ,draw opacity=1 ][line width=2.25]    (420,151.5) -- (389,170.5) ;
\draw [color={rgb, 255:red, 74; green, 144; blue, 226 }  ,draw opacity=1 ][line width=2.25]    (454,170.5) -- (419,151.5) ;
\draw   (288,40.5) -- (372,40.5) -- (372,63.5) -- (288,63.5) -- cycle ;
\draw   (258,83.5) -- (297,83.5) -- (297,105.5) -- (258,105.5) -- cycle ;
\draw   (352,83.5) -- (406,83.5) -- (406,105.5) -- (352,105.5) -- cycle ;
\draw   (227,127.5) -- (251,127.5) -- (251,149.5) -- (227,149.5) -- cycle ;
\draw   (336,128.5) -- (360,128.5) -- (360,150.5) -- (336,150.5) -- cycle ;
\draw   (298,128.5) -- (322,128.5) -- (322,150.5) -- (298,150.5) -- cycle ;
\draw   (399,128.5) -- (437,128.5) -- (437,150.5) -- (399,150.5) -- cycle ;
\draw   (377,170.5) -- (401,170.5) -- (401,192.5) -- (377,192.5) -- cycle ;
\draw   (440,170.5) -- (464,170.5) -- (464,192.5) -- (440,192.5) -- cycle ;

\draw (330,53) node [anchor=center][inner sep=0.75pt]  [font=\small]  {$\{1,2,3,4,5\} \ $};
\draw (277.5,94.5) node[font=\small] {$\{1,2\}$};
\draw (379,94.5) node[font=\small] {$\{3,4,5\}$};
\draw (239,138.5) node[font=\small] {$\{1\}$};
\draw (310,139.5) node[font=\small] {$\{2\}$};
\draw (348,139.5) node[font=\small] {$\{3\}$};
\draw (418,139.5) node[font=\small] {$\{4,5\}$};
\draw (389,181.5) node[font=\small] {$\{4\}$};
\draw (452,181.5) node[font=\small] {$\{5\}$};
\end{tikzpicture}
\caption{An example of the HTD binary tree of a fifth-order tensor.}\label{fig: htdtree}
\end{figure}
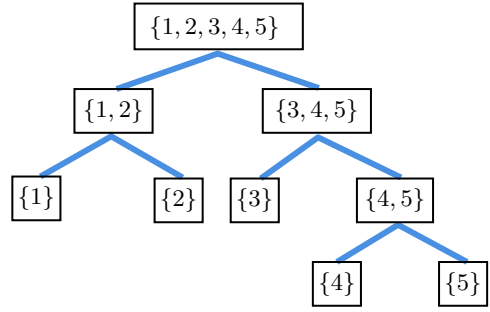

\subsubsection{Canonical Polyadic Decomposition}
Canonical polyadic  decomposition (CPD) is a fundamental tensor decomposition. It represents a tensor as a sum of rank-one tensors, which can be viewed as
higher-order generalization of  matrix eigenvalue decomposition, see Fig.~\ref{fig: CPD}.  CPD preserves the intrinsic multi-way structure of the data and represents interactions across all modes simultaneously.
The CPD of a $k$th-order tensor
$\mathscr A \in \mathbb{R}^{n_1 \times n_2 \times \cdots \times n_k}$ takes the form
\begin{equation}\label{eq:cpd}
\mathscr A=
\sum_{j=1}^{r}
\lambda_j \,
\textbf u^{(1)}_{j}
\circ
\textbf u^{(2)}_{j}
\circ
\cdots
\circ
\textbf u^{(k)}_{j},
\end{equation}
where $r$ is the CP-rank defined as the minimum integer such that
the decomposition \eqref{eq:cpd} holds, $\lambda_j \in \mathbb{R}_+$ are scalar weights, and $\textbf U^{(p)} = [\textbf u^{(p)}_1 \ \textbf u^{(p)}_2\ \cdots \ \textbf u^{(p)}_r] \in \mathbb R^{n_p\times r}$ are factor matrices whose
columns have unit norm. Each rank-one term corresponds to a separable multilinear component that captures a coherent interaction pattern across all tensor modes, with the weight $\lambda_j$ quantifying its relative contribution.

A key property of CPD is that it admits a compact matricized representation. Specifically, the mode-$p$ unfolding $\textbf A_{(p)}$ can be written as
\begin{equation*}
\scalebox{0.87}{$
\textbf A_{(p)}
=\textbf U^{(p)}
\boldsymbol{\Lambda}
\left(
\textbf U^{(k)}
\odot
\cdots
\odot
\textbf U^{(p+1)}
\odot
\textbf U^{(p-1)}
\odot
\cdots
\odot
\textbf U^{(1)}
\right)^{\top},$}
\end{equation*}
where $\boldsymbol{\Lambda}=\mathrm{diag}(\lambda_1,\lambda_2,\dots,\lambda_r)$
is a diagonal matrix containing the weights $\lambda_j$. This formulation underlies many numerical algorithms for computing CPD, including alternating least squares, by reducing the tensor decomposition problem to a sequence of structured linear least-squares subproblems. CPD is attractive due to its conceptual simplicity, interpretability, and high compactness, and it is essentially unique up to permutation and scaling under mild conditions. Although the best low-rank CP approximation problem is ill-posed in general, truncating the CP-rank often yields accurate and practically useful approximations.

\begin{figure}[t]
\centering

\tikzset{every picture/.style={line width=0.75pt}} 

\begin{tikzpicture}[x=0.75pt,y=0.75pt,yscale=-1,xscale=1]

\draw  [color={rgb, 255:red, 74; green, 144; blue, 226 }  ,draw opacity=0.5 ][fill={rgb, 255:red, 74; green, 144; blue, 226 }  ,fill opacity=0.55 ] (88.6,91.51) -- (133.06,91.51) -- (133.06,130.55) -- (88.6,130.55) -- cycle ;
\draw  [color={rgb, 255:red, 74; green, 144; blue, 226 }  ,draw opacity=0.2 ][fill={rgb, 255:red, 74; green, 144; blue, 226 }  ,fill opacity=0.25 ] (107.54,74.03) -- (151.74,74.03) -- (132.8,91.51) -- (88.6,91.51) -- cycle ;
\draw  [color={rgb, 255:red, 74; green, 144; blue, 226 }  ,draw opacity=0.3 ][fill={rgb, 255:red, 74; green, 144; blue, 226 }  ,fill opacity=0.4 ] (152,113.68) -- (152,74.62) -- (133.32,91.36) -- (133.32,130.42) -- cycle ;
\draw  [color={rgb, 255:red, 65; green, 117; blue, 5 }  ,draw opacity=0.3 ][fill={rgb, 255:red, 65; green, 117; blue, 5 }  ,fill opacity=0.3 ] (181.51,91.51) -- (185.25,91.51) -- (185.25,130.42) -- (181.51,130.42) -- cycle ;
\draw  [color={rgb, 255:red, 248; green, 231; blue, 28 }  ,draw opacity=0.5 ][fill={rgb, 255:red, 245; green, 166; blue, 35 }  ,fill opacity=0.4 ] (213.42,61.51) -- (215.65,64.12) -- (185.2,85.4) -- (182.97,82.79) -- cycle ;
\draw  [color={rgb, 255:red, 74; green, 74; blue, 74 }  ,draw opacity=0.2 ][fill={rgb, 255:red, 74; green, 74; blue, 74 }  ,fill opacity=0.3 ] (190.92,89.77) -- (225,89.77) -- (225,93) -- (190.92,93) -- cycle ;
\draw  [color={rgb, 255:red, 65; green, 117; blue, 5 }  ,draw opacity=0.3 ][fill={rgb, 255:red, 65; green, 117; blue, 5 }  ,fill opacity=0.3 ] (254.92,92.1) -- (258.65,92.1) -- (258.65,131) -- (254.92,131) -- cycle ;
\draw  [color={rgb, 255:red, 248; green, 231; blue, 28 }  ,draw opacity=0.5 ][fill={rgb, 255:red, 245; green, 166; blue, 35 }  ,fill opacity=0.4 ] (286.83,62.09) -- (289.06,64.71) -- (258.61,85.98) -- (256.38,83.37) -- cycle ;
\draw  [color={rgb, 255:red, 74; green, 74; blue, 74 }  ,draw opacity=0.2 ][fill={rgb, 255:red, 74; green, 74; blue, 74 }  ,fill opacity=0.3 ] (264.32,90.35) -- (299,90.35) -- (299,94) -- (264.32,94) -- cycle ;
\draw  [color={rgb, 255:red, 65; green, 117; blue, 5 }  ,draw opacity=0.3 ][fill={rgb, 255:red, 65; green, 117; blue, 5 }  ,fill opacity=0.3 ] (358.16,91.51) -- (361.9,91.51) -- (361.9,130.42) -- (358.16,130.42) -- cycle ;
\draw  [color={rgb, 255:red, 248; green, 231; blue, 28 }  ,draw opacity=0.5 ][fill={rgb, 255:red, 245; green, 166; blue, 35 }  ,fill opacity=0.4 ] (390.08,61.51) -- (392.31,64.12) -- (361.86,85.4) -- (359.63,82.79) -- cycle ;
\draw  [color={rgb, 255:red, 74; green, 74; blue, 74 }  ,draw opacity=0.2 ][fill={rgb, 255:red, 74; green, 74; blue, 74 }  ,fill opacity=0.3 ] (367.57,89.77) -- (400,89.77) -- (400,93) -- (367.57,93) -- cycle ;

\draw (158.29,84.38) node [anchor=north west][inner sep=0.75pt]   [align=left] {=};
\draw (103.44,105.32) node [anchor=north west][inner sep=0.75pt]    {$\mathscr A$};
\draw (232.56,82.64) node [anchor=north west][inner sep=0.75pt]   [align=left] {+};
\draw (304.75,80.89) node [anchor=north west][inner sep=0.75pt]   [align=left] {+ };
\draw (316.5,85.02) node [anchor=north west][inner sep=0.75pt]   [align=left] {{\LARGE ... }};
\draw (340.68,81.47) node [anchor=north west][inner sep=0.75pt]   [align=left] {+ };

\end{tikzpicture}
\caption{An example of the CPD of a third-order tensor.}\label{fig: CPD}
\end{figure}
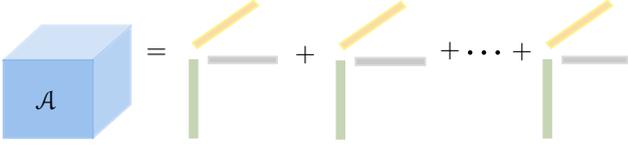

  \section{Data-Driven Tensor Decomposition Identification of HPDSs}\label{sec: identification}
Any homogeneous polynomial dynamical system (HPDS) of degree $k-1$ can be expressed compactly in terms of tensor-vector multiplications as
\begin{align}\label{eq: ausys}
\dot{\textbf x}(t)=\mathscr A\textbf x(t)^{k-1},
\end{align}
where $\mathscr A\in\mathbb{R}^{n\times n\times\stackrel{k}{\cdots}\times n}$ is a $k$th-order, $n$-dimensional almost symmetric tensor, and $\textbf x(t)\in\mathbb{R}^n$ denotes the system state. This tensor-based representation provides a concise and structured way to capture higher-order interactions among the state variables. To illustrate, consider the following quadratic HPDS
\[
\dot{x}_1 = x_1^2 - 3x_1x_2 + 2x_2^2,\quad
\dot{x}_2 = 2x_1^2 + 6x_1x_2 - x_2^2.
\]
 This system can be written compactly as $\dot{\textbf x}(t)=\mathscr A \textbf x(t)^{2}$,
where $\mathscr A\in\mathbb R^{2\times 2\times 2}$ is a third-order almost symmetric dynamic tensor with
frontal slices
\[
\mathscr{A}_{::1} =
\begin{bmatrix}
1 & -\tfrac{3}{2} \\
-\tfrac{3}{2}& 2
\end{bmatrix},\quad
\mathscr{A}_{::2} =
\begin{bmatrix}
2 & 3 \\
3 & -1
\end{bmatrix}.
\]
We assume that state measurements are collected over the time interval $[t_0, t_0+(T-1)\tau]$ with a uniform sampling period $\tau$. The sampled state trajectory and its time derivative are organized into the data matrices
\begin{align*}
\textbf X_0&=\begin{bmatrix}
\textbf x(t_0) & \textbf x(t_0+\tau) & \cdots & \textbf x(t_0+(T-1)\tau)\end{bmatrix}\in\mathbb{R}^{n\times T},\\
\textbf X_1&=\begin{bmatrix}
\dot{\textbf x}(t_0) & \dot{\textbf x}(t_0+\tau) & \cdots & \dot{\textbf x}(t_0+(T-1)\tau)\end{bmatrix}\in\mathbb{R}^{n\times T}.
\end{align*}
The time-series data $\textbf X_0$ and $\textbf X_1$  serve as the basis for identifying system parameters directly from observations. For notational convenience, 
we denote $\textbf x(j)$ as the $j$th sampled state,
i.e., $\textbf x(j)=\textbf x(t_0+(j-1)\tau)$.

While the tensor-based representation \eqref{eq: ausys} provides a compact description of HPDSs, the direct identification of the full dynamic tensor $\mathscr A$ from data is generally
challenging due to the extremely high dimensionality and combinatorial complexity of $\mathscr A$.  This challenge becomes particularly severe for high-degree and high-dimensional systems, where naive least-squares formulations quickly become ill-posed or computationally prohibitive. To address this issue, we exploit the observation that many real-world higher-order systems exhibit intrinsic low-rank structure when represented in suitable tensor formats. By imposing structured tensor decompositions on $\mathscr A$, the original identification problem can be reformulated in terms of a substantially smaller set of factor matrices or factor tensors. This leads to a parsimonious representation of the underlying dynamics and enables the development of efficient alternating optimization algorithms. In the following, we develop data-driven tensor decomposition identification methods for HPDSs. All theoretical results are detailed based on the TTD-based representation, while the HTD- and CPD-based representations follow analogously.

\subsection{Identification of TTD-based HPDSs}

We begin by studying the identification of TTD-based HPDSs. In this representation, the dynamic tensor $\mathscr A$ is factorized into a sequence of third-order factor tensors whose dimensions are governed by the TT-ranks $\{r_p\}_{p=0}^{k}$. Assume $r=\text{max}\{r_p\ | \ p=0,1,\dots,k\}$. The total number of parameters in the TTD-based representation is $\mathcal O(knr^2)$, which is linearly scaling in the order $k$ and quadratic scaling in the maximum TT-rank $r$. This parameterization yields a substantial reduction in complexity compared with the exponential scaling of the full tensor representation, making the identification of large-scale HPDSs computationally tractable. Our objective is to identify factor tensors associated with the TTD of $\mathscr{A}$ that best describe the system dynamics from the observed time-series data $\textbf X_0$ and $\textbf X_1$. 
     
To connect the TTD-based representation with the observed data, we first express the homogeneous polynomial vector field in terms of the factor tensors. For each sampled state $\textbf x(j)$, we define 
\begin{equation}\label{eq: sysTTD}
    \textbf f(j)=\left[\prod_{p=1}^{k-1}\left(\mathscr T^{(p)}\times_2 \textbf x(j)\right)\mathscr T^{(k)}\right]^\top,
\end{equation}
and form the matrix $\textbf F_0=[
\textbf f(0) \ \textbf f(1) \ \cdots \ \textbf f(T-1)]\in\mathbb{R}^{n\times T}.$
The identification of factor tensors is formulated as the following least-squares optimization problem
    \begin{align}\label{eq:ttd-opt}
\min_{\{\mathscr{T}^{(p)}\}_{p=1}^k}  
\| \textbf{X}_1 -\textbf F_0 \|_F^2,
\end{align}
which is nonlinear and nonconvex due to the recursive contractions across the factor tensors. Nevertheless, by fixing all factor tensors except $\mathscr T^{(p)}$, the problem becomes linear in $\mathscr T^{(p)}$, allowing it to be solved as a standard linear least-squares subproblem. This observation naturally motivates an alternating least-squares (ALS) scheme, in which each factor tensor is updated sequentially while keeping the others fixed, resulting in a sequence of tractable linear problems. 

The detailed procedure is summarized in Algorithm~\ref{alg:ALS-TTD}. 
At each iteration, the factor tensors
$\{\mathscr T^{(p)}\}_{p=1}^k$ are updated sequentially by solving a regression problem constructed from the available data. For the $p$th factor tensor $\mathscr{T}^{(p)}$ and time index $j$, contracting the remaining fixed factor tensors with the sampled state $\textbf x(j)$ yields the left and right matrices
\begin{align}
\textbf{L}_p(j) &= \prod_{i=1}^{p-1} \left( \mathscr{T}^{(i)} \times_2 \textbf{x}(j) \right),\label{eq: ttdleft} \\
\textbf{R}_p(j) &= \prod_{i=p+1}^{k-1} \left( \mathscr{T}^{(i)} \times_2 \textbf{x}(j) \right) \mathscr{T}^{(k)}.\label{eq: ttdright}
\end{align}
Both $\textbf{L}_p(j)$ and $\textbf{R}_p(j)$ are low-dimensional matrices whose sizes depend only on the TT-ranks and the state dimension $n$, and can be efficiently computed via sequential tensor contractions. 

By leveraging these contractions, the update of $\mathscr T^{(p)}$ reduces to the linear least-squares problem
\begin{align}\label{eq:ttd-sub}
\min_{\mathrm{vec}(\mathscr T^{(p)})}
\| \textbf H_p\,\mathrm{vec}(\mathscr T^{(p)}) - \mathrm{vec}(\textbf X_1) \|_2^2,
\end{align}
where $\textbf H_p$
is constructed by stacking the matrices $\textbf H_p = [\textbf H_p(0)^\top \  \textbf H_p(1)^\top \  \cdots \  \textbf H_p(T-1)^\top]^\top$
with each block
\begin{align}\label{eq: ttdH}
  \textbf H_p(j) = \textbf R_p(j)^\top \otimes \textbf x(j)^\top \otimes \textbf L_p(j).
  \end{align}
The matrix $\textbf H_p$ has dimension $(nT) \times (nr_{p-1}r_p)$. Therefore, the subproblem avoids
the exponential complexity associated with identifying the full high-dimensional dynamic tensor in the ambient tensor space. The solution is reshaped into a third-order tensor of dimension $r_{p-1}\times n\times r_p$, followed by an orthogonalization step to maintain numerical stability and control scaling between adjacent factor tensors. The TT-ranks may either be specified a priori or adjusted adaptively via SVD-based truncation during the orthogonalization stage. 
The algorithm terminates when the relative decrease in prediction error falls below a prescribed tolerance, or when the maximum number of iterations is attained. This structured ALS scheme therefore provides a computationally tractable and scalable approach for identifying TTD-based HPDSs from time-series data.

\begin{algorithm}[t]
\caption{ALS Identification for TTD-based HPDS}
\label{alg:ALS-TTD}
\begin{algorithmic}[1]
\State \textbf{Input:} Data $\textbf X_0, \textbf X_1$, tolerance $\epsilon > 0$, truncation threshold $\delta > 0$, maximum iterations $N_{\max}$
\State \textbf{Output:} TT factor tensors $\{\mathscr T^{(p)}\}_{p=1}^k$ 
\State Randomly initialize $\{\mathscr T^{(p)}\}_{p=1}^k$ ; set $e_{\text{prev}} = +\infty$
\For{$\text{iter} = 1$ to $N_{\max}$}
    \For{$p = 1$ to $k$}
        \For{$j = 1$ to $T$}
            \State Compute left contraction \eqref{eq: ttdleft} 
            \State Compute right contraction \eqref{eq: ttdright}
            \State Construct  $\textbf H_p(j)$ based on \eqref{eq: ttdH}
        \EndFor
        \State Stack $\textbf H_p = [\textbf H_p(0)^\top\ \cdots\ \textbf H_p(T-1)^\top]^\top$
        \State Solve the linear least-squares problem \eqref{eq:ttd-sub}
        \State Reshape solution into $\mathscr T^{(p)} \in \mathbb{R}^{r_{p-1}\times n\times r_p}$
        \State Orthonormalize $\mathscr T^{(p)}$ and truncate singular values below $\delta$ if necessary
    \EndFor
    \State Compute $\textbf F_0 = [\textbf f(0)\ \textbf f(1) \ \cdots\ \textbf f(T-1)]$ using \eqref{eq: sysTTD}
    \State Compute error $e = \|\textbf X_1 - \textbf F_0\|_F^2$
    \If{$|e_{\text{prev}} - e|/e_{\text{prev}} < \epsilon$}
        \State \textbf{break}
    \EndIf
    \State $e_{\text{prev}} \gets e$
\EndFor
\State \textbf{return}$\{\mathscr T^{(p)}\}_{p=1}^k$ 
\end{algorithmic}
\end{algorithm}

\begin{remark}
Let $r$ denote the maximum TT-rank of the $k$th-order, $n$-dimensional dynamic tensor $\mathscr{A}$ associated with a TTD-based HPDS. The computational complexity of a single ALS iteration in Algorithm~\ref{alg:ALS-TTD} can then be estimated as $\mathcal O(k^2nr^2T+kn^3r^4T)$.
\end{remark}

Next, we analyze the convergence properties of Algorithm~\ref{alg:ALS-TTD}, including the monotonic decrease of the objective function, as well as the local convergence under a mild regularity condition. Let $\{\{\mathscr T^{(p)}_\ell\}_{p=1}^k\}_{\ell\ge 0}$ denote the sequence of  factor tensors generated by Algorithm~\ref{alg:ALS-TTD}, where $\ell$ denotes the iteration index. 

\begin{proposition}\label{prop:ttd1}
The sequence of objective values of \eqref{eq:ttd-opt} is monotonically non-increasing and convergent. Moreover, every accumulation point of the iterate sequence $\{\{\mathscr T^{(p)}_\ell\}_{p=1}^k\}_{\ell\ge 0}$ is a first-order stationary point of \eqref{eq:ttd-opt}.
\end{proposition}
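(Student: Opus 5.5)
Write $J(\mathscr T^{(1)},\dots,\mathscr T^{(k)})=\|\textbf X_1-\textbf F_0\|_F^2$ for the objective of \eqref{eq:ttd-opt}. The plan is the standard block-coordinate-descent argument for ALS, adapted to the TT parameterization.

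\textbf{Step 1 (monotonicity).} Fix all blocks except $\mathscr T^{(p)}$. Then $J$ equals the quadratic $\|\textbf H_p\,\mathrm{vec}(\mathscr T^{(p)})-\mathrm{vec}(\textbf X_1)\|_2^2$ with $\textbf H_p$ as in \eqref{eq: ttdH}, because each column of $\textbf F_0$ is linear in $\mathscr T^{(p)}$ through the contraction $\textbf L_p(j)(\mathscr T^{(p)}\times_2\textbf x(j))\textbf R_p(j)$. The vectorization identity then gives exactly the block $\textbf H_p(j)$.

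Since the previous value of $\mathscr T^{(p)}$ is feasible for \eqref{eq:ttd-sub}, the least-squares update cannot increase $J$. The orthonormalization step must also be shown not to increase $J$. I will handle it via gauge invariance: a QR/SVD of the unfolding of $\mathscr T^{(p)}$ produces $\mathscr T^{(p)}\leftarrow \mathbf Q$ and $\mathscr T^{(p+1)}\leftarrow \mathbf G\,\mathscr T^{(p+1)}$. This leaves the represented tensor $\mathscr A$, hence $\textbf F_0$ and $J$, unchanged. Truncation with threshold $\delta$ can raise $J$, so I will assume the ranks are fixed, i.e.\ no truncation occurs, at least eventually. Under that assumption, $J_{\ell+1}\le J_\ell$ along the sweep. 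Because $J\ge 0$, the sequence of objective values is nonincreasing and bounded below, and therefore converges.

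\textbf{Step 2 (stationarity of accumulation points).} Let $\{\mathscr T^{(p)}_{\ell_m}\}$ converge to $\{\bar{\mathscr T}^{(p)}\}$. I would use the classical result for block coordinate descent with exactly minimized blocks (Grippo--Sciandrone; Bertsekas, Prop.~2.7.1). In the form I need, every limit point is stationary provided each block subproblem has a unique minimizer, or, for a two-block or proximal variant, without that condition.

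Accordingly, I will impose the mild regularity condition that $\textbf H_p$ has full column rank near the limit. This is equivalent to $\textbf H_p^\top\textbf H_p\succeq \mu \textbf I$ with $\mu>0$, and the orthogonal gauge from Step 1 helps keep the neighboring cores bounded, which supports this. Strong convexity of each block then gives the sufficient decrease
$J(\text{before})-J(\text{after})\ge \mu\|\mathscr T^{(p)}_{\text{new}}-\mathscr T^{(p)}_{\text{old}}\|_F^2$.
Summing over $\ell$ and using convergence of $J_\ell$ gives $\|\mathscr T^{(p)}_{\ell+1}-\mathscr T^{(p)}_\ell\|\to0$ for every block. Hence the intermediate iterates within a sweep also converge to the same limit along the subsequence.

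Each block update satisfies the normal equations $\textbf H_p^\top(\textbf H_p\mathrm{vec}(\mathscr T^{(p)})-\mathrm{vec}(\textbf X_1))=0$, which say $\nabla_{\mathscr T^{(p)}}J=0$ at the intermediate point. The entries of $\textbf H_p$ are polynomial in the other cores, so $\nabla J$ is continuous. Passing to the limit therefore gives $\nabla_{\mathscr T^{(p)}}J(\bar{\mathscr T})=0$ for all $p$, i.e.\ $\bar{\mathscr T}$ is a first-order stationary point.

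\textbf{Main obstacle.} The delicate part is Step 2: justifying unique block minimizers, or asymptotic regularity, despite TT gauge freedom and the interleaved orthonormalization, which moves the iterate off the pure coordinate-descent path. My first fix would be to pass the gauge factor $\mathbf G$ into the next core, which is the core solved next. The next block update is then an exact minimization starting from an equivalent point, so the normal-equation argument is unaffected. If full column rank cannot be guaranteed, I would fall back to adding a small proximal term $\frac{\mu}{2}\|\mathscr T^{(p)}-\mathscr T^{(p)}_\ell\|_F^2$ to \eqref{eq:ttd-sub}. This yields the sufficient decrease unconditionally, and the same limiting argument then applies.
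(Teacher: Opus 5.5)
Your argument is sound under the hypotheses you state, but Step~2 takes a different route from the paper.

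\textbf{The two routes.} The paper proves monotonicity as you do. It then gets stationarity in one line: the objective is polynomial, hence semi-algebraic, hence KL, and ``standard'' block-coordinate-descent results for KL functions are invoked. You instead use the classical sufficient-decrease argument. A uniform bound $\textbf H_p^\top\textbf H_p\succeq\mu\textbf I$ near the limit (the condition the paper itself imposes in Proposition~\ref{prop:ttd2}) forces a decrease of at least $\mu\|\Delta\|_F^2$ per block. Block changes therefore vanish, and the normal equations pass to the limit with no KL needed.

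\textbf{Your extra hypotheses.} No truncation, and either the rank bound or a proximal term (which changes Algorithm~\ref{alg:ALS-TTD}), are not in the statement. The paper's proof tacitly needs them as well. The KL-based block-coordinate theorems (proximal alternating minimization, PALM, Xu--Yin) assume exactly such a sufficient-decrease or uniform strong-convexity condition. They use KL only to upgrade subsequential stationarity to convergence of the whole sequence. Semi-algebraicity alone cannot suffice: Powell's $C^1$ semi-algebraic example makes cyclic exact minimization over three blocks accumulate at non-stationary points.

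\textbf{Orthonormalization and truncation.} You also treat the orthonormalize-and-truncate step, which the paper's proof ignores. Without truncation it is a gauge move that leaves the objective unchanged; for $p=k$ the factor must go to the left neighbour instead. Truncation can raise the objective, so monotonicity really needs your proviso.

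\textbf{What each buys.} The KL route, once sufficient decrease is granted, would give the stronger conclusion that a bounded sequence converges to a single stationary point. Yours is elementary and keeps its assumptions visible.

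\textbf{Two details to tighten.} First, with only a local rank bound you cannot sum the decrease over all $\ell$. Argue along the subsequence instead, via $J_{\ell_m}-J_{\ell_m+1}\to0$ and induction over $p$ within sweep $\ell_m+1$. Second, the gauge moves are not controlled by the decrease, but they are harmless:
\begin{itemize}
\item a move on bond $(q,q+1)$ leaves $\textbf H_p$ and $\mathscr T^{(p)}$ unchanged for $p\notin\{q,q+1\}$;
\item it multiplies $\nabla_{\mathscr T^{(q)}}J$ by an invertible factor, so block stationarity is preserved;
\item near a limit with left-orthonormal cores $1,\dots,k-1$, the QR factors are close to the identity.
\end{itemize}
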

\textbf{Proof.}
At each update of a single factor tensor while keeping the other factor tensors fixed, Algorithm~\ref{alg:ALS-TTD} exactly minimizes a convex quadratic least-squares subproblem. Hence the objective value of \eqref{eq:ttd-opt} cannot increase after each block update, implying that the sequence of objective values is monotonically non-increasing over successive sweeps. Since the objective function is nonnegative, the sequence of objective values is bounded below and therefore convergent. Moreover, the objective function is a polynomial function of the factor tensors and is therefore semi-algebraic. Semi-algebraic functions satisfy the Kurdyka--{\L}ojasiewicz (KL) property. Standard convergence results for block coordinate descent methods applied to KL functions imply that every accumulation point of the iterate sequence is a first-order stationary point of \eqref{eq:ttd-opt}. \hfill $\blacksquare$

\begin{proposition}\label{prop:ttd2}
Let $\{\mathscr T^{(p)}_\star\}_{p=1}^{k}$ be an isolated stationary point of \eqref{eq:ttd-opt}.
Assume that there exists a neighborhood $\mathcal N$ of $\{\mathscr T^{(p)}_\star\}_{p=1}^{k}$ such that for all iterates in $\mathcal N$ and $p=1,2,\dots,k$,
the corresponding regression matrix $\textbf H_p$
satisfies
$
\textbf H_p^\top \textbf H_p \succeq \alpha_p \textbf I$
for some constant $\alpha_p>0$. Then Algorithm~\ref{alg:ALS-TTD} converges locally linearly to
$\{\mathscr T^{(p)}_\star\}_{p=1}^{k}$, i.e., there exists   a neighborhood $\mathcal N'\subseteq \mathcal N$ of
$\{\mathscr T^{(p)}_\star\}_{p=1}^k$ such that for all iterates in $\mathcal{N}'$, it holds that
\begin{equation}\label{eq:localconv}
\sum_{p=1}^k
\|\mathscr T^{(p)}_{\ell+1}-\mathscr T^{(p)}_\star\|_F^2
\le\rho\sum_{p=1}^k
\|\mathscr T^{(p)}_{\ell}-\mathscr T^{(p)}_\star\|_F^2
\end{equation}
for some $\rho\in(0,1)$.
\end{proposition}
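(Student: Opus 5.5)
I will view one full sweep of Algorithm~\ref{alg:ALS-TTD} as a fixed-point map $\Phi$ on the concatenated parameter vector $\boldsymbol\theta=(\mathrm{vec}(\mathscr T^{(1)}),\dots,\mathrm{vec}(\mathscr T^{(k)}))$. The aim is to show that $\Phi$ is a contraction near $\boldsymbol\theta_\star$. For clarity I first treat the sweep without the orthonormalization and truncation step, and return to it at the end.

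The uniform bound $\textbf H_p^\top\textbf H_p\succeq\alpha_p\textbf I$ on $\mathcal N$ makes each block update single-valued:
\[
\mathrm{vec}(\mathscr T^{(p)}_{\ell+1})=(\textbf H_p^\top\textbf H_p)^{-1}\textbf H_p^\top\mathrm{vec}(\textbf X_1),
\]
where $\textbf H_p$ is evaluated at the current iterate (blocks $1,\dots,p-1$ new, blocks $p+1,\dots,k$ old). By \eqref{eq: ttdleft}--\eqref{eq: ttdH}, the entries of $\textbf H_p$ are polynomial in the other blocks, so each block map is smooth (real analytic) on $\mathcal N$. Hence $\Phi$ is $C^1$.

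Stationarity of $\boldsymbol\theta_\star$ means every block gradient $\textbf H_p^\top(\textbf H_p\mathrm{vec}(\mathscr T^{(p)}_\star)-\mathrm{vec}(\textbf X_1))$ vanishes. The normal equation therefore shows that each block update returns $\mathscr T^{(p)}_\star$ when the others are at $\star$, so $\Phi(\boldsymbol\theta_\star)=\boldsymbol\theta_\star$.

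To identify the Jacobian, I will use the standard block Gauss--Seidel (nonlinear SOR) analysis. Split the Hessian of the objective $f$ at $\boldsymbol\theta_\star$ as $\nabla^2 f=\textbf D+\textbf L+\textbf L^\top$. Here $\textbf D=\mathrm{blkdiag}(2\textbf H_p^\top\textbf H_p)\succ0$ by the hypothesis, and $\textbf L$ is the strictly block-lower part. Differentiating the block optimality conditions $\nabla_p f(\boldsymbol\theta_{\ell+1}^{(1..p)},\boldsymbol\theta_\ell^{(p+1..k)})=0$ then gives
\[
D\Phi(\boldsymbol\theta_\star)=-(\textbf D+\textbf L)^{-1}\textbf L^\top .
\]
The linear convergence bound \eqref{eq:localconv} with $\rho<1$ follows if $\|D\Phi(\boldsymbol\theta_\star)\|<1$ in a suitable norm. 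A mean-value or Taylor argument on a smaller ball $\mathcal N'$, using continuity of $D\Phi$, then turns this into the contraction with some $\rho\in(\|D\Phi\|,1)$. Since \eqref{eq:localconv} is stated in the Euclidean norm, I would either bound the Euclidean norm directly or accept an equivalent-norm statement and absorb the constant by shrinking $\mathcal N'$.

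The main obstacle is establishing $\|D\Phi(\boldsymbol\theta_\star)\|<1$. The Ostrowski--Reich theorem gives spectral radius $<1$ for $-(\textbf D+\textbf L)^{-1}\textbf L^\top$ precisely when $\nabla^2 f(\boldsymbol\theta_\star)\succ0$ and $\textbf D\succ0$. Spectral radius less than one yields a norm in which the map contracts.

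Positive definiteness is where two things must be handled. First, the isolated-stationary-point assumption must be used to exclude null directions. Second, the TT gauge freedom $\mathscr T^{(p)}\to\mathscr T^{(p)}\textbf G$, $\mathscr T^{(p+1)}\to\textbf G^{-1}\mathscr T^{(p+1)}$ produces exactly such null directions of the Hessian. This is the role of the orthonormalization step in the algorithm: it fixes a canonical (left-orthogonal) representative, so I will work on the quotient manifold (the gauge-fixed slice) where isolatedness forces the Hessian restricted to the slice to be positive definite. Orthonormalization is a smooth retraction onto that slice near $\boldsymbol\theta_\star$ (QR is smooth at full rank), so composing it with the sweep keeps $\Phi$ $C^1$ with $\boldsymbol\theta_\star$ fixed. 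I expect the delicate point to be this justification: that isolatedness implies nondegeneracy on the slice. If that fails, I would instead interpret ``isolated'' as a nondegeneracy assumption, i.e.\ a positive definite reduced Hessian, and remark on it.
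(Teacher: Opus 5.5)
Your plan follows the paper's own proof and spells it out. The paper also treats the sweep as a locally Lipschitz map $\Phi$, asserts that ``after fixing the gauge freedom via orthonormalization, the stationary point is isolated'' and second-order regular, and then appeals to nonlinear Gauss--Seidel results.

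\textbf{The gauge step fails.} The step you flag as delicate cannot be carried out, because left-orthonormalization is not a cross-section of the gauge action. Take $\boldsymbol\theta_\star$ left-orthogonal, as any limit of the orthonormalized iteration is. Let $\boldsymbol\theta'$ replace $\mathscr T^{(q)}_\star,\mathscr T^{(q+1)}_\star$ by $\mathscr T^{(q)}_\star\textbf G$ and $\textbf G^\top\mathscr T^{(q+1)}_\star$, with $\textbf G\in\mathbb R^{r_q\times r_q}$ orthogonal. Now run the sweep from $\boldsymbol\theta'$. First, $\textbf G\textbf G^\top=\textbf I$ cancels in every $\textbf H_p$ with $p\neq q,q+1$. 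Second, by gauge invariance and uniqueness, the least-squares solutions at blocks $q$ and $q+1$ are exactly $\mathscr T^{(q)}_\star\textbf G$ and $\textbf G^\top\mathscr T^{(q+1)}_\star$. Third, these already have orthonormal left unfoldings, so any retraction onto the slice (QR with positive diagonal, or the polar factor) leaves them unchanged. Hence $\Phi(\boldsymbol\theta')=\boldsymbol\theta'$.

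Whenever some interior rank satisfies $r_q\ge 2$, the following all hold:
\begin{itemize}
\item $\boldsymbol\theta_\star$ lies on a continuum of fixed points and is not isolated even on the slice.
\item $D\Phi(\boldsymbol\theta_\star)$ has eigenvalue $1$.
\item The reduced Hessian is singular, so your fallback assumption of a positive definite reduced Hessian is also unsatisfiable.
\item \eqref{eq:localconv} fails for iterates started at $\boldsymbol\theta'$ with $\textbf G$ arbitrarily close to $\textbf I$.
\end{itemize}
The case $k=2$ with a rank-two $\mathscr A$ already exhibits this.

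In fact the literal hypothesis is never met. Requiring $\textbf H_p^\top\textbf H_p\succeq\alpha_p\textbf I$ for all $p$ forces every core to be nonzero, and then the whole $\mathrm{GL}(r_q)$-orbit of $\boldsymbol\theta_\star$ consists of stationary points of \eqref{eq:ttd-opt}. A provable statement must therefore either be formulated modulo gauge (convergence of the represented tensor $\mathscr A$, or of the distance to the orbit), or use a gauge fix leaving only a finite residual group (e.g.\ a canonical TT-SVD form with distinct singular values). In either case the Jacobian must be computed for the QR-interleaved map, not read off from $-(\textbf D+\textbf L)^{-1}\textbf L^\top$.

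\textbf{The norm fix does not work.} Norm-equivalence constants are scale invariant, so shrinking $\mathcal N'$ cannot absorb them. A spectral radius below one yields $\|\boldsymbol\theta_{\ell+m}-\boldsymbol\theta_\star\|\le C\tilde\rho^{\,m}\|\boldsymbol\theta_\ell-\boldsymbol\theta_\star\|$, not the one-step Euclidean bound \eqref{eq:localconv}. The reason is that Gauss--Seidel iteration matrices are non-normal. For example, $\textbf A=\left[\begin{smallmatrix}1&a\\ a&1\end{smallmatrix}\right]$ gives $\textbf M=\left[\begin{smallmatrix}0&-a\\ 0&a^2\end{smallmatrix}\right]$, whose spectral radius is $a^2$. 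Yet $\|\textbf M\|_2=|a|\sqrt{1+a^2}>1$ at $a=0.9$.

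What a nondegenerate (gauge-free) setting does give is a one-step contraction in the energy norm. Set $\textbf A=\textbf D+\textbf L+\textbf L^\top\succ0$, $\textbf N=\textbf D+\textbf L$ and $\textbf M=-\textbf N^{-1}\textbf L^\top$. Then $\textbf A-\textbf M^\top\textbf A\textbf M=(\textbf N^{-1}\textbf A)^\top\textbf D\,(\textbf N^{-1}\textbf A)\succ0$, so \eqref{eq:localconv} should be stated in $\|\cdot\|_{\textbf A}$. The paper's step ``by continuity of $\nabla\Phi$ \dots\ $\Phi$ is contractive'' makes the same leap, so following it will not close either gap.
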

\textbf{Proof.} By fixing an index $p$ and all  factor tensors
except $\mathscr T^{(p)}$, the mapping from
$\mathscr T^{(p)}$ to $\textbf F_0$ is linear and the objective
reduces to the least-squares problem \eqref{eq:ttd-sub}. By assumption, there exists a neighborhood $\mathcal N$ of
$\{\mathscr T^{(p)}_\star\}_{p=1}^k$ such that, for all iterates in $\mathcal N$
and all $p=1,2,\ldots,k$, $\textbf H_p^\top\textbf H_p\succeq \alpha_p \textbf I
\quad\text{for some }\alpha_p>0.$
Therefore, each subproblem has a unique minimizer
$$\mathrm{vec}(\mathscr T^{(p)}_{+})
=
(\textbf H_p^\top\textbf H_p)^{-1}\textbf H_p^\top \mathrm{vec}(\textbf X_1),$$
with $\|(\textbf H_p^\top\textbf H_p)^{-1}\|\le 1/\alpha_p.$ Thus, updating the $p$th factor tensor defines a single-valued mapping from the other factor tensors to $\mathscr T^{(p)}_{+}$.

Moreover, $\textbf H_p$ depends on the other factor tensors through a finite sequence of TTD contractions and is therefore a polynomial mapping.
Hence, $\textbf H_p$ is locally Lipschitz on $\mathcal N$. Together with the uniform bound on $(\textbf H_p^\top\textbf H_p)^{-1}$, this
implies that the update for each factor tensor is locally Lipschitz. Consequently, the full ALS sweep defines a locally Lipschitz mapping $\Phi$ that admits $\{\mathscr T^{(p)}_\star\}_{p=1}^k$ as a fixed point. Under the strong regularity condition and after fixing the gauge freedom via orthonormalization, the stationary point is isolated and satisfies a local second-order regularity condition. Under these conditions, results on nonlinear Gauss–Seidel methods implies that the Jacobian $\nabla \Phi$ at $\{\mathscr T^{(p)}_\star\}_{p=1}^k$ corresponds to a locally stable fixed point, i.e., its spectral radius is strictly less than one. By continuity of $\nabla \Phi$, there exists a sufficiently small neighborhood $\mathcal N'\subseteq\mathcal N$ such that $\Phi$ is contractive on $\mathcal N'$. Hence, there exists $\rho\in(0,1)$ such that \eqref{eq:localconv} holds,
which proves the claimed local linear convergence. \hfill$\blacksquare$




Propositions~\ref{prop:ttd1} and \ref{prop:ttd2} characterize the convergence behavior of Algorithm~\ref{alg:ALS-TTD}. 
They show that the algorithm is both globally well-behaved and locally efficient, providing theoretical support for its use in identifying TTD-based HPDSs. We then analyze the robustness of the algorithm in the presence of measurement noise. Consider the TTD-based HPDS corrupted by additive noise 
\begin{equation}
    \dot{\textbf x}(t)=\left[\prod_{p=1}^{k-1}(\mathscr T^{(p)}\times_2\textbf x(t))\mathscr T^{(k)}\right]^\top+\textbf w(t),
\end{equation}
where $\textbf w(t)$ is an independent, identically distributed (i.i.d.) noise process. After sampling, instead of the exact data pair $(\textbf X_0,\textbf X_1)$, we observe noisy data
$\tilde{\textbf X}_1 = \textbf X_1 + \textbf W,$ where $\textbf W\in\mathbb R^{n\times T}$ is a measurement noise matrix. At each iteration, the update of the $p$th factor tensor is to solve the linear least-squares problem
\begin{equation}
    \min_{\mathrm{vec}(\mathscr T^{(p)})}\| \textbf H_p\,\mathrm{vec}(\mathscr T^{(p)}) - \mathrm{vec}(\tilde{\textbf X}_1)\|_2^2.
\end{equation}
We denote the obtained factor tensor at $\ell$th iteration by $\{\hat{\mathscr T}_\ell^{(p)}\}_{p=1}^k$.
The estimation error of a single block update can be bounded as follows.
\begin{proposition}
Assume that the regression matrix $\textbf{H}_p$ satisfies $\textbf H_p^\top \textbf H_p \succeq \alpha_p \textbf{I}$
for some constant $\alpha_p>0$. 
Then the least-squares solution satisfies 
\begin{align}\label{eq: noisedeter}
\|\hat{\mathscr T}_\ell^{(p)}- \mathscr T_\ell^{(p)}\|_F\le
\frac{1}{\sqrt{\alpha_p}}\|\textbf W\|_F.
\end{align}
Moreover, if the entries of $\textbf W$ are i.i.d. sub-Gaussian with variance proxy $\sigma^2$, 
then for any $\delta\in(0,1)$, with probability at least $1-\delta$, it holds that
\begin{equation}\label{eq: noisesta}
\|
\hat{\mathscr T}_{\ell}^{(p)}-{\mathscr T}_{\ell}^{(p)}
\|_F
\le
\frac{c\sigma}{\sqrt{\alpha_p}}\,
\sqrt{nT+\log\frac{1}{\delta}},
\end{equation}
where $c>0$ is an absolute constant. 
\end{proposition}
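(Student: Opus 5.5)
This is a single-block perturbation bound. The natural reading is that $\hat{\mathscr T}_\ell^{(p)}$ and $\mathscr T_\ell^{(p)}$ are the minimizers of the noisy and noiseless subproblems built from the \emph{same} regression matrix $\textbf H_p$, i.e., with the other factors held fixed. Under the assumption $\textbf H_p^\top\textbf H_p\succeq\alpha_p\textbf I$, both subproblems have unique solutions given by the normal equations:
\[
\mathrm{vec}(\hat{\mathscr T}_\ell^{(p)})=\textbf H_p^\dagger\mathrm{vec}(\tilde{\textbf X}_1),\qquad \mathrm{vec}(\mathscr T_\ell^{(p)})=\textbf H_p^\dagger\mathrm{vec}(\textbf X_1),
\]
where $\textbf H_p^\dagger=(\textbf H_p^\top\textbf H_p)^{-1}\textbf H_p^\top$. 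Subtracting and using linearity of least squares in the data gives
\[
\mathrm{vec}(\hat{\mathscr T}_\ell^{(p)}-\mathscr T_\ell^{(p)})=\textbf H_p^\dagger\,\mathrm{vec}(\textbf W).
\]
Since the Frobenius norm of a tensor equals the Euclidean norm of its vectorization, both claims reduce to bounding $\|\textbf H_p^\dagger\mathrm{vec}(\textbf W)\|_2$.

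For the deterministic bound \eqref{eq: noisedeter}, I would compute the operator norm of $\textbf H_p^\dagger$ using the SVD $\textbf H_p=\textbf U\boldsymbol\Sigma\textbf V^\top$. This gives $\textbf H_p^\dagger=\textbf V\boldsymbol\Sigma^{-1}\textbf U^\top$, so $\|\textbf H_p^\dagger\|_2=1/\sigma_{\min}(\textbf H_p)$. The assumption says $\sigma_{\min}(\textbf H_p)^2=\lambda_{\min}(\textbf H_p^\top\textbf H_p)\ge\alpha_p$, hence $\|\textbf H_p^\dagger\|_2\le 1/\sqrt{\alpha_p}$. Combining this with $\|\mathrm{vec}(\textbf W)\|_2=\|\textbf W\|_F$ yields \eqref{eq: noisedeter}. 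The point to watch is the square root: the naive bound $\|(\textbf H_p^\top\textbf H_p)^{-1}\|\,\|\textbf H_p\|$ would introduce $\|\textbf H_p\|$, so the SVD form is the right tool.

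For the probabilistic bound \eqref{eq: noisesta}, I would proceed in two steps.
\begin{enumerate}
\item By the first part, it suffices to bound $\|\textbf W\|_F=\|\mathrm{vec}(\textbf W)\|_2$, where $\mathrm{vec}(\textbf W)\in\mathbb R^{nT}$ has i.i.d.\ sub-Gaussian entries with variance proxy $\sigma^2$.
\item I would use the standard norm concentration for sub-Gaussian vectors: take a $1/2$-net $\mathcal M$ of the unit sphere $S^{nT-1}$ with $|\mathcal M|\le 5^{nT}$. For each $\textbf u\in\mathcal M$, the scalar $\textbf u^\top\mathrm{vec}(\textbf W)$ is $\sigma$-sub-Gaussian, so a union bound and the net argument give $\|\mathrm{vec}(\textbf W)\|_2\le 2\max_{\textbf u\in\mathcal M}\textbf u^\top\mathrm{vec}(\textbf W)\le c\sigma\sqrt{nT+\log(1/\delta)}$ with probability at least $1-\delta$.
\end{enumerate}
Alternatively, step 2 follows from Hanson--Wright or Bernstein applied to $\|\mathrm{vec}(\textbf W)\|_2^2$. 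Either way, multiplying by $1/\sqrt{\alpha_p}$ gives \eqref{eq: noisesta}.

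The only real subtlety is the interpretation and its independence requirement. Because $\textbf H_p$ is built from the other factors, which themselves depend on the noise in earlier sweeps, the sub-Gaussian step needs $\textbf H_p$ to be independent of $\textbf W$. The bound via $\|\textbf W\|_F$ sidesteps this completely, since it uses only the worst-case operator norm and never the direction of $\textbf W$. That is why I would route the probabilistic claim through the deterministic one rather than through $\textbf H_p^\dagger\mathrm{vec}(\textbf W)$ directly. The price is the dimension factor $nT$ instead of $nr_{p-1}r_p$, which matches the stated bound.
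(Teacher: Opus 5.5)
Your proposal is correct and follows the paper's proof essentially step for step. You subtract the normal equations of the noisy and noiseless subproblems (with the same $\textbf H_p$), bound $\|(\textbf H_p^\top\textbf H_p)^{-1}\textbf H_p^\top\|_2\le 1/\sqrt{\alpha_p}$ to get the deterministic bound, and then substitute a sub-Gaussian concentration bound for $\|\textbf W\|_F$. Your SVD justification of the operator-norm bound and your explicit $1/2$-net argument fill in what the paper states only as ``it follows'' and ``standard concentration results.'' Your remark that routing the probabilistic bound through $\|\textbf W\|_F$ requires no independence between $\textbf H_p$ and $\textbf W$ is correct, and the paper leaves it implicit.
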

\textbf{Proof.}
The least-squares solutions of the noisy and noise-free cases satisfy the corresponding normal equations. Subtracting them yields
\[
\textbf H_p^\top \textbf H_p \,
\mathrm{vec}(
\hat{\mathscr T}_\ell^{(p)} - \mathscr T_\ell^{(p)}
)
=
\textbf H_p^\top \mathrm{vec}(\textbf W).
\]
Hence, $\mathrm{vec}(
\hat{\mathscr T}_\ell^{(p)} - \mathscr T_\ell^{(p)})
=
(\textbf H_p^\top \textbf H_p)^{-1}
\textbf H_p^\top
\mathrm{vec}(\textbf W).$
Taking norms gives
\[
\|
\mathrm{vec}(
\hat{\mathscr T}_\ell^{(p)} - \mathscr T_\ell^{(p)}
)
\|_2
\le
\|
(\textbf H_p^\top \textbf H_p)^{-1}
\textbf H_p^\top
\|_2
\|\textbf W\|_F.
\]
Since $\textbf H_p^\top \textbf H_p \succeq \alpha_p \textbf{I}$, it follows that
$\|
(\textbf H_p^\top \textbf H_p)^{-1}
\textbf H_p^\top\|_2
\le 1/\sqrt{\alpha_p}$,
which yields \eqref{eq: noisedeter}.

For the high-probability bound, since the entries of $\textbf W$ are i.i.d. sub-Gaussian with variance proxy $\sigma^2$, standard concentration results imply that for any $\delta\in(0,1)$,
\[
\|\textbf W\|_F
\le
c\sigma\sqrt{nT+\log\frac{1}{\delta}}\]
with probability at least $1-\delta$. Substituting into the previous bound yields \eqref{eq: noisesta}.
\hfill$\blacksquare$

This result shows that each ALS update is stable under measurement noise. We next combine this property with the local contraction of the noiseless iteration to characterize the overall convergence behavior of the algorithm in the noisy setting.
\begin{proposition}
Assume the conditions of Proposition~\ref{prop:ttd2} hold and the entries of $\textbf W$ are i.i.d. sub-Gaussian with variance proxy $\sigma^2$.
Then for any $\delta\in(0,1)$, with probability at least $1-\delta$, there exist
$\bar\rho\in(0,1)$ and a constant $C>0$ such that
\begin{equation*}\label{eq:noise-recursion}
\scalebox{0.87}{$\displaystyle\sum_{p=1}^k\|\hat{\mathscr T}^{(p)}_{\ell+1}-\mathscr T^{(p)}_\star\|_F^2  \le
\bar\rho
\sum_{p=1}^k\|\hat{\mathscr T}^{(p)}_{\ell}-\mathscr T^{(p)}_\star\|_F^2+
C\sigma^2\Big(nT+\log\tfrac{1}{\delta}\Big),$}
\end{equation*}
for iterates sufficiently close to $\{\mathscr T^{(p)}_\star\}_{p=1}^k$.
Consequently,
\begin{equation*}\label{eq:noise-floor}
\limsup_{\ell\to\infty}
\sum_{p=1}^k\|\hat{\mathscr T}^{(p)}_{\ell}-\mathscr T^{(p)}_\star\|_F^2
\;\le\;
\frac{C}{1-\bar\rho}\,
\sigma^2\Big(nT+\log\tfrac{1}{\delta}\Big),
\end{equation*}
showing that the estimation error remains within a neighborhood whose size scales with the noise level.
\end{proposition}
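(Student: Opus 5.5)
The plan is to write one noisy ALS sweep as the noiseless sweep map $\Phi$ of Proposition~\ref{prop:ttd2} plus a perturbation, and then unroll the resulting scalar recursion.

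Let $\boldsymbol\theta_\ell=\{\hat{\mathscr T}^{(p)}_\ell\}_{p=1}^k$ denote the noisy iterate and $\boldsymbol\theta_\star=\{\mathscr T^{(p)}_\star\}_{p=1}^k$ the target. Write
\[
\boldsymbol\theta_{\ell+1}=\Phi(\boldsymbol\theta_\ell)+\boldsymbol e_\ell,
\]
where $\boldsymbol e_\ell$ collects the differences between the noisy and noiseless block updates within the sweep. Since $\Phi(\boldsymbol\theta_\star)=\boldsymbol\theta_\star$, the inequality $\|a+b\|^2\le(1+\eta)\|a\|^2+(1+1/\eta)\|b\|^2$ gives
\[
\|\boldsymbol\theta_{\ell+1}-\boldsymbol\theta_\star\|^2\le(1+\eta)\|\Phi(\boldsymbol\theta_\ell)-\Phi(\boldsymbol\theta_\star)\|^2+(1+1/\eta)\|\boldsymbol e_\ell\|^2.
\]
By \eqref{eq:localconv}, the first term is at most $(1+\eta)\rho\|\boldsymbol\theta_\ell-\boldsymbol\theta_\star\|^2$ on $\mathcal N'$. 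Choosing $\eta$ with $\bar\rho:=(1+\eta)\rho<1$ (for instance $\eta=(1-\rho)/(2\rho)$) produces the contraction factor.

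The main obstacle is bounding $\boldsymbol e_\ell$. The block updates are sequential in Gauss--Seidel fashion, so the noise introduced in block $p$ propagates into the regression matrices $\textbf H_{p+1},\dots,\textbf H_k$ used in the same sweep. The plan is to bound the blocks by induction on $p$. Define the intermediate noiseless update of block $p$ as the least-squares solution computed with the current, already-perturbed, earlier blocks but with clean data $\textbf X_1$. Then split the deviation of block $p$ into two parts.

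\emph{Direct noise part.} Proposition~3 bounds this part by $\|\textbf W\|_F/\sqrt{\alpha_p}$.

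\emph{Propagated part.} The noiseless block-update map is locally Lipschitz on $\mathcal N$, as shown in the proof of Proposition~\ref{prop:ttd2}, with a constant $L_p$ depending on the bounds of $\textbf H_p$ over $\mathcal N$ and on $1/\alpha_p$. So this part is at most $L_p$ times the accumulated deviations of blocks $1,\dots,p-1$.

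Unrolling the induction gives $\|\boldsymbol e_\ell\|\le K\|\textbf W\|_F$, with $K$ depending only on $\{\alpha_p,L_p\}_{p=1}^k$. The orthonormalization step also has to be controlled. I would treat it as part of $\Phi$ and use its local Lipschitz continuity near $\boldsymbol\theta_\star$ (QR or SVD of a full-rank unfolding), which only changes $K$. Some care is needed here: the induction requires every intermediate iterate to remain in $\mathcal N$. I would ensure this by assuming $\sigma$ small enough relative to the radius of $\mathcal N'$, which matches the statement's qualifier ``for iterates sufficiently close''.

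Next, invoke the concentration bound from the proof of Proposition~3 once: with probability at least $1-\delta$,
\[
\|\textbf W\|_F^2\le c^2\sigma^2\bigl(nT+\log\tfrac1\delta\bigr).
\]
Since $\textbf W$ is the same fixed data matrix at every iteration, a single event suffices and no union bound over $\ell$ is needed. On this event, the one-step recursion holds with
\[
C=(1+1/\eta)K^2c^2.
\]

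Finally, unroll the recursion. Writing $s_\ell$ for the summed squared error and $B=C\sigma^2(nT+\log\frac1\delta)$, we have $s_{\ell+1}\le\bar\rho s_\ell+B$, hence
\[
s_\ell\le\bar\rho^{\,\ell}s_0+\frac{B}{1-\bar\rho}.
\]
Taking $\limsup_{\ell\to\infty}$ gives the stated noise floor. This assumes invariance of the neighborhood: the iterates must stay in $\mathcal N'$. That follows by induction if $s_0$ and $B/(1-\bar\rho)$ are both below the squared radius of $\mathcal N'$, which again holds for sufficiently close initialization and small noise.
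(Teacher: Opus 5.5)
Your argument is correct, given the small-noise and invariance assumption you state explicitly. The statement's ``sufficiently close'' and the paper's ``for all $\ell$ such that the iterates remain in $\mathcal N'$'' tacitly need that assumption too. Your route, however, differs from the paper's.

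The paper tracks two trajectories: the noiseless ALS sequence $\mathscr T^{(p)}_\ell$, whose error $S_\ell$ contracts by Proposition~\ref{prop:ttd2}, and the noisy sequence $\hat{\mathscr T}^{(p)}_\ell$. It bounds their gap $D_\ell$ by applying Proposition~3 blockwise with a union bound over $p$. It then chains $E_\ell\le(1+\eta)D_\ell+(1+\eta^{-1})S_\ell$ with the reverse inequality to get $\bar\rho=\rho(1+\eta^{-1})^2$ for large $\eta$.

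You instead perturb the sweep map at the current noisy iterate, $\boldsymbol\theta_{\ell+1}=\Phi(\boldsymbol\theta_\ell)+\boldsymbol e_\ell$. The contraction is then used once, and only a one-sweep perturbation has to be bounded. This is the more defensible decomposition. Proposition~3 only compares noisy and clean solutions of the \emph{same} least-squares problem (same $\textbf H_p$), which is exactly your ``direct noise part.'' The paper's $D_\ell$ instead compares two trajectories with different regression matrices. Within a Gauss--Seidel sweep, noise in block $p$ also enters $\textbf H_{p+1},\dots,\textbf H_k$; your Lipschitz induction over blocks handles that propagation, and the paper's blockwise bound does not.

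You also rightly note that $\textbf W$ is a single fixed matrix, so one concentration event suffices and no union bound is needed. The cost of your route is a less explicit constant $C$, which depends on the local Lipschitz constants $L_p$ and on the orthonormalization map. You also need $\|\textbf W\|_F$ small relative to the radius of $\mathcal N'$. The paper's shorter argument avoids these costs only by leaving invariance and intra-sweep propagation implicit.
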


\textbf{Proof.} 
Denote the noiseless error by
$S_\ell=\sum_{p=1}^k\|\mathscr T^{(p)}_{\ell}-\mathscr T^{(p)}_\star\|_F^2.$
By Proposition~\ref{prop:ttd2}, there exist $\rho\in(0,1)$ and a neighborhood $\mathcal N'$
such that for all iterates in $\mathcal N'$, $S_{\ell+1}\le \rho S_\ell.$
Define the noise-induced error
$
D_\ell=\sum_{p=1}^k\|\hat{\mathscr T}^{(p)}_{\ell}-\mathscr T^{(p)}_{\ell}\|_F^2.$
Applying \eqref{eq: noisesta} to each block update with failure probability $\delta/k$ and using a union bound over $p=1,2,\dots,k$, we obtain that with probability at least $1-\delta$, it holds that
\begin{equation}\label{eq:D-bound}
D_\ell \le C_0\,\sigma^2\Big(nT+\log\frac{1}{\delta}\Big),
\end{equation}
for all $\ell$ such that the iterates remain in $\mathcal N'$.

Using the inequality $$\|\textbf A+\textbf B\|_F^2 \le (1+\eta)\|\textbf A\|_F^2+(1+\eta^{-1})\|\textbf B\|_F^2$$ for any $\eta>0$, we decompose
$\hat{\mathscr T}^{(p)}_{\ell}-\mathscr T^{(p)}_\star
=(\hat{\mathscr T}^{(p)}_{\ell}-\mathscr T^{(p)}_{\ell})
+
\big(\mathscr T^{(p)}_{\ell}-\mathscr T^{(p)}_\star\big)$
and obtain
\begin{equation}\label{eq:E-upper-2}
E_\ell=\sum_{p=1}^k\|\hat{\mathscr T}^{(p)}_{\ell}-\mathscr T^{(p)}_\star\|_F^2
\le
(1+\eta)D_\ell+(1+\eta^{-1})S_\ell.
\end{equation}
Similarly, we have
\begin{equation}\label{eq:S-upper-2}
S_\ell \le (1+\eta)D_\ell+(1+\eta^{-1})E_\ell.
\end{equation}
Since $S_{\ell+1}\le \rho S_\ell$, combining \eqref{eq:E-upper-2}, and \eqref{eq:S-upper-2} yields
\[
E_{\ell+1}
\le
(1+\eta)D_{\ell+1}
+
(1+\eta^{-1})\rho\big((1+\eta)D_\ell+(1+\eta^{-1})E_\ell\big).
\]
Using \eqref{eq:D-bound} to bound $D_\ell$ and $D_{\ell+1}$, we obtain
\[
E_{\ell+1}\le \bar\rho E_\ell
+C\sigma^2\Big(nT+\log\tfrac{1}{\delta}\Big),
\]
where $\bar\rho=\rho(1+\eta^{-1})^2$. By choosing $\eta$ sufficiently large, we ensure that $\bar\rho\in(0,1)$.
Taking $\limsup_{\ell\to\infty}$ yields the desired bound.
\hfill$\blacksquare$

The above result shows that, in the presence of noise, the ALS algorithm converges linearly up to a neighborhood of the true solution, and the asymptotic error is bounded by a noise-dependent term.
If the state measurements $\textbf X_0$ are also corrupted by noise, 
then the regression matrices $\textbf H_p$ become noisy, leading to an errors-in-variables problem. 
In this case the least-squares estimator is generally biased. 
A consistent alternative is to adopt an integral formulation 
of the dynamics, which avoids numerical differentiation and 
reduces noise amplification. 
A detailed analysis of this scenario is left for future work.
Beyond convergence and noise robustness, it is essential to characterize the data informativity conditions that guarantee identifiability 
under the TTD parameterization of HPDSs. 

\begin{proposition}\label{prop:ttdinfo}
A sufficient condition for identification of TTD-based HPDSs
from data is
\begin{align}\label{eq:aurankcondition}
\scalebox{0.9}{$\displaystyle\mathrm{rank}(\hat{\textbf X_0})
=
\sum_{j=1}^{\min\{n,k-1\}}
\frac{n!}{j!(n-j)!}
\frac{(k-2)!}{(j-1)!(k-j-1)!},$}
\end{align}
where
$\hat{\textbf X_0} = \textbf X_0 \odot \textbf X_0 \odot \stackrel{k-1}\cdots \odot \textbf X_0$. A necessary condition  is that the  stacked regression matrices 
$\textbf H_p$ have full column rank for $p=1,2,\dots,k$.
\end{proposition}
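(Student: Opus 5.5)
The plan is to reduce identifiability of the TTD factors to identifiability of the full dynamic tensor $\mathscr A$, and then to a rank condition on the lifted data matrix. First I would rewrite the data relation in matricized form. For each sample, $\mathscr A\textbf x(j)^{k-1}=\textbf A_{(k)}\big(\textbf x(j)\otimes\cdots\otimes\textbf x(j)\big)$, up to the ordering convention of the paper's product, where the output mode is the one left uncontracted in \eqref{eq: sysTTD}. Stacking the samples then gives $\textbf X_1=\textbf A_{(k)}\hat{\textbf X}_0$, with $\hat{\textbf X}_0=\textbf X_0\odot\cdots\odot\textbf X_0$ ($k-1$ factors), since the $j$th column of the Khatri--Rao power is exactly $\textbf x(j)^{\otimes(k-1)}$.

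The next step is a counting argument. The right-hand side of \eqref{eq:aurankcondition} is the number of distinct monomials of degree $k-1$ in $n$ variables. A monomial with exactly $j$ distinct variables is determined by choosing the support, in $\binom{n}{j}$ ways, and a composition of $k-1$ into $j$ positive parts, in $\binom{k-2}{j-1}$ ways. By Vandermonde's identity the sum equals $\binom{n+k-2}{k-1}$.

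Now let $\textbf S$ be the symmetrization operator on $(\mathbb R^n)^{\otimes(k-1)}$. Every column of $\hat{\textbf X}_0$ lies in the symmetric subspace $\mathrm{range}(\textbf S)$, whose dimension is exactly $\binom{n+k-2}{k-1}$. So $\mathrm{rank}(\hat{\textbf X}_0)$ can never exceed this number, and \eqref{eq:aurankcondition} says precisely that the columns of $\hat{\textbf X}_0$ span the whole symmetric subspace.

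With this in hand, sufficiency follows. Suppose two almost symmetric tensors $\mathscr A,\mathscr A'$ reproduce the data, so that $(\textbf A_{(k)}-\textbf A'_{(k)})\hat{\textbf X}_0=0$. Each row of $\textbf A_{(k)}-\textbf A'_{(k)}$ is (the vectorization of) a symmetric $(k-1)$th-order tensor, by almost symmetry. That row is orthogonal to the whole symmetric subspace, so it must vanish, and hence $\mathscr A=\mathscr A'$. Consequently any collection of TT factors $\{\mathscr T^{(p)}\}$ attaining zero residual in \eqref{eq:ttd-opt} represents the true $\mathscr A$. The factors themselves are then determined up to the TT gauge freedom, which is removed by the orthonormalization canonical form. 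If the raw TT contraction is not itself almost symmetric, I would interpret identification through its symmetrized tensor, which is all that the vector field sees.

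For necessity, I argue by contrapositive. Suppose some $\textbf H_p$, evaluated at the true factors, has a nonzero null vector $\textbf v$. Since the model is linear in $\mathscr T^{(p)}$ with the others fixed, replacing $\mathrm{vec}(\mathscr T^{(p)})$ by $\mathrm{vec}(\mathscr T^{(p)})+t\textbf v$ leaves $\textbf F_0$, and hence the residual, unchanged for every $t$. This produces a continuum of distinct factor tuples that fit the data equally well, and the ALS subproblem \eqref{eq:ttd-sub} has no unique solution.

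The main obstacle is to ensure that this null direction is not merely a gauge direction, because gauge transformations change two adjacent cores simultaneously, whereas $\textbf v$ moves only one. I would handle this by noting that, once the neighbouring cores are fixed in canonical (orthonormal) form, a single-core perturbation that preserves the represented tensor must be zero. The reason is that the orthonormal left and right interface matrices are injective. Hence a nontrivial $\textbf v$ yields genuinely non-identifiable factors, or at least a non-unique update, which is what the necessary condition asserts.
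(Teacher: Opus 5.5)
Your proposal follows the paper's route in both halves. For sufficiency, uniqueness of $\mathscr A$ pins down the parameterization. Your observation that \eqref{eq:aurankcondition} says exactly that the columns of $\hat{\textbf X}_0$ span the symmetric subspace of dimension $\binom{n+k-2}{k-1}$ is correct. It gives a self-contained proof that two almost symmetric tensors fitting the data coincide, in place of the citation the paper relies on.

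\textbf{The gap is in the necessity half.} Orthonormality of the neighbouring cores shows that a null vector $\textbf v$ of $\textbf H_p$ changes the raw TT tensor, so it is not a gauge direction. It does not show that $\textbf v$ changes the vector field, and the vector field is the notion of identification you adopted, via symmetrization, for sufficiency. In fact, when \eqref{eq:aurankcondition} holds, $\textbf H_p\textbf v=0$ means every row of the mode-$k$ unfolding of the induced perturbation $\delta\mathscr A$ is orthogonal to the symmetric subspace. That is, the symmetrization of $\delta\mathscr A$ over its first $k-1$ modes vanishes. So under your sufficiency hypothesis, \emph{every} null direction of $\textbf H_p$ is invisible to the dynamics.

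\textbf{Such directions genuinely occur, even in canonical form.} Take $n=2$, $k=3$ and TT-ranks $(1,2,2,1)$. Any $2\times2\times2$ tensor, e.g.\ the paper's own quadratic example, has the TTD $\mathscr T^{(1)}=\textbf I$, $\mathscr T^{(2)}=\mathscr A$, $\mathscr T^{(3)}=\textbf I$, which is left/right orthonormal around core $2$. Perturbing core $2$ by $\mathscr V$ changes the TT tensor by exactly $\mathscr V$. Every $\mathscr V$ antisymmetric in its first two indices satisfies $\mathscr V\textbf x^2=0$ for all $\textbf x$. Hence $\textbf H_2$ has a kernel of dimension at least two for every data set, while \eqref{eq:aurankcondition} already holds for three generic samples.

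A sufficient condition must imply every necessary one, so the two halves of the statement cannot hold for one fixed notion of identification. For the vector field, full column rank of $\textbf H_p$ is not necessary. For the factors themselves, or for your fallback (uniqueness of each ALS update), necessity is immediate but \eqref{eq:aurankcondition} is not sufficient, because the data never see the non-symmetric part of the TT tensor.

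The paper's proof has the same two-part structure and the same implicit change of notion, so you are not missing an idea the paper supplies. Still, your sentence asserting that a nontrivial $\textbf v$ yields genuinely non-identifiable factors is exactly where both arguments break. What the perturbation argument actually proves, for identification of the vector field, is a weaker necessary condition: every null vector of $\textbf H_p$ must induce a perturbation $\delta\mathscr A$ with vanishing symmetrization.

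A smaller point: the claim that the factors are determined up to gauge also requires the TT-ranks to be minimal, i.e.\ equal to the ranks of the corresponding unfoldings of $\mathscr A$.
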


\textbf{Proof.} We analyze the data informativity from two complementary perspectives. From \cite{mao2025tensor}, the data uniquely determine the full polynomial tensor $\mathscr A$ if and only if \eqref{eq:aurankcondition} holds.
Since the TTD representation defines a parameterization of $\mathscr A$, uniqueness of $\mathscr A$ implies identifiability of the corresponding factor tensors (up to gauge transformations). Therefore, \eqref{eq:aurankcondition} provides a sufficient condition for identifying TTD-based HPDSs. 
Under the TTD parameterization, identification reduces to the sequence of linear least-squares problems \eqref{eq:ttd-sub}
for $p=1,2,\dots,k$. For each subproblem to admit a unique solution, the stacked regression matrices $\textbf H_p$ must have full column rank. Otherwise, multiple factor tensors produce identical trajectories, and identification is not unique. \hfill$\blacksquare$

Condition \eqref{eq:aurankcondition} guarantees identifiability of the factor tensors through uniqueness of the underlying dynamic tensor $\mathscr A$. 
However, the TTD parameterization constrains $\mathscr A$ to a low-dimensional nonlinear manifold with only $\mathcal O(kn r^2)$ degrees of freedom. Consequently, \eqref{eq:aurankcondition}, which characterizes identifiability in the ambient polynomial space, is generally sufficient but not minimal for identification under the TTD model. In particular, identifiability of low-rank TTD-based HPDSs can be achieved under weaker excitation conditions that ensure injectivity of the data mapping restricted to the TTD manifold. On the other hand, the necessary condition that each regression matrix $\mathbf H_p$ has full column rank guarantees uniqueness of the individual least-squares subproblems, but does not ensure global identifiability of the full TTD representation. The
 reason is that the TTD parameterization is nonlinearly coupled across different factor tensors, and the regression matrices themselves depend on the remaining factor tensors.


\subsection{Identification of HTD-based HPDSs}

We next investigate the problem under the HTD-based representation. In contrast to the sequential structure of  TTD , the HTD-based representation organizes the tensor modes according to a hierarchical binary tree. Let $r=\text{max}\{r_\mathcal Q \text{ }|\text{ } \mathcal Q\in\mathcal T\}$ denote the maximum hierarchical rank over all nodes in the dimension tree $\mathcal T$. The total number of parameters in the HTD-based representation can be estimated as $\mathcal O(knr+kr^3)$, which is substantially smaller than that of the full tensor representation when $r$ is moderate. The objective is to identify all factor matrices $\{\textbf V_p\}_{p=1}^k$ for the leaf nodes and transfer matrices $\{\textbf C_\mathcal P\}_{\mathcal P\in\mathcal T}$ for the internal nodes.
   
 Under HTD, the system mapping associated with the sampled state $\textbf x(j)$ can be written as 
\begin{align}\label{eq: sysHTD}
\textbf f(j)=&\left(\textbf V_k\otimes\textbf x(j)^\top\textbf V_{k-1}\otimes\cdots\otimes\textbf x(j)^\top\textbf V_{1}\right)\nonumber\\
&(\otimes_{\mathcal Q\in\mathcal G_{d-1}}\textbf C_\mathcal Q)\cdots(\otimes_{\mathcal Q\in\mathcal G_{0}}\textbf C_\mathcal Q),
\end{align}
which forms the predicted output matrix $\textbf F_0$. The  identification problem can therefore be written as
 \begin{align}\label{eq:htd-opt}
\min_{\{\textbf V_{p}\}_{p=1}^k,\{\textbf C_{\mathcal P}\}_{\mathcal P\in\mathcal T}} \| \textbf{X}_1 -\textbf F_0 \|_F^2.
\end{align}
As in the TTD formulation, the problem is nonconvex due to the recursive tensor contractions along the dimension tree but retains a block multilinear structure. The detailed ALS procedure is summarized in Algorithm~\ref{alg:ALS-HTD}, where the functions $\Call{LeafLS}$ and $\Call{InternalLS}$ can be found in the appendix. 

 A key distinction from  TTD is that the HTD-based representation
admits level-wise updates along the dimension tree. At each iteration, all leaf-node factor matrices are first updated, followed by the updates of the internal-node transfer matrices from the bottom of the tree toward the root. Moreover, nodes located at the same level are independent, and can be updated simultaneously. This yields a parallel sweep implementation of the ALS algorithm. When updating a leaf-node factor matrix $\textbf V_p$,
the HTD contraction can be decomposed into a leaf-side
contraction $\textbf L_p(j)$ and a tree-side contraction
$\textbf R_p(j)$. The leaf-side contraction collects the
contributions from all leaf-node factor matrices except
$\textbf V_p$ and is given by
\begin{align}\label{eq:htdleft}
\textbf L_p(j)
=
\bigotimes_{q=k}^{1}
\begin{cases}
\textbf Z_q(j), & q\neq p,\\
\textbf x(j)^\top, & q=p,
\end{cases}
\end{align}
where $\textbf Z_q(j)=\textbf x(j)^\top\textbf V_q$ for
$q=1,2,\dots,k-1$ and $\textbf Z_k(j)=\textbf V_k$.
The tree-side contraction $\textbf R_p(j)$ efficiently aggregates the
contributions of the internal-node transfer matrices along the
dimension tree and can be obtained recursively by performing
tensor contractions from the lowest internal level toward the root.

 Using these contractions, the update of the $p$th leaf-node factor matrices reduces to the following linear least-squares problem            \begin{align}\label{eq:htd-sub}
            \min_{\mathrm{vec}(\textbf V_p)}\|\textbf H_p\,\mathrm{vec}(\textbf V_p)-\mathrm{vec}(\textbf X_1)\|_2^2.
            \end{align}Here, the regression matrix $\textbf H_p\in\mathbb R^{(nT)\times(nr_p)}$  is obtained by stacking the sample-wise blocks \begin{align} \label{eq:htdH}\textbf H_p(j)
        =
        \big(\textbf R_p(j)^\top\otimes \textbf L_p(j)\big)\textbf S_p,
        \end{align}
        where $\textbf S_p$ is a permutation operator satisfying
        $\mathrm{vec}(\textbf I\otimes\textbf V_p\otimes\textbf I)=\textbf S_p\,\mathrm{vec}(\textbf V_p)$. The least-squares solution is reshaped into the matrix $\textbf V_p \in \mathbb R^{n \times r_p}.$ The update of each internal-node transfer matrix $\textbf C_{\mathcal P}$ is derived analogously by isolating $\mathrm{vec}(\textbf C_{\mathcal P})$ while keeping all remaining parameters fixed during the optimization step. 

\begin{algorithm}[t]
\caption{ALS Identification for HTD-based HPDS }
\label{alg:ALS-HTD}
\begin{algorithmic}[1]
\State \textbf{Input:} Data $\textbf X_0, \textbf X_1$, dimension tree $\mathcal T$ with depth $d$, tolerance $\epsilon>0$, maximum iterations $N_{\max}$
\State \textbf{Output:} Leaf-node factor matrices $\{\textbf V_p\}_{p=1}^k$ and internal-node transfer matrices $\{\textbf C_\mathcal P\}_{\mathcal P\in\mathcal T}$
\State Randomly initialize $\textbf V_p$ for all leaf nodes and $\textbf C_\mathcal P$ for all internal nodes; set $e_{\text{prev}}=+\infty$

\For{$\mathrm{iter}=1$ to $N_{\max}$}

\State $\{\textbf V_p^{\mathrm{old}}\}_{p=1}^k \gets \{\textbf V_p\}_{p=1}^k$
\ForAll{$p=1,2,\dots,k$ \textbf{in parallel}}
    \State \scalebox{0.9}{$\displaystyle\textbf H_p \gets \Call{LeafLS}{p,\textbf X_0,\{\textbf V_p^{\mathrm{old}}\}_{p=1}^k,\{\textbf C_\mathcal P\}_{\mathcal P\in\mathcal T},\mathcal T}$}
    \State Solve the linear least-squares problem \eqref{eq:htd-sub}
    \State Reshape the solution as $\textbf V_p^{\mathrm{new}}\in\mathbb R^{n_p\times r_p}$
\EndFor
\State $\{\textbf V_p\}_{p=1}^k \gets \{\textbf V_p^{\mathrm{new}}\}_{p=1}^k$

\For{$l=d$ to $1$}
    \State $\{\textbf C_\mathcal P^{\mathrm{old}}\}_{\mathcal P\in\mathcal T} \gets \{\textbf C_\mathcal P\}_{\mathcal P\in\mathcal T}$
    \ForAll{internal node $\mathcal P\in\mathcal G_l$ \textbf{in parallel}}
        \State $\textbf H_\mathcal P\leftarrow\textsc{InternalLS}(\mathcal P,l,\textbf X_0,\{\textbf V_p\}_{p=1}^k,$
        \State $\{\textbf C_\mathcal P^{\mathrm{old}}\}_{\mathcal P\in\mathcal T},\mathcal T)$
        \State Solve $\min_{\mathrm{vec}(\textbf C_\mathcal P)}\!\|\textbf H_\mathcal P\mathrm{vec}(\textbf C_\mathcal P\!)-\mathrm{vec}(\textbf X_1\!)\|_2^2$
        \State Store the  solution as $\textbf C_\mathcal P^{\mathrm{new}}\in\mathbb R^{r_{\mathcal P_l}r_{\mathcal P_r}\times r_\mathcal P}$
    \EndFor
    \State $\{\textbf C_\mathcal P\}_{\mathcal P\in\mathcal G_l}\gets \{\textbf C_\mathcal P^{\mathrm{new}}\}_{\mathcal P\in\mathcal G_l}$
\EndFor

\State Compute  $\textbf F_0 = [\textbf f(0)\ \textbf f(1)\ \cdots\ \textbf f(T-1)]$  using \eqref{eq: sysHTD}
\State $e=\|\textbf X_1-\textbf F_0\|_F^2$
\If{$|e-e_{\text{prev}}|<\epsilon$} \textbf{break} \EndIf
\State $e_{\text{prev}}=e$

\EndFor
\State \textbf{return} $\{\textbf V_p\}_{p=1}^k$, $\{\textbf C_\mathcal P\}_{\mathcal P\in\mathcal T}$
\end{algorithmic}
\end{algorithm}

\begin{remark}
Let $r$ denote the maximum hierarchical rank of the $k$th-order, $n$-dimensional dynamic tensor $\mathscr A$ associated with an HTD-based HPDS. The time complexity of a single ALS iteration in Algorithm~\ref{alg:ALS-HTD} can be estimated as $\mathcal O\big(kn^3r^2T+knr^6T\big)$.
\end{remark}
 
We next analyze the convergence properties of Algorithm~\ref{alg:ALS-HTD}. Let $\{\{\textbf V^{\ell}_p\}_{p=1}^k,\{\textbf C_{\mathcal P}^{\ell}\}_{\mathcal P\in\mathcal T}\}_{\ell\ge 0}$ denote the sequence of  factor matrices and transfer matrices generated by Algorithm~\ref{alg:ALS-HTD}.
\begin{proposition}
The sequence of objective function of \eqref{eq:htd-opt}
is monotonically non-increasing and convergent. Moreover, every accumulation point of the iterate sequence
$\{\{\textbf V^{\ell}_p\}_{p=1}^k,\{\textbf C_{\mathcal P}^{\ell}\}_{\mathcal P\in\mathcal T}\}_{\ell\ge 0}$ 
is a first-order stationary point of \eqref{eq:htd-opt}.
\end{proposition}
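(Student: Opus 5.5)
The plan mirrors the proof of Proposition~\ref{prop:ttd1}, adapted to the level-wise (Jacobi-within-level, Gauss--Seidel-across-levels) structure of Algorithm~\ref{alg:ALS-HTD}.

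The first step is to show that every block update in a sweep exactly minimizes the objective of \eqref{eq:htd-opt} over the variables being updated, with all others held fixed. For a single internal level $\mathcal G_l$ this is straightforward. Nodes at the same level are disjoint subtrees, and $\textbf f(j)$ in \eqref{eq: sysHTD} is multilinear in $\{\textbf C_\mathcal P\}_{\mathcal P\in\mathcal G_l}$: each $\textbf C_\mathcal P$ enters through one Kronecker factor of $\otimes_{\mathcal Q\in\mathcal G_l}\textbf C_\mathcal Q$.

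This is the main obstacle. A joint update of several blocks computed independently from the old values is not, in general, a minimizer of the joint problem, because the objective is not separable across simultaneously updated blocks. The leaf step has the same difficulty, since all $\textbf V_p$ are updated in parallel from $\{\textbf V_p^{\mathrm{old}}\}$. I would try two remedies in turn:
\begin{enumerate}
\item[(i)] Interpret the analysis for the sequential (Gauss--Seidel) ordering of the same blocks. Each leaf or node update then solves a convex least-squares problem \eqref{eq:htd-sub} exactly, and monotonicity follows as in Proposition~\ref{prop:ttd1}. The parallel implementation is then justified as producing the same iterates whenever the blocks at a level have disjoint supports in the regression, i.e. when one update does not alter another's $\textbf H$.
\item[(ii)] Failing that, add an acceptance or damping step: keep the parallel candidate only if it does not increase the error, and otherwise fall back to sequential updates.
\end{enumerate}
With either remedy, the objective sequence is non-increasing, bounded below by zero, and hence convergent.

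For the stationarity claim, I would note that the objective in \eqref{eq:htd-opt} is a polynomial in the entries of $\{\textbf V_p\}$ and $\{\textbf C_\mathcal P\}$. It is therefore semi-algebraic and satisfies the Kurdyka--{\L}ojasiewicz property. The standard block coordinate descent argument then applies: exact block minimization gives a sufficient-decrease inequality, continuity of the gradient holds, and every accumulation point is a fixed point of the sweep. At such a point each block's partial gradient vanishes, since it is a block minimizer with the other blocks fixed. The full gradient is therefore zero, so every accumulation point is first-order stationary.

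If the block subproblems are not strongly convex, uniqueness of minimizers may fail. In that case I would take the minimum-norm least-squares solution, or invoke the same regularity assumption $\textbf H_p^\top\textbf H_p\succeq\alpha_p\textbf I$ as in Proposition~\ref{prop:ttd2}, applied to both the leaf and internal regression matrices. This guarantees well-defined, continuous block maps and ensures that limit points are stationary rather than merely points of constant objective value.
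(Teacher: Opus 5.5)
Your stationarity argument follows the paper's route: the objective is polynomial, hence semi-algebraic with the KL property, and block-coordinate-descent theory does the rest. On monotonicity you depart from the paper, and your objection is correct. It pinpoints exactly where the paper's proof fails.

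The paper argues that same-level updates ``involve disjoint parameters'' and therefore act as independent block coordinate descent steps. But disjoint blocks are not separable blocks. $\textbf f(j)$ is multilinear in the leaf factors and, within a level, in the transfer matrices through $\otimes_{\mathcal Q\in\mathcal G_l}\textbf C_{\mathcal Q}$. The monotonicity claim is in fact false for Algorithm~\ref{alg:ALS-HTD} as written:
\begin{itemize}
\item Take $k=2$, $n=2$ and all hierarchical ranks equal to $1$, so that $\textbf f(j)=c\,\textbf v_2\textbf v_1^\top\textbf x(j)$. Use data $\textbf X_0=\textbf I_2$ and $\textbf X_1=\mathrm{diag}(2,1)$.
\item At $\textbf v_1=(1,1)^\top$, $\textbf v_2=(1,-8)^\top$ and the corresponding optimal $c=-3/65$, the objective is $5-18/65$. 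Both leaf subproblems are strongly convex there.
\item The parallel leaf step returns $\textbf v_1\propto(1,-4)^\top$ and $\textbf v_2\propto(2,1)^\top$, for which $\textbf v_2^\top\textbf X_1\textbf v_1=0$.
\item Hence after the sweep the objective is at least $\|\textbf X_1\|_F^2=5$, whether or not $c$ is re-fitted.
\end{itemize}
So no proof for the parallel sweep can exist. What you prove is a corrected statement, and that is the right thing to do.

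Three points in your write-up still need tightening.

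First, your rescue of the parallel implementation in (i), via blocks that do not alter each other's regressor, never applies here. $\textbf L_p(j)$ in \eqref{eq:htdleft} contains $\textbf x(j)^\top\textbf V_q$ for every $q\neq p$. Likewise, the regressor of $\textbf C_{\mathcal P}$ contains all other transfer matrices at its level. So (i) is a proof for the Gauss--Seidel variant only, and should be stated as such.

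Second, the accumulation-point argument needs a \emph{sufficient} decrease of at least $\alpha\|\Delta\|^2$ per block step. This is what makes the partially updated iterates within a sweep share the limit of the sweep iterates. Exact minimization yields it only under a bound $\textbf H^\top\textbf H\succeq\alpha\textbf I$ that holds uniformly along the iterates, for all $\textbf H_p$ and $\textbf H_{\mathcal P}$. KL does not supply this, and the paper's proof omits it, so it must be an explicit hypothesis. The minimum-norm choice is not a substitute: it makes the update single-valued but gives neither continuity nor sufficient decrease. A proximal term would work instead.

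Third, in (ii) an accepted parallel candidate must pass a sufficient-decrease test, not merely fail to increase the error. Otherwise the stationarity argument does not cover those steps.
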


\textbf{Proof.}
Algorithm~\ref{alg:ALS-HTD} updates blocks in parallel across the dimension tree. Each step exactly solves a convex quadratic least-squares subproblem for its respective factor or transfer matrix, ensuring the objective of \eqref{eq:htd-opt} is non-increasing. Because updates at the same level involve disjoint parameters, they function as independent block coordinate descent steps that preserve monotonicity over each sweep.
Moreover, the objective function is a polynomial in the factor and transfer matrices and is thus semi-algebraic, satisfying the KL property. Hence, every accumulation point of the iterate sequence is a first-order stationary point of \eqref{eq:htd-opt}.
\hfill$\blacksquare$

\begin{proposition}\label{prop:htd2}
Let 
$\{\{\textbf V_p^\star\}_{p=1}^{k},\{\textbf C_{\mathcal P}^\star\}_{\mathcal P\in\mathcal T}\}$ be an isolated stationary point of \eqref{eq:htd-opt}.
Assume that there exists a neighborhood $\mathcal N$
of 
$\{\{\textbf V_p^\star\}_{p=1}^{k},\{\textbf C_{\mathcal P}^\star\}_{\mathcal P\in\mathcal T}\}$
such that, for all iterates in $\mathcal N$, the corresponding
regression matrices $\textbf H_p$ for $p=1,2,\dots,k$ and
$\textbf H_{\mathcal P}$ for $\mathcal P\in\mathcal T$
satisfy
$\textbf H_p^\top \textbf H_p \succeq \alpha_p \textbf I$ and $\textbf H_{\mathcal P}^\top \textbf H_{\mathcal P} \succeq \alpha_{\mathcal P} \textbf I$
for some constants $\alpha_p>0$ and $\alpha_{\mathcal P}>0$.
Then Algorithm~\ref{alg:ALS-HTD} converges locally linearly to
$\{\{\textbf V_p^\star\}_{p=1}^{k},\{\textbf C_{\mathcal P}^\star\}_{\mathcal P\in\mathcal T}\}$.
Specifically, there exists  a neighborhood
$\mathcal N'\subseteq\mathcal N$ such that for all iterates in
$\mathcal N'$, it holds that
\begin{align*}
\sum_{p=1}^k
\|\textbf V_p^{(\ell+1)}-\textbf V_p^\star\|_F^2
+
\sum_{\mathcal P\in\mathcal T}
\|\textbf C_{\mathcal P}^{(\ell+1)}-\textbf C_{\mathcal P}^\star\|_F^2\\
\le
\rho
\left(
\sum_{p=1}^k
\|\textbf V_p^{(\ell)}-\textbf V_p^\star\|_F^2
+
\sum_{\mathcal P\in\mathcal T}
\|\textbf C_{\mathcal P}^{(\ell)}-\textbf C_{\mathcal P}^\star\|_F^2
\right).
\end{align*}
for some $\rho\in(0,1)$.
\end{proposition}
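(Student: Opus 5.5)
The plan is to mirror the argument of Proposition~\ref{prop:ttd2}, adapted to the level-wise (Jacobi-within-level, Gauss--Seidel-across-levels) sweep of Algorithm~\ref{alg:ALS-HTD}. Write the full parameter as $\theta=(\{\textbf V_p\}_{p=1}^k,\{\textbf C_\mathcal P\}_{\mathcal P\in\mathcal T})$ and the squared distance to $\theta^\star$ as the left-hand quantity in the statement.

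\textbf{Step 1: each block update is a well-defined smooth map.} By the uniform lower bounds $\textbf H_p^\top\textbf H_p\succeq\alpha_p\textbf I$ and $\textbf H_\mathcal P^\top\textbf H_\mathcal P\succeq\alpha_\mathcal P\textbf I$ on $\mathcal N$, each leaf subproblem \eqref{eq:htd-sub} and each internal-node subproblem has the unique minimizer $(\textbf H^\top\textbf H)^{-1}\textbf H^\top\mathrm{vec}(\textbf X_1)$, with the inverse bounded by $1/\alpha$. The matrices $\textbf H_p$ from \eqref{eq:htdH} and $\textbf H_\mathcal P$ are built from $\textbf L_p(j)$, $\textbf R_p(j)$ and the analogous internal contractions. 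These are polynomial in the frozen blocks (Kronecker products of $\textbf x(j)^\top\textbf V_q$ and transfer matrices), so the update maps are $C^1$ and locally Lipschitz on $\mathcal N$.

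\textbf{Step 2: the sweep as a fixed-point map.} Compose the parallel leaf map $\Phi_{\rm leaf}$ (all leaves computed from the old iterate) with the level maps $\Phi_d,\dots,\Phi_1$ (all nodes of $\mathcal G_l$ computed from the current values). This gives $\Phi=\Phi_1\circ\cdots\circ\Phi_d\circ\Phi_{\rm leaf}$. Since $\theta^\star$ is stationary, every block gradient vanishes there. Together with uniqueness of each block minimizer, this means each factor map fixes $\theta^\star$, hence $\Phi(\theta^\star)=\theta^\star$. Because nodes within a level or within the leaf set involve disjoint parameters, each factor map is a block-Jacobi step on a subset of coordinates.

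\textbf{Step 3 (main obstacle): spectral radius of $\nabla\Phi(\theta^\star)<1$.} Let $\textbf G$ be the Hessian of \eqref{eq:htd-opt} at $\theta^\star$, restricted to a gauge-fixing slice. Without that restriction, the HTD gauge freedom $\textbf V_{\mathcal P}\mapsto\textbf V_{\mathcal P}\textbf M$ produces zero eigenvalues. Isolation of $\theta^\star$, as in Proposition~\ref{prop:ttd2}, will be used to assume $\textbf G\succ0$ on that slice, with diagonal blocks $\textbf H^\top\textbf H\succeq\alpha\textbf I$.

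For a single level step, the linearization has the block-Jacobi form $\textbf I-\textbf D_l^{-1}\textbf G_{l,\cdot}$ on the level-$l$ coordinates. Two observations make this tractable. First, blocks within one level do not interact strongly: the parameters of two nodes at the same level enter the model additively through distinct Kronecker factors. Second, the exact minimization over those blocks is equivalent to minimizing the quadratic model jointly over the level block (up to first order). If that holds, each level step is an exact block minimization of the local quadratic model, the whole sweep is block Gauss--Seidel on $\textbf G$, and the classical result for SPD matrices (Ostrowski--Reich) gives spectral radius $<1$.

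I would first try to verify the joint-minimization equivalence. If it fails, the fallback is to invoke nonlinear Jacobi/Gauss--Seidel local convergence theorems under a diagonal-dominance or sufficient-decrease assumption, absorbed into the regularity hypothesis.

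\textbf{Step 4: conclusion.} By continuity of $\nabla\Phi$, pick a norm with $\|\nabla\Phi(\theta^\star)\|<1$ and a neighborhood $\mathcal N'\subseteq\mathcal N$ on which $\Phi$ is a contraction. Since all norms are equivalent in finite dimension, this yields the stated inequality in the sum-of-squared-Frobenius form for some $\rho\in(0,1)$.
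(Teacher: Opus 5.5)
Your outline is the same as the paper's. It derives unique block minimizers from the bounds $\textbf H^\top\textbf H\succeq\alpha\textbf I$, builds a locally Lipschitz sweep map $\Phi$ fixing $\theta^\star$, treats same-level updates as independent block steps, and then obtains contraction from a spectral-radius bound on $\nabla\Phi(\theta^\star)$. The paper, however, only asserts that bound ``under the similar local regularity conditions as in Proposition~\ref{prop:ttd2}''. So your Step~3 is exactly where both arguments are incomplete, and the mechanism you propose for it fails.

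\textbf{Step 3 does not reduce to Gauss--Seidel.} Blocks at the same level enter multiplicatively, not additively. Two level-$l$ transfer matrices appear as $\textbf C_{\mathcal Q_1}\otimes\textbf C_{\mathcal Q_2}$, and the leaves appear as $\textbf x(j)^\top\textbf V_p\otimes\textbf x(j)^\top\textbf V_q$. Hence the within-level off-diagonal Hessian blocks are in general nonzero; already their Gauss--Newton parts are. The parallel update therefore linearizes to a block-Jacobi step, not to a joint minimization over the level. The sweep is the splitting $\textbf G=\textbf M-\textbf N$, with $\textbf M$ block lower triangular across levels but only block diagonal ($\textbf D_l$) within a level. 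Ostrowski--Reich does not cover this. The standard Householder--John sufficient condition is $\textbf M+\textbf M^\top-\textbf G=\mathrm{blkdiag}_l(2\textbf D_l-\textbf G_{ll})\succ0$, which lower bounds on the diagonal blocks do not imply. A two-block toy shows the danger: for $f(a,b)=(ab-1)^2$, the simultaneous updates $a^+=1/b$, $b^+=1/a$ cycle with period two from $(1+\epsilon,1)$. Your ``fallback'' is therefore an extra hypothesis. So is using isolation to get $\textbf G\succ0$: an isolated stationary point can be a degenerate minimum or a saddle, and since Ostrowski--Reich is an if-and-only-if, even exact Gauss--Seidel is then not locally contractive.

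\textbf{The gauge makes Step 4 fail in the stated form.} Algorithm~\ref{alg:ALS-HTD} has no normalization step. Since $\textbf f(j)$ is linear in each block, scaling one block by $c$ and another by $c^{-1}$ leaves the objective unchanged; so does $\textbf V_p\mapsto\textbf V_p\textbf M$ with the compensating change of the parent transfer matrix. Every such copy $g\theta^\star$ is again stationary and, by uniqueness of the block minimizers, a fixed point of $\Phi$. Thus $\nabla\Phi(\theta^\star)$ has eigenvalue $1$ along the orbit. Starting from $g\theta^\star\in\mathcal N'$, the claimed inequality would force $\|g\theta^\star-\theta^\star\|^2\le\rho\,\|g\theta^\star-\theta^\star\|^2$, which is impossible when $g\theta^\star\neq\theta^\star$. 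Restricting $\textbf G$ to a slice does not help, because $\Phi$ does not map any slice into itself.

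In fact the hypotheses are mutually inconsistent. The bound on $\textbf H_2$ near (hence, by continuity, at) $\theta^\star$ forces $\textbf V_1^\star\neq0$; otherwise the model vanishes identically in $\textbf V_2$ and $\textbf H_2=0$. Then $c\mapsto(c\textbf V_1^\star,c^{-1}\textbf V_2^\star,\dots)$ is a curve of stationary points through $\theta^\star$, so $\theta^\star$ is not isolated, and the proposition holds only vacuously.

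A meaningful version would need three ingredients: a gauge-fixing step in the algorithm, or the distance to the gauge orbit on the left-hand side; positive definiteness of the Hessian transversal to the orbit; and either sequential updates within a level or an explicit within-level coupling condition like the one above. Neither your proposal nor the paper's proof supplies these.
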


\textbf{Proof.}
Fixing all variables except one block (either $\textbf V_p$ or $\textbf C_{\mathcal P}$), each update solves a strongly convex least-squares problem, hence admits a unique minimizer and defines a well-posed update. The regression matrices depend polynomially on the variables and are therefore locally Lipschitz, so the full ALS sweep defines a locally Lipschitz mapping $\Phi$. Updates within each tree level act on disjoint parameter blocks and can be viewed as independent block-coordinate steps. Under the similar local regularity conditions as in Proposition~\ref{prop:ttd2}, the mapping $\Phi$ is contractive in a neighborhood of the stationary point. Hence, the same local contraction argument applies, yielding local linear convergence.
\hfill$\blacksquare$

Beyond convergence analysis, it is also important to characterize the data informativity conditions that
guarantee identifiability of the HTD-based HPDSs.
\begin{proposition}
A sufficient condition for identification up to the standard gauge freedom of HTD-based HPDSs from data is (\ref{eq:aurankcondition}). A necessary condition  is that the  regression matrices
$\textbf H_p$ and $\textbf H_{\mathcal P}$ 
have full column rank for $p=1,2,\dots,k$ and $\mathcal P\in\mathcal {T}$, respectively.
\end{proposition}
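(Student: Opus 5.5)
The plan mirrors the proof of Proposition~\ref{prop:ttdinfo}, treating sufficiency and necessity separately. For sufficiency, the HTD parameterization enters only through the full dynamic tensor $\mathscr A$. The map
\[
(\{\textbf V_p\}_{p=1}^k,\{\textbf C_\mathcal P\}_{\mathcal P\in\mathcal T})\mapsto \mathrm{vec}(\mathscr A),
\]
given by the recursive relation \eqref{eq:HTrelation} and its vectorized form, is well defined. The predicted vector field \eqref{eq: sysHTD} equals $\mathscr A\textbf x(j)^{k-1}$ for the reconstructed $\mathscr A$. So if two HTD parameter sets reproduce the same data $\textbf X_1$, the associated almost symmetric tensors $\mathscr A$ and $\mathscr A'$ satisfy $\mathscr A\textbf x(j)^{k-1}=\mathscr A'\textbf x(j)^{k-1}$ for all $j$.

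By the result of \cite{mao2025tensor} invoked earlier, the rank condition \eqref{eq:aurankcondition} on $\hat{\textbf X_0}$ is equivalent to the data uniquely determining $\mathscr A$ within the almost symmetric class. Hence $\mathscr A=\mathscr A'$. It then remains to argue that equal tensors give HT representations that agree up to the standard gauge freedom, namely invertible matrices $\textbf M_\mathcal Q$ acting as $\textbf V_\mathcal Q\mapsto \textbf V_\mathcal Q\textbf M_\mathcal Q$ and $\textbf C_\mathcal P\mapsto(\textbf M_{\mathcal P_l}^{-1}\otimes\textbf M_{\mathcal P_r}^{-1})\textbf C_\mathcal P\textbf M_\mathcal P$. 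For this I would use the fact that, for a minimal HT representation, each $\textbf V_\mathcal P$ spans the column space of the matricization $\textbf A_{(\mathcal P)}$. Two minimal representations therefore have node bases with identical column spaces and differ by invertible changes of basis. Substituting these changes of basis into \eqref{eq:HTrelation} fixes the transfer matrices up to exactly the transformation above.

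For necessity, the argument uses the block structure of Algorithm~\ref{alg:ALS-HTD}. Fix all blocks but one, either a leaf $\textbf V_p$ or an internal $\textbf C_\mathcal P$. The data map is then linear in that block, with regression matrix $\textbf H_p$ from \eqref{eq:htdH} or $\textbf H_\mathcal P$ from \textsc{InternalLS}. Suppose one of these matrices has a nontrivial kernel vector $\textbf z$. Perturbing the block by $\epsilon\,\mathrm{vec}^{-1}(\textbf z)$ leaves $\textbf F_0$, and hence the fit to $\textbf X_1$, unchanged. This produces a distinct parameter set consistent with the data, so identification fails.

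The main obstacle is the necessity step. I must check that such kernel directions are not merely gauge directions, since gauge directions would not contradict identification up to gauge. I would handle this by imposing a gauge-fixing normalization, namely orthonormal $\textbf V_p$ and the corresponding condition on the transfer matrices, and stating full column rank relative to that normalized parameterization. Under gauge fixing, a kernel direction of $\textbf H_p$ or $\textbf H_\mathcal P$ gives a genuinely different tensor-consistent parameter set, because changing a single block alone cannot realize a nontrivial gauge transformation. As in the TTD case, I would conclude with a remark that blockwise full column rank only guarantees uniqueness of each subproblem. It does not guarantee global identifiability, because the blocks are coupled through the dimension tree.
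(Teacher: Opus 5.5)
Your route is the paper's: sufficiency from the uniqueness of $\mathscr A$ via \cite{mao2025tensor} plus the fact that the HTD parameterizes $\mathscr A$, and necessity from uniqueness of the blockwise least-squares subproblems. Your change-of-basis gauge argument and your kernel-perturbation argument are more explicit than what the paper writes.

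There is, however, a genuine gap at the step ``hence $\mathscr A=\mathscr A'$'', and the paper's proof has the same hole. You correctly note that \cite{mao2025tensor} pins down $\mathscr A$ only within the almost symmetric class. You never justify that the tensor $\mathscr A'$ built from a competing HT parameter set lies in that class, and in general it does not. Nothing in the HT parameterization, or in Algorithm~\ref{alg:ALS-HTD}, enforces almost symmetry. Moreover, the data see a tensor only through $\mathscr A\textbf x^{k-1}$, that is, through its symmetrization over the first $k-1$ modes.

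Here is a concrete counterexample. Take $k=3$, $n=2$, and the tree $\{1,2,3\}\to\{1,2\},\{3\}$ and $\{1,2\}\to\{1\},\{2\}$. Set $\textbf V_1=\textbf V_2=\textbf I_2$, $\textbf V_3=\textbf e_1$, $\textbf C_{\{1,2\}}=\mathrm{vec}(\textbf S)$ with $\textbf S=\textbf e_1\textbf e_2^\top+\textbf e_2\textbf e_1^\top$, and root transfer $1$. This is a minimal HT representation of an almost symmetric tensor, with vector field $\dot x_1=2x_1x_2$, $\dot x_2=0$. Now replace $\textbf V_1$ by $\textbf I_2+\epsilon\textbf D$ with $\textbf D=\mathrm{diag}(1,-1)$. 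This adds the skew-symmetric matrix $\epsilon\textbf D\textbf S$ to the first frontal slice. The vector field is unchanged for every $\epsilon$, but the tensor changes, so the two parameter sets are not gauge-equivalent. Yet generic data satisfy \eqref{eq:aurankcondition}.

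The same example shows that your two halves rely on incompatible model classes. Suppose you restrict admissible parameters to those representing almost symmetric tensors, as your wording tacitly does. Then sufficiency goes through, provided you add minimality (hierarchical ranks equal to $\mathrm{rank}(\textbf A_{(\mathcal P)})$) as an explicit hypothesis; the statement does not assume it, and your column-space argument needs it. But the necessity argument then breaks, because a kernel direction of $\textbf H_p$ typically leaves the admissible class. In the example, $\mathrm{vec}(\textbf D)\in\ker\textbf H_1$, although the parameters are identifiable up to gauge among almost symmetric HT representations.

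If instead you admit all HT parameters, your necessity argument is fine but sufficiency is false. To obtain a provable statement you must fix one class and do one of two things. Either weaken the sufficiency conclusion to identification of the vector field, i.e.\ of the almost symmetric part of $\mathscr A$. Or replace full column rank of $\textbf H_p$ and $\textbf H_{\mathcal P}$ by injectivity on the linear subspace of single-block perturbations that keep the tensor almost symmetric.
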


\textbf{Proof.}
Similar to Proposition~\ref{prop:ttdinfo}, the sufficiency follows from the fact that \eqref{eq:aurankcondition} guarantees uniqueness of the full dynamic tensor $\mathscr A$. Since the HTD-based representation parameterizes $\mathscr A$, this implies identifiability of $\{\textbf V_p\}$ and $\{\textbf C_{\mathcal P}\}$ up to the standard gauge freedom. For necessity, under the HTD parameterization, identification reduces to a collection of linear least-squares problems associated with both $\textbf V_p$ and $\textbf C_{\mathcal P}$, characterized by regression matrices $\textbf H_p$ and $\textbf H_{\mathcal P}$, respectively. For each subproblem to admit a unique solution, the corresponding regression matrix must have full column rank. 
\hfill$\blacksquare$

The above results establish the basic theoretical
properties of the proposed HTD-based identification
method, including computational complexity,
convergence guarantees of the ALS iterations, and
identifiability conditions under the HTD
parameterization. The extension to the noisy case follows the similar arguments as in the TTD analysis and is therefore omitted for brevity.


\subsection{Identification of CPD-based HPDSs}
We finally consider the identification of CPD-based HPDSs. In this representation, the dynamic tensor $\mathscr A$ is expressed as a sum of $r$ rank-one tensors, where $r$ denotes the CP-rank. The total number of parameters is $\mathcal O(knr)$, growing linearly with respect to the polynomial order $k$, system dimension $n$, and the CP-rank $r$. Thus, it provides the most compact parameterization. Under the CPD parameterization, the induced polynomial vector field admits a separable structure. For each sampled state $\textbf x(j)$, the system mapping is 
\begin{align}\label{eq: sysCPD}
	\textbf f(j)
=\textbf U^{(k)}
\boldsymbol{\Lambda}
\left(
\textbf x^\top\textbf U^{(k-1)}
\odot
\cdots
\odot
\textbf x^\top\textbf U^{(1)}
\right)^{\top},
	\end{align}
which forms $\textbf F_0$. The identification problem becomes
    \begin{align}\label{eq:cpd-opt}
\min_{\{\textbf U^{(p)}\}_{p=1}^k,\boldsymbol{\Lambda}}  \| \textbf{X}_1 -\textbf F_0 \|_F^2.
\end{align}
Owing to the column-wise scaling invariance of  CPD, the diagonal weight matrix $\boldsymbol{\Lambda}$ can be absorbed into the last factor matrix $\textbf U^{(k)}$ without loss of generality. Defining $\widetilde{\textbf U}^{(k)}= \textbf U^{(k)}\boldsymbol{\Lambda}$,
the problem can be equivalently
formulated in terms of $\{\textbf U^{(p)}\}_{p=1}^{k-1}$ and  $\widetilde{\textbf U}^{(k)}$.  Compared with the TTD and HTD formulations, the CPD-based representation leads to a simpler ALS scheme based on Khatri--Rao products of the remaining factor matrices.

 	The detailed procedure is summarized in Algorithm~\ref{alg:ALS-CPD}. For a $p$th factor matrix
	$\textbf U^{(p)}$, the contributions of all other factor matrices are collectively gathered into the Khatri--Rao product vector
    \begin{align}\label{eq: cpdB}
	\textbf b_p(j)
	=\bigodot_{\substack{q=1, q\neq p}}^{k-1}
	\Big(\textbf x(j)^\top \textbf U^{(q)}\Big),
    \end{align}
    which yields 
\begin{align}\label{eq:cpd-sub}
\min_{\mathrm{vec}(\textbf U^{(p)})}\|
\textbf H_p\,\mathrm{vec}(\textbf U^{(p)})-
\mathrm{vec}(\textbf X_1)\|_2^2.\end{align}
The regression matrix $\textbf H_p$ is formed by stacking 
	\begin{align}\label{eq: cpdH}
	\textbf H_p(j)
	=
	\left(
	\textbf x(j)^\top
	\otimes
	\textbf U^{(k)}\operatorname{diag}\big((\textbf b_p(j)^\top)\big)
	\right)\textbf S_p,
	\end{align}
    where $\textbf S_p$ satisfies  $\mathrm{vec}((\textbf U^{(p})^\top) = \textbf S_p\mathrm{vec}(\textbf U^{(p)}).$
    The least-squares solution is reshaped into $\textbf U^{(p)}\in\mathbb R^{n\times r}$. Column normalization is performed after each update to remove the inherent scaling ambiguity of the CPD. After updating the first $k-1$ factor matrices, the weighted factor matrix $\widetilde{\textbf U}^{(k)}$ is updated by solving an analogous linear least-squares problem.
The diagonal weight matrix $\boldsymbol{\Lambda}$ can be recovered a posteriori from the column scaling of $\widetilde{\textbf U}^{(k)}$ if desired.

    \begin{algorithm}[t]
\caption{ALS Identification for CPD-based HPDS}
\label{alg:ALS-CPD}
\begin{algorithmic}[1]
\State \textbf{Input:} Data $\textbf X_0,\textbf X_1$, tolerance $\epsilon > 0$, truncation threshold $\delta > 0$, maximum iterations $N_{\max}$
\State \textbf{Output:} CP factor matrices $\textbf U^{(1)},\dots,\textbf U^{(k-1)}\in\mathbb R^{n\times r}$ and
$\widetilde{\textbf U}^{(k)}\in\mathbb R^{n\times r}$ (where $\widetilde{\textbf U}^{(k)}=\textbf U^{(k)}\boldsymbol\Lambda$).
\State Randomly initialize $\textbf U^{(1)},\dots,\textbf U^{(k-1)}$ and $\widetilde{\textbf U}^{(k)}$; set $e_{\text{prev}}=+\infty$.
\For{$\text{iter} = 1$ to $N_{\max}$}
    \For{$p = 1$ to $k-1$}
        \For{$j = 1$ to $T$}      
            \State Construct regressor \eqref{eq: cpdH} based on \eqref{eq: cpdB}
        \EndFor
        \State Stack $\textbf H_p = [\textbf H_p(0)^\top\  \cdots\ \textbf H_p(T-1)^\top]^\top$
        \State Solve linear least-squares problem \eqref{eq:cpd-sub}
        \State Reshape solution into $\textbf U^{(p)} \in \mathbb{R}^{n_p\times r_{p}}$
       \State Normalize columns of $\textbf U^{(p)}$
    \EndFor
    \For{$j=1$ to $T$}
        \State Compute
$\textbf b(j)=\bigodot_{\substack{q=1}}^{k-1}
            \big(\textbf x(j))^\top \textbf U^{(q)}\big)$
    
    \EndFor
    \State Form $\textbf B=[\textbf b(1)^\top\ \textbf b(2)^\top\  \cdots\ \textbf b(T)^\top]$.
    \State Update $\widetilde{\textbf U}^{(k)}$ by solving
  $\min_{\widetilde{\textbf U}^{(k)}} \|\widetilde{\textbf U}^{(k)}\textbf B-\textbf X_1\|_F^2.$
    \State Compute $\textbf F_0$ using \eqref{eq: sysCPD} and form the prediction error $e=\|\textbf X_1-\textbf F_0\|_F^2$
    \If{$|e_{\text{prev}} - e|/e_{\text{prev}} < \epsilon$}
        \State \textbf{break}
    \EndIf
    \State $e_{\text{prev}} \gets e$
\EndFor
\State \textbf{return} $\textbf U^{(1)}, \dots,\textbf U^{(k-1)}, \widetilde{\textbf U}^{(k)}$
\end{algorithmic}
\end{algorithm}


\begin{remark}
Let $r$ be the CP-rank of the $k$th-order, $n$-dimensional dynamic tensor $\mathscr A$ associated with a CPD-based HPDS. The computational complexity of a single ALS iteration in Algorithm~\ref{alg:ALS-CPD} can then be estimated as $\mathcal O(k^2nrT + kn^3r^2T)$.
\end{remark}

We next analyze the convergence properties of Algorithm~\ref{alg:ALS-CPD}. Let $\{\{\textbf U^{(p)}_\ell\}_{p=1}^k,\widetilde{\textbf U}_\ell^{(k)}\}_{\ell\ge 0}$ denote the sequence of  factor matrices generated by Algorithm~\ref{alg:ALS-CPD}.

\begin{proposition}
The sequence of objective values of \eqref{eq:cpd-opt}
is monotonically non-increasing and convergent. Furthermore, every accumulation point of the iterate sequence
$\{\{\textbf U_\ell^{(p)}\}_{p=1}^{k-1},
\widetilde{\textbf U}_\ell^{(k)}\}_{\ell\ge 0}$
is a first-order stationary point of \eqref{eq:cpd-opt}.
\end{proposition}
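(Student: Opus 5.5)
The argument follows the same route as Proposition~\ref{prop:ttd1}, adapted to the block structure of Algorithm~\ref{alg:ALS-CPD}. The blocks are $\textbf U^{(1)},\dots,\textbf U^{(k-1)}$ and $\widetilde{\textbf U}^{(k)}$, with $\boldsymbol\Lambda$ absorbed into $\widetilde{\textbf U}^{(k)}$. By \eqref{eq: sysCPD}, the objective \eqref{eq:cpd-opt} is linear in each block when the others are fixed. For $p\le k-1$ it takes the form \eqref{eq:cpd-sub}, and for the last block it reduces to $\|\widetilde{\textbf U}^{(k)}\textbf B-\textbf X_1\|_F^2$. Each update therefore exactly minimizes a convex quadratic over its block, so the objective cannot increase at any block step. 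Since it is nonnegative, the sequence of objective values is non-increasing, bounded below, and hence convergent.

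The column normalization after updating $\textbf U^{(p)}$ needs separate treatment, since it is not a least-squares step. I would handle it by transferring the column norms of $\textbf U^{(p)}$ into the corresponding columns of $\widetilde{\textbf U}^{(k)}$. By multilinearity of \eqref{eq: sysCPD}, rescaling the $j$th column of $\textbf U^{(p)}$ by $1/s_j$ and the $j$th column of $\widetilde{\textbf U}^{(k)}$ by $s_j$ leaves $\textbf F_0$ unchanged. The objective is therefore exactly preserved under normalization, and monotonicity holds over each full sweep.

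For the stationarity claim, note that the objective is a polynomial in the entries of the factor matrices. It is therefore real analytic and semi-algebraic, and so satisfies the Kurdyka--{\L}ojasiewicz property. I would then invoke the standard convergence theory for block coordinate descent, or Gauss--Seidel, methods with exact block minimization on KL functions, as in the proof of Proposition~\ref{prop:ttd1}. Concretely, take a subsequence converging to an accumulation point $\bar{\textbf U}$. The objective values converge, and each block update is an exact minimizer. Passing to the limit in the block optimality conditions $\textbf H_p^\top(\textbf H_p\mathrm{vec}(\textbf U^{(p)})-\mathrm{vec}(\textbf X_1))=0$ then shows that every block partial gradient vanishes at $\bar{\textbf U}$. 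Here continuity of $\textbf H_p$ in the other blocks is used, and the analogous condition holds for $\widetilde{\textbf U}^{(k)}$. Hence $\bar{\textbf U}$ is a first-order stationary point.

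The main obstacle is this limit passage. It requires that consecutive block iterates become close, i.e.\ $\|\textbf U_{\ell+1}-\textbf U_\ell\|\to0$ along the subsequence, so that the fixed blocks in each subproblem converge to the same limit. This is where either uniqueness of the block minimizers or a sufficient-decrease property is needed. I would first try to obtain sufficient decrease from the strong convexity of each subproblem whenever $\textbf H_p^\top\textbf H_p\succeq\alpha_p\textbf I$, for example by assuming the regression matrices have full column rank along the iterates, as in Proposition~\ref{prop:ttd2}. If that is not available, a small proximal term in each block update would give it unconditionally. Normalization helps here as well: it keeps $\textbf U^{(1)},\dots,\textbf U^{(k-1)}$ on a product of spheres, which is compact, so boundedness only needs to be checked for $\widetilde{\textbf U}^{(k)}$. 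That block is controlled by the full-rank condition on $\textbf B$.
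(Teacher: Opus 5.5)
Your route is the paper's. Its proof also says that each block step exactly solves a convex least-squares problem, so the objective is non-increasing and, being nonnegative, convergent. It then cites the KL property of the polynomial objective and ``standard'' block-coordinate-descent results to get stationarity of accumulation points. Your two additions are both correct, and both address points the paper's proof passes over.

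\emph{Normalization.} The paper's proof says nothing about it. Your norm-transfer device works but slightly modifies Algorithm~\ref{alg:ALS-CPD}. For the algorithm as written, you can argue as follows instead. After $\textbf U^{(p)}$ is replaced by $\textbf U^{(p)}\textbf D^{-1}$, with $\textbf D$ the diagonal matrix of its column norms, the rescaled old next block ($\textbf U^{(p+1)}\textbf D$, or $\widetilde{\textbf U}^{(k)}\textbf D$ if $p=k-1$) is feasible for the next least-squares step. It reproduces the pre-normalization residual, so the objective values at the end of each sweep are still non-increasing.

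\emph{Stationarity.} You are right that the KL citation does not close the argument. Powell's three-variable example (1973) is a $C^1$ piecewise-polynomial, hence semi-algebraic, function that is convex in each coordinate. Yet exact cyclic coordinate minimization on it can produce accumulation points that are not stationary. What is needed is exactly what you identify: the intra-sweep steps must vanish along the subsequence. Let $f$ denote the objective of \eqref{eq:cpd-opt}. For exact least-squares steps, $f(\textbf u_{\mathrm{old}})-f(\textbf u_{\mathrm{new}})=\|\textbf H_p(\textbf u_{\mathrm{old}}-\textbf u_{\mathrm{new}})\|_2^2$, which gives sufficient decrease once $\textbf H_p$ is uniformly well conditioned.

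So your proof of the second claim is complete only after you commit to a hypothesis. The proximal variant is a different algorithm, so for the statement as posed you need the rank condition. It is enough to impose it at the accumulation point itself:
\begin{itemize}
\item Assume $\textbf H_1,\dots,\textbf H_{k-1}$, formed from the limiting blocks, have full column rank, and $\textbf B\textbf B^\top\succ 0$ there.
\item Continuity then gives uniform constants in a neighborhood.
\item Because $f_\ell-f_{\ell+1}\to 0$, an induction through the sweep shows that every block step vanishes and every normalization factor tends to one along the subsequence.
\item The block optimality conditions then pass to the limit.
\end{itemize}
This hypothesis appears neither in the proposition nor in the paper's proof, which implicitly needs it. Your version makes it explicit, which is the right thing to do.
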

\textbf{Proof.}
Fixing all factor matrices except one block, each update in Algorithm~\ref{alg:ALS-CPD} reduces to a convex quadratic least-squares subproblem, where the regression matrix is formed by the Khatri-Rao product of the remaining factor matrices, leading to a simpler algebraic structure compared to the contraction-based formulations in TTD and HTD. Hence, the objective value of \eqref{eq:cpd-opt} is non-increasing after each update, and the sequence of objective values is monotonically non-increasing and bounded below. The remaining argument follows that of Proposition~\ref{prop:ttd1}: since the objective is a polynomial and thus satisfies the KL property, every accumulation point of the iterate sequence is a first-order stationary point.
\hfill$\blacksquare$

\begin{proposition}\label{prop:cpd2}
Let $\{\{\textbf U_\star^{(p)}\}_{p=1}^{k-1}$,
$\widetilde{\textbf U}_\star^{(k)}\}$
be an isolated stationary point of \eqref{eq:cpd-opt}.
Assume that there exists a neighborhood $\mathcal N$
of this point such that, for all iterates in $\mathcal N$, the corresponding regression
matrix $\textbf H_p$ satisfies
$\textbf H_p^\top \textbf H_p \succeq \alpha_p \textbf I$ for  $p=1,2,\dots,k-1$ and $\textbf B\textbf B^\top \succeq \beta \textbf I$ for some constants $\alpha_p>0$ and $\beta>0$..
Then Algorithm~\ref{alg:ALS-CPD} converges locally
linearly to the stationary point. Specifically, there
exists a neighborhood
$\mathcal N'\subseteq\mathcal N$ such that
\begin{align*}
&\sum_{p=1}^{k-1}
\|\textbf U^{(p)}_{\ell+1}-\textbf U_\star^{(p)}\|_F^2
+
\|\widetilde{\textbf U}^{(k)}_{\ell+1}-\widetilde{\textbf U}_\star^{(k)}\|_F^2,\\
&\le
\rho
\left(
\sum_{p=1}^{k-1}
\|\textbf U^{(p)}_{\ell}-\textbf U_\star^{(p)}\|_F^2
+
\|\widetilde{\textbf U}^{(k)}_{\ell}-\widetilde{\textbf U}_\star^{(k)}\|_F^2
\right).
\end{align*}
for some $\rho\in(0,1)$.
\end{proposition}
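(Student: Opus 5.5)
The plan is to follow the template of Proposition~\ref{prop:ttd2}, adapted to the Khatri--Rao structure of Algorithm~\ref{alg:ALS-CPD}. I will write one full sweep as a map $\Phi$ on the stacked variable $\theta=(\textbf U^{(1)},\dots,\textbf U^{(k-1)},\widetilde{\textbf U}^{(k)})$. Each sweep composes $k-1$ block updates of the form $\mathrm{vec}(\textbf U^{(p)}_+)=(\textbf H_p^\top\textbf H_p)^{-1}\textbf H_p^\top\mathrm{vec}(\textbf X_1)$, each followed by column normalization, and ends with the closed-form update $\widetilde{\textbf U}^{(k)}_+=\textbf X_1\textbf B^\top(\textbf B\textbf B^\top)^{-1}$.

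\emph{Well-posedness and regularity of each block map.} On $\mathcal N$ the assumptions $\textbf H_p^\top\textbf H_p\succeq\alpha_p\textbf I$ and $\textbf B\textbf B^\top\succeq\beta\textbf I$ make every subproblem strongly convex. Hence each has a unique minimizer, with $\|(\textbf H_p^\top\textbf H_p)^{-1}\|\le 1/\alpha_p$ and $\|(\textbf B\textbf B^\top)^{-1}\|\le 1/\beta$. The entries of $\textbf H_p$ in \eqref{eq: cpdH} and of $\textbf B$ are polynomial in the other factors, since they are built from the vectors $\textbf x(j)^\top\textbf U^{(q)}$ combined through $\odot$. Matrix inversion is smooth on the set of uniformly positive definite matrices. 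So each block update is a $C^1$, locally Lipschitz map of the remaining blocks.

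The column normalization is smooth as long as no column vanishes. I will shrink $\mathcal N$ so that the columns of each $\textbf U^{(p)}_\star$ stay bounded away from zero. This is automatic near a normalized stationary point. Composing these pieces, $\Phi$ is $C^1$ on $\mathcal N$. Because $\theta_\star$ is stationary and each block of $\theta_\star$ is the unique minimizer of its own subproblem, $\theta_\star$ is a fixed point of $\Phi$.

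\emph{Contraction.} The key step is to show that the spectral radius of $\nabla\Phi(\theta_\star)$ is below one. The scaling ambiguity of the CPD is removed by the normalization and by absorbing $\boldsymbol\Lambda$ into $\widetilde{\textbf U}^{(k)}$. After that, the isolatedness of $\theta_\star$ together with the uniform positive definiteness of the block Hessians $\textbf H_p^\top\textbf H_p$ and $\textbf B\textbf B^\top$ lets me invoke the local convergence theory of nonlinear block Gauss--Seidel methods, exactly as in Proposition~\ref{prop:ttd2}. Concretely, $\nabla\Phi(\theta_\star)$ is the Gauss--Seidel iteration matrix $-(D+L)^{-1}L^\top$ for the Hessian of the objective at $\theta_\star$, restricted to the gauge-fixed manifold. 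Here $D$ is the block diagonal, whose blocks are exactly $\textbf H_p^\top\textbf H_p$ and $\textbf B\textbf B^\top\otimes\textbf I$. Positive definiteness of this reduced Hessian then gives spectral radius $<1$ by the Ostrowski--Reich theorem.

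With the spectral radius below one, I pick an induced norm in which $\|\nabla\Phi(\theta_\star)\|<1$. Continuity of $\nabla\Phi$ then gives a neighborhood $\mathcal N'\subseteq\mathcal N$ on which $\Phi$ is a contraction. Passing to the Frobenius sum in the statement costs only an equivalence-of-norms constant. That constant is absorbed by iterating a fixed number of sweeps, or by choosing the norm adapted to $\nabla\Phi(\theta_\star)$, which yields the stated inequality for some $\rho\in(0,1)$.

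\emph{Main obstacle.} The hard step is justifying that the reduced Hessian is positive definite, which is what Ostrowski--Reich needs. The block conditions $\alpha_p,\beta>0$ control only the diagonal blocks, and isolatedness by itself does not imply nondegeneracy. I would first try to use isolatedness together with the fact that the objective is a polynomial, hence real-analytic, to argue that $\theta_\star$ is a strict local minimizer. If strictness still fails to give definiteness, I would state nondegeneracy of the Hessian modulo the scaling directions explicitly as the regularity condition inherited from Proposition~\ref{prop:ttd2}. A secondary technical point is the normalization: it is a projection onto a product of spheres, so the correct setting is a Riemannian Gauss--Seidel argument on that manifold. Alternatively, note that normalization leaves $\textbf F_0$ unchanged after rescaling $\widetilde{\textbf U}^{(k)}$, so it commutes with the objective and does not affect the contraction.
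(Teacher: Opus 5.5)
Your plan follows the same route as the paper's proof: unique block minimizers from $\textbf H_p^\top\textbf H_p\succeq\alpha_p\textbf I$ and $\textbf B\textbf B^\top\succeq\beta\textbf I$, a locally Lipschitz sweep map $\Phi$ fixing $\theta_\star$, nonlinear Gauss--Seidel theory for the spectral radius of $\nabla\Phi(\theta_\star)$, then continuity. The step you flag as the main obstacle is a genuine gap, and it cannot be closed from the stated hypotheses. The paper crosses it only by invoking ``similar local regularity conditions as in Proposition~\ref{prop:ttd2}'', and that proof simply asserts a second-order regularity condition.

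In fact Ostrowski--Reich is an equivalence here. Write the Hessian as $\textbf A=\textbf D+\textbf L+\textbf L^\top$ (block diagonal $\textbf D\succ 0$, strictly block-lower $\textbf L$). Then $\textbf M=\textbf I-(\textbf D+\textbf L)^{-1}\textbf A$ satisfies $\textbf A-\textbf M^\top\textbf A\textbf M=\textbf Q^\top\textbf D\textbf Q$ with $\textbf Q=(\textbf D+\textbf L)^{-1}\textbf A$, so its spectral radius is below one if and only if $\textbf A\succ 0$. The claimed contraction forces that spectral radius below one, so positive definiteness modulo the scaling gauge is necessary for the conclusion. Your first remedy therefore cannot work; isolatedness plus analyticity does not even give a local minimizer.

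Here is a concrete instance, already for $k=2$, $r=1$, $\textbf X_0=\textbf I_n$. Write $\textbf u=\textbf U^{(1)}$, $\tilde{\textbf u}=\widetilde{\textbf U}^{(2)}$, and let $\textbf X_1$ have singular triplets $(\sigma_i,\textbf w_i,\textbf v_i)$ with $\sigma_1>\sigma_2>\cdots>0$. The objective is $\|\textbf X_1-\tilde{\textbf u}\textbf u^\top\|_F^2$, and a sweep of Algorithm~\ref{alg:ALS-CPD} is $\textbf u\leftarrow\textbf X_1^\top\tilde{\textbf u}/\|\textbf X_1^\top\tilde{\textbf u}\|$, $\tilde{\textbf u}\leftarrow\textbf X_1\textbf u$, i.e., the power method. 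The stationary set is $\{\pm(\textbf v_i,\sigma_i\textbf w_i)\}$, so $(\textbf v_2,\sigma_2\textbf w_2)$ is isolated. Near it $\textbf H_1^\top\textbf H_1=\|\tilde{\textbf u}\|^2\textbf I\succeq(\sigma_2^2/2)\textbf I$ and $\textbf B\textbf B^\top=1$. Yet it is a saddle, and a perturbation $\epsilon\textbf w_1$ of $\tilde{\textbf u}$ is multiplied by $(\sigma_1/\sigma_2)^2>1$ at every sweep.

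So the proposition is false as written, and your second remedy (an explicit nondegeneracy hypothesis) is the only option. With it, your argument goes through on the gauge quotient, provided you justify the normalization correctly. Algorithm~\ref{alg:ALS-CPD} never rescales $\widetilde{\textbf U}^{(k)}$ when it normalizes. However, exact block minimization is equivariant under column scaling, so the normalized sweep equals the unnormalized one followed by moving all column scales into $\widetilde{\textbf U}^{(k)}$.

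Your last step has a second gap. A spectral radius below one yields a one-sweep contraction only in an adapted norm, here the energy norm $\|\cdot\|_{\textbf A}$ by the identity above. Returning to the plain Frobenius sum, or iterating $m$ sweeps, gives $\|\theta_{\ell+m}-\theta_\star\|_F\le C\rho^{m}\|\theta_\ell-\theta_\star\|_F$ with some $C\ge 1$. That is not the stated one-sweep inequality with constant one, and that inequality can genuinely fail because $\textbf M$ is non-normal. In the example above with $\sigma_1=1$, $\sigma_2=0.9$, the global minimizer $(\textbf v_1,\textbf w_1)$ is nondegenerate. Yet starting from $(\textbf v_1,\textbf w_1+\epsilon\textbf w_2)$, one sweep raises the squared Frobenius error from $\epsilon^2$ to about $(\sigma_2^2+\sigma_2^4)\epsilon^2\approx 1.47\,\epsilon^2$. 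The paper makes the same leap from spectral radius to Frobenius contraction. What either argument actually proves, under the added hypothesis, is R-linear convergence, or the stated inequality in a Hessian-weighted norm.
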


\textbf{Proof.}
Each block update solves a strongly convex least-squares problem and therefore admits a unique minimizer. The regression matrices are constructed from Khatri--Rao products of the remaining factor matrices, hence depend polynomially on the variables and are locally Lipschitz. It follows that the full ALS sweep defines a locally Lipschitz mapping. Under the similar local regularity conditions as in Proposition~\ref{prop:ttd2}, the mapping is contractive in a neighborhood of the stationary point, which yields the claimed local linear convergence.
\hfill$\blacksquare$

Finally, we examine the data informativity of the identification problem under the CPD parameterization.

\begin{proposition}
A sufficient condition for identification up to scaling and permutation ambiguities of CPD-based HPDSs from data is given by \eqref{eq:aurankcondition}.
A necessary condition  is that the
 stacked regression matrices
$\textbf H_p$ for $p=1,2,\dots,k-1$ and $\textbf B$
have full column rank.
\end{proposition}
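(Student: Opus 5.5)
The plan mirrors the proof of Proposition~\ref{prop:ttdinfo} and splits into a sufficiency part and a necessity part.

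\emph{Sufficiency.} I would invoke the result of \cite{mao2025tensor} already used in Proposition~\ref{prop:ttdinfo}: the data uniquely determine the almost symmetric dynamic tensor $\mathscr A$ if and only if \eqref{eq:aurankcondition} holds. The key observation is that the vector field is linear in $\mathscr A$. Indeed, $\textbf X_1 = \textbf A_{(k)}\hat{\textbf X_0}$ after reordering modes. Under \eqref{eq:aurankcondition}, the symmetrized columns of $\hat{\textbf X_0}$ span the full space of degree-$(k-1)$ monomials, so two almost symmetric tensors that agree on the data must coincide.

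Next I would note that the CPD map $(\{\textbf U^{(p)}\}_{p=1}^{k-1},\widetilde{\textbf U}^{(k)})\mapsto\mathscr A$ is a parameterization of $\mathscr A$. So any two CPD parameter sets producing the same data produce the same $\mathscr A$, and hence are two CP decompositions of the same tensor. By the essential uniqueness of CPD recalled in the preliminaries, they agree up to column permutation and scaling; the scaling is partly fixed by the column normalization and the absorption of $\boldsymbol\Lambda$ into $\widetilde{\textbf U}^{(k)}$.

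The delicate point is that essential uniqueness only holds under mild (Kruskal-type) conditions on the factors. I would therefore state identification as "up to the intrinsic ambiguities of the CPD of $\mathscr A$," and tacitly assume such a condition when invoking uniqueness. A further subtlety is that almost symmetry only constrains the symmetrized tensor, so the factors $\textbf U^{(1)},\dots,\textbf U^{(k-1)}$ can also be permuted among modes. I would fold this into the permutation ambiguity.

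\emph{Necessity.} I would argue at the level of the ALS subproblems \eqref{eq:cpd-sub}, exactly as in Proposition~\ref{prop:ttdinfo}. Fixing all other blocks, the data depend linearly on $\mathrm{vec}(\textbf U^{(p)})$ through $\textbf H_p$, and linearly on $\widetilde{\textbf U}^{(k)}$ through $\widetilde{\textbf U}^{(k)}\textbf B$. If $\textbf H_p$ has a nontrivial kernel vector $\textbf z$, then $\textbf U^{(p)}$ and $\textbf U^{(p)}+\mathrm{reshape}(\textbf z)$ produce identical $\textbf F_0$. Likewise, if $\textbf B$ lacks full row rank (full column rank of $\textbf B^\top$), then any $\textbf N$ with $\textbf N\textbf B=0$ gives a perturbation $\widetilde{\textbf U}^{(k)}+\textbf N$ that leaves the data unchanged.

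The remaining step is to check that these perturbations are not mere scaling or permutation ambiguities, so that identification genuinely fails. Generically a kernel perturbation is additive and not of the form $\textbf U^{(p)}\textbf D\boldsymbol\Pi$. For kernel vectors that happen to be pure rescalings, I would argue they are already removed by the column normalization, and conclude.

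I expect the main obstacle to be the sufficiency step, specifically making CPD uniqueness rigorous. That direction is only as strong as the assumed uniqueness conditions, so I would state them explicitly rather than prove them.
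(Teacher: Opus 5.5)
Your route is the paper's own: uniqueness of $\mathscr A$ from \cite{mao2025tensor}, then ``the CPD parameterizes $\mathscr A$'' plus essential uniqueness, and necessity from rank-deficient subproblems. The paper states exactly this in two sentences.

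The genuine gap is the step ``any two CPD parameter sets producing the same data produce the same $\mathscr A$.'' The paper's proof also silently rests on it. In \eqref{eq:cpd-opt} the factors are unconstrained, so $\sum_j \textbf u^{(1)}_j\circ\cdots\circ\tilde{\textbf u}^{(k)}_j$ is generally not almost symmetric. The data only see its symmetrization over modes $1,\dots,k-1$, i.e.\ the vector field $\textbf f(\textbf x)=\sum_j\tilde{\textbf u}^{(k)}_j\prod_{p=1}^{k-1}(\textbf x^\top\textbf u^{(p)}_j)$, and \cite{mao2025tensor} only fixes that symmetrization. You noticed this, but the resulting ambiguity cannot be folded into the permutation ambiguity, and a Kruskal condition on the true $\mathscr A$ does not cure it.

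Take $n=2$, $k=3$, $\dot x_1=x_1^2$, $\dot x_2=x_2^2$, i.e.\ $\mathscr A=\textbf e_1\circ\textbf e_1\circ\textbf e_1+\textbf e_2\circ\textbf e_2\circ\textbf e_2$. This has CP-rank $2$, Kruskal's condition holds, and the true factors are $\textbf U^{(1)}=\textbf U^{(2)}=\widetilde{\textbf U}^{(3)}=\textbf I_2$. The factors
\[
\textbf U^{(1)}=\begin{bmatrix}1&1\\-1&-2\end{bmatrix},\quad
\textbf U^{(2)}=\begin{bmatrix}1&1\\1&2\end{bmatrix},\quad
\widetilde{\textbf U}^{(3)}=\frac{1}{3}\begin{bmatrix}4&-1\\1&-1\end{bmatrix}
\]
give $\textbf f(\textbf x)=\frac13(x_1^2-x_2^2)(4,1)^\top-\frac13(x_1^2-4x_2^2)(1,1)^\top=(x_1^2,x_2^2)^\top$. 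Hence they produce the same $\textbf F_0$ for every data set. This includes generic data with $\mathrm{rank}(\hat{\textbf X_0})=3$, which is exactly \eqref{eq:aurankcondition} for $n=2$, $k=3$. No scaling, column permutation, or within-term mode swap relates the two factor sets. So uniqueness of the almost symmetric $\mathscr A$ does not give identifiability of the unconstrained CP factors.

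To close the gap you need an extra ingredient. One option is to restrict the parameterization to $\textbf U^{(1)}=\cdots=\textbf U^{(k-1)}$. Then the CPD tensor is itself almost symmetric, \cite{mao2025tensor} determines it, and an essential-uniqueness condition for this partially symmetric CPD yields the factors up to scaling and permutation. Alternatively, prove uniqueness directly for decompositions of $\textbf f$ into sums of products of linear forms, which is a different requirement from Kruskal's condition on $\mathscr A$. Or weaken the claim to identification of $\mathscr A$ itself.

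Your necessity argument is fine and more concrete than the paper's. You should evaluate $\textbf H_p$ and $\textbf B$ at the true factors, and your full-row-rank reading of the $\textbf B$ condition matches $\textbf B\textbf B^\top\succeq\beta\textbf I$. One correction: pure-rescaling kernel vectors are excluded not by normalization but by $\textbf B$. Indeed, $\textbf H_p\,\mathrm{vec}(\textbf U^{(p)}\mathrm{diag}(\textbf c))=0$ means $\widetilde{\textbf U}^{(k)}\mathrm{diag}(\textbf c)\textbf B=0$. This forces $\textbf c=0$ when $\textbf B$ has full row rank and the columns of $\widetilde{\textbf U}^{(k)}$ are nonzero.
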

\textbf{Proof.}
For the sufficiency, the CPD-based representation parameterizes $\mathscr A$, and \eqref{eq:aurankcondition} guarantees uniqueness of the full dynamic tensor $\mathscr A$. Hence, the identifiability of the factor matrices up to scaling and permutation ambiguities is implied. For necessity, identification reduces to least-squares subproblems associated with  $\textbf H_p$ and $\textbf B$. For each subproblem to admit a unique solution, these matrices must have full column rank.
\hfill$\blacksquare$

These results provide theoretical guarantees for the
proposed CPD-based identification framework, highlighting its compact parameterization and the
resulting Khatri--Rao regression structure that enables
efficient ALS-based system identification.
The noisy case follows analogously from the TTD analysis and is therefore omitted for brevity.

 \section{Numerical examples}\label{sec: simulation}
	In this section, we evaluate the performance of the proposed TTD-, HTD- and CPD-based identification algorithms following Algorithm~\ref{alg:ALS-TTD}, Algorithm~\ref{alg:ALS-HTD}, and Algorithm~\ref{alg:ALS-CPD} through numerical experiments. The experiments are conducted on synthetically generated HPDSs,
	for which the ground-truth tensor representations are available.
	This allows a quantitative assessment of identification accuracy,
	convergence behavior, and robustness to noise.
	We consider two experimental settings. In the first setting, we generate HPDSs whose dynamic tensors admit
	low-rank structure in a specific format (i.e., TTD, HTD, or CPD), and apply the corresponding ALS identification algorithm
	to assess structure-matched recovery performance. In the second setting, we consider HPDSs with general dynamic tensors,
	and apply identification algorithms to the same data to compare the reconstruction accuracy, robustness and computation cost. All simulations were carried out in MATLAB R2023b on a machine equipped with an Apple M3 chip and 16~GB of RAM. Tensor operations and tensor factorizations were implemented using Tensor Toolbox~3.6~\cite{kolda2023matlab}. The associated code can be found in https://github.com/XinMao0/tensor-decomposition-based-HPDSs-identification.
	
	\subsection{Convergence Analysis }
	In this experiment, we investigate the convergence behavior and identification accuracy of the proposed algorithms. The dynamic tensor $\mathscr A$ is generated using three different low-rank generation schemes: low TT-rank generation, low HT-rank generation, and low CP-rank generation. We set the system dimension $n=9$ and polynomial order $k=4$. The TT-ranks are set to 
	$ \{r_0,r_1,r_2,r_3,r_4,r_5\}=\{ 1,9,10,3,1 \}.$
	We use a balanced binary dimension tree 
	with node sets 
	$\{\{1,2,3,4\},\{1,2\},\{3,4\},
	\{1\},\{2\},\{3\},\{4\}\}$. 
	The hierarchical ranks associated with these nodes are
	$\{1,10,10,9,9,9,3\}$, respectively. The CP-rank is set to $r=3$. State trajectories are generated by numerically integrating
	\eqref{eq: ausys} from random initial conditions. The state and its time derivative are sampled at $T$ time instants with step size $\tau$ to form the data matrices $\textbf X_0$ and $\textbf X_1$. For each generated system, we apply the corresponding identification algorithm to the collected data set. The prediction and identification errors are defined as
\begin{equation*}
E_{\mathrm{pred}}
=
\frac{\|\mathbf X_1 - \widehat{\mathbf X}_1\|_F}{\|\mathbf X_1\|_F},\
E_{\mathscr A}
=
\frac{\|\mathscr A - \widehat{\mathscr A}\|_F}{\|\mathscr A\|_F},
\end{equation*}
respectively.
	Fig.~\ref{fig: convergence} illustrates the convergence behavior of the ALS iterations and the accuracy of the recovered tensor factors for the three low-rank generation schemes. The results demonstrate that all three methods converge reliably and accurately recover the underlying tensor factors when the assumed low-rank structure matches the true generative model.

	\begin{figure}[t]
		\centering
		\includegraphics[width=\linewidth]{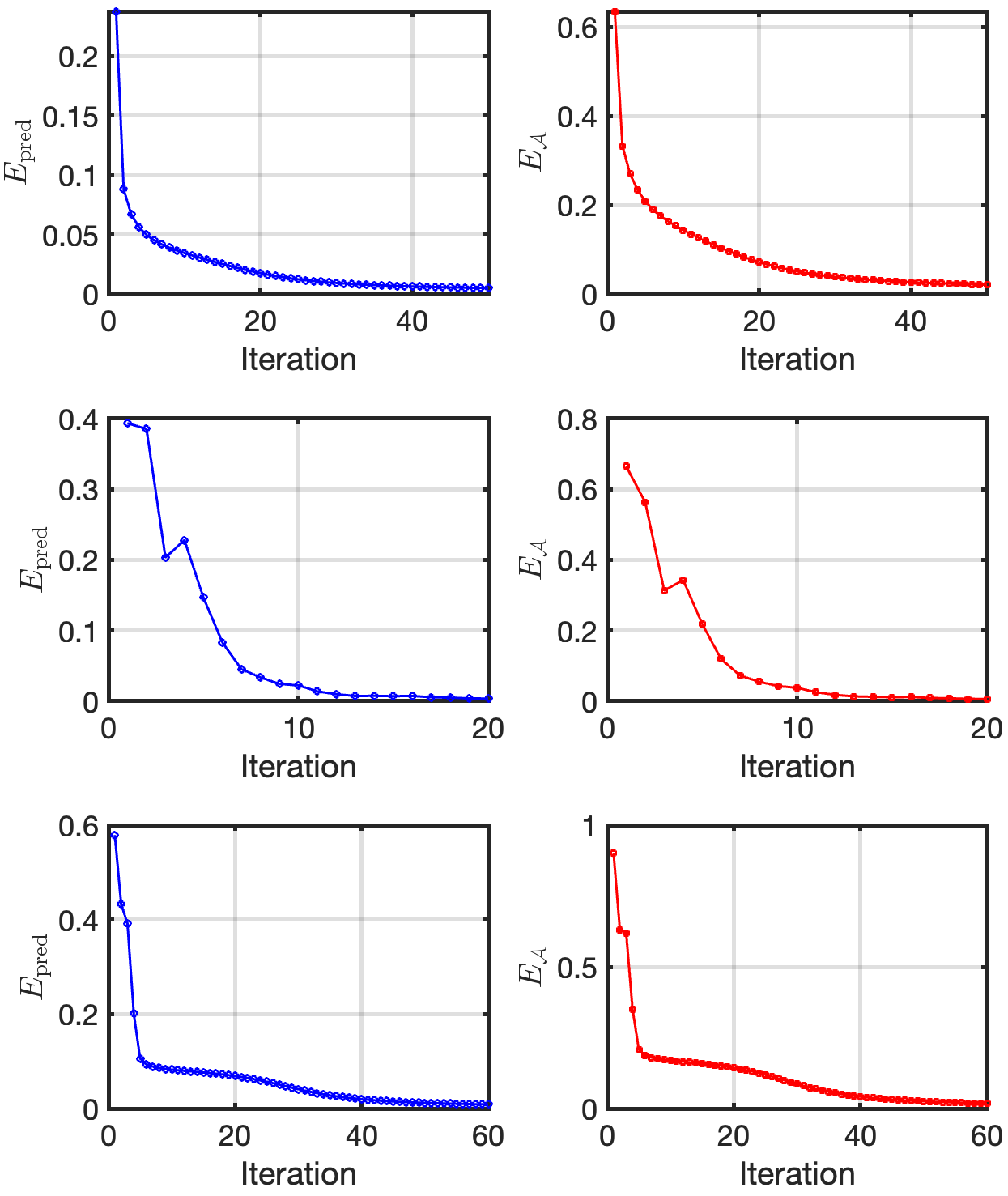}
		\caption{Convergence of the TTD-, HTD-, and CPD-based ALS HPDS identification algorithms (Algorithms~\ref{alg:ALS-TTD},\ref{alg:ALS-HTD}, and \ref{alg:ALS-CPD}). 
			Rows correspond to TTD, HTD, and CPD (top to bottom), and columns show the relative prediction and identification errors, respectively. 
			The errors decrease steadily across sweeps, reflecting the block coordinate-descent nature of the algorithms.}
		\label{fig: convergence}
	\end{figure}
	
	
	\subsection{Accuracy and Noise Robustness}
	In this experiment, we evaluate the performance of the proposed
	identification methods on general HPDSs whose coefficient
	tensors do not admit any particular low-rank tensor structure. We consider HPDSs of polynomial order $k=4$ and dimension $n=9$.
	The coefficient tensor
	$\mathscr A$ is generated as a random sparse tensor without imposing any TTD-, HTD-, or CPD-based structure. Specifically,  the nonzero entries of $\mathscr A$ are independently drawn from a standard Gaussian distribution. The sparsity level is chosen to be $0.001$. State trajectories are generated by numerically integrating
	\eqref{eq: ausys} from randomly sampled initial conditions.
	The state and its time derivative are sampled at $T$ time instants
	such that the lifted data matrix $\hat{\textbf X}_0$
	satisfies the rank condition required for identifiability.
	All identification methods are applied to the same data sets
	for a fair comparison.

	We compare the lifting-based identification method in \cite{mao2025tensor}, which reconstructs the full tensor in the ambient space, with the proposed TTD-, HTD-, and CPD-based ALS algorithms that exploit low-rank tensor structures. The relative tensor identification errors are reported in Table~\ref{tab:reconstruction_comparison}.
	with the proposed TTD-, HTD-, and CPD-based ALS algorithms
	in terms of relative tensor identification error, as reported in Table~\ref{tab:reconstruction_comparison}.
	Although the true tensor does not conform to a specific low-rank format, the ALS-based methods consistently achieve smaller reconstruction errors
	than the lifting-based approach.
	This indicates that structured low-rank parameterizations
	can provide improved numerical stability and regularization effects even when the ground truth is not exactly low-rank.
	To further assess robustness, we perturb the sampled data
	with additive Gaussian noise of varying levels. Fig.~\ref{fig: noiseconvergence} illustrates the relative identification errors under increasing noise intensities.
	The proposed ALS-based methods exhibit improved robustness
	compared with the lifting-based scheme, maintaining lower identification errors across all noise levels.

		\begin{table}[t]
		\centering
		\caption{Relative identification errors of the lifting-based method in \cite{mao2025tensor} and the proposed low-rank identification schemes for general sparse HPDS.}
		\label{tab:reconstruction_comparison}
		\centering
		\begin{tabular}{c|c|c}
			\hline
			\textbf{Tensor Format} & \textbf{Lifting} & \textbf{Proposed} \\
			\hline
			TTD  & $4.275\times 10^{-3}$ 
			&$5.06 \times 10^{-4}$ \\
			HTD  & $7.405 \times 10^{-3}$ 
			& $6.05 \times 10^{-4}$ \\
			CPD  & $6.736\times 10^{-3}$  &$7.405 \times 10^{-4}$    \\
			\hline
		\end{tabular}
	\end{table}

	\begin{figure}[t]
		\centering
		\includegraphics[width=\linewidth]{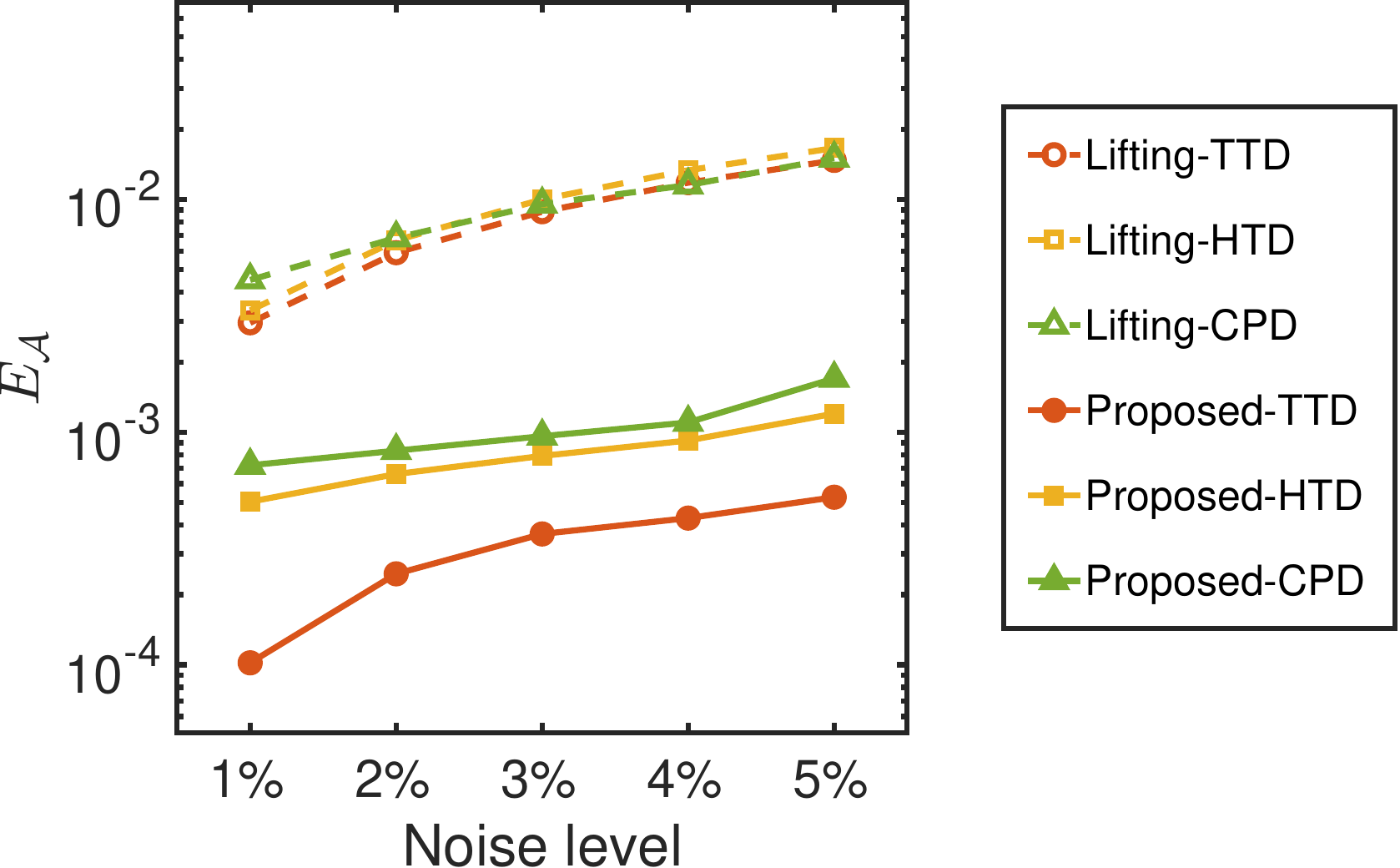}
		\caption{Relative identification errors of the lifting-based  methods and proposed methods  for TTD-, HTD-, and CPD-based HPDSs under increasing measurement noise levels.}
		\label{fig: noiseconvergence}
	\end{figure}

	\subsection{Computational Scalability}
	In this experiment, we investigate the computational scalability of the proposed identification methods with respect to the system dimension. Specifically, we examine how the computational time grows as the state dimension $n$ increases. We consider HPDSs of  order $k=7$ and vary the system dimension $n$ 
    over a range from $8$ to $400$. The data generation procedure follows the same setup as in the previous experiment, and all simulations are performed under identical stopping criteria. 
Fig.~\ref{fig: time} reports the computational time with respect to the system dimension $n$. In the small-scale regime as shown in Fig.~\ref{fig: time}(a), both the lifting-based approaches and the proposed ALS-based methods are executable, allowing for a direct comparison. However, as $n$ increases, the lifting-based methods exhibit rapid growth in computational cost and encounter out-of-memory (OOM) issues.
In contrast, the proposed ALS-based methods remain computationally tractable and exhibit moderate growth with respect to $n$, as shown in Fig.~\ref{fig: time}(b). This behavior is consistent with the theoretical complexity analysis, which predicts polynomial scaling in $n$ for tensor-structured ALS updates. These results highlight the computational advantage of the proposed tensor-structured identification framework, enabling efficient learning for medium- and large-scale polynomial dynamical systems.

	\begin{figure}[t]
		\centering
	\includegraphics[width=\linewidth]{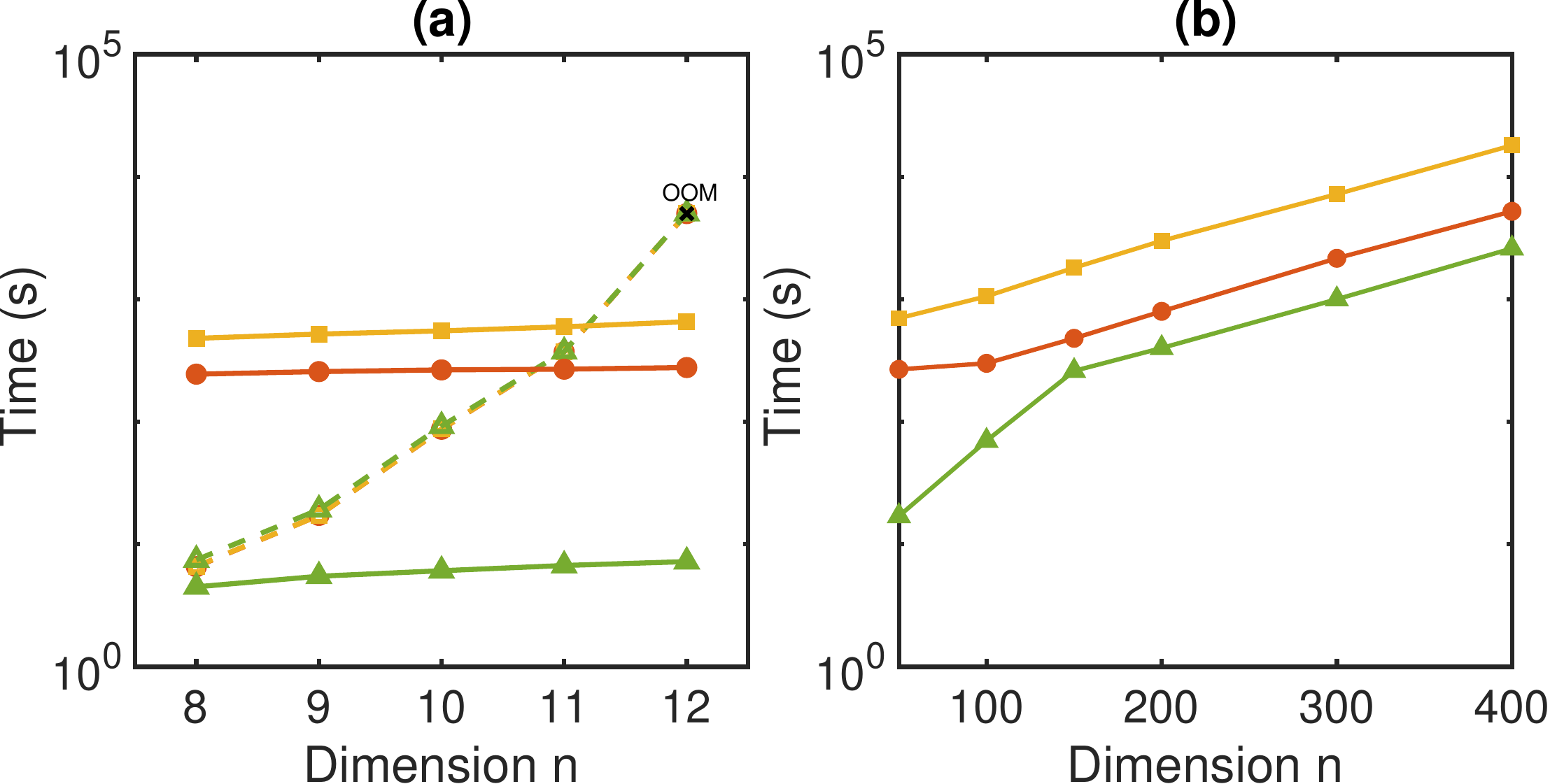}
		\caption{
			Computation time comparison between the lifting-based approaches and the proposed ALS-based methods for TTD, HTD, and CPD models with $k=7$.	Line styles and markers are consistent with Fig.~\ref{fig: noiseconvergence}.
			(a) Small-scale regime with varying state dimension $n$ from $8$ to $12$, where both approaches are applicable. 
			The lifting-based methods encounter OOM issues at $n=12$.	
			(b) Scalability of the proposed methods for larger $n$, where only the ALS-based approaches remain computationally feasible.}
		\label{fig: time}
	\end{figure}

\section{Conclusion}\label{sec: conclusion}
In this article, we present a scalable  framework for the identification of HPDSs based on structured low-rank tensor decompositions. By leveraging TTD-, HTD-, and CPD-based representations, the proposed approach learns compact tensor or matrix factors directly from data, thereby avoiding the construction of the full higher-order dynamic tensor. The proposed ALS-based identification algorithms exploit the underlying multilinear structure of the tensor models, leading to efficient parameter estimation with significantly reduced computational and memory complexity. Both theoretical analysis and numerical experiments demonstrate that the proposed methods achieve accurate and robust identification, while maintaining favorable scalability with respect to the system dimension. In particular, the results highlight that tensor-structured representations provide a principled way to mitigate the curse of dimensionality inherent in higher-order polynomial systems. Beyond computational efficiency, the proposed framework offers a flexible modeling paradigm for high-dimensional nonlinear systems with structured and higher-order interactions. The proposed framework is also applicable to discrete-time systems, where similar tensor-structured representations and ALS-based identification procedures can be employed.

Future work will pursue several directions. One important direction is to establish sharper identifiability conditions under low-rank tensor parameterizations, in particular by characterizing necessary and sufficient conditions that explicitly depend on the tensor ranks and data richness. 
Another key aspect is to develop a deeper theoretical understanding of the effects of model mismatch, approximation error, and stochastic noise on the recovery performance, including finite-sample guarantees and robustness bounds. 
From an algorithmic perspective, further improvements will be explored to enhance scalability and reliability, such as adaptive rank selection, initialization strategies, and acceleration schemes for ALS-type methods. 
Moreover, extending the framework to partially observed and noisy measurement settings, including missing data, discrete dynamics, and irregular sampling, is of practical importance. The proposed framework has potential applications in real-world large-scale systems, including networked dynamical systems, hypergraph-based interaction models, and nonlinear systems arising in control, robotics, and biological and ecological applications. 
In particular, the ability to capture higher-order interactions makes the approach promising for modeling complex multi-agent and multi-body dynamics in data-driven settings.

 \bibliographystyle{IEEEtran}
\bibliography{reference}

@book{ljung1998system,
	title={System identification},
	author={Ljung, Lennart},
	booktitle={Signal analysis and prediction},
	pages={163--173},
	year={1998},
	publisher={Springer}
}

@article{pickard2024geometric,
  title={Geometric aspects of observability of hypergraphs},
  author={Pickard, Joshua and Stansbury, Cooper and Surana, Amit and Rajapakse, Indika and Bloch, Anthony},
  journal={Ifac-papersonline},
  volume={58},
  number={6},
  pages={321--326},
  year={2024},
  publisher={Elsevier}
}

@article{pickard2024kronecker,
  title={Kronecker product of tensors and hypergraphs: structure and dynamics},
  author={Pickard, Joshua and Chen, Can and Stansbury, Cooper and Surana, Amit and Bloch, Anthony M and Rajapakse, Indika},
  journal={SIAM Journal on Matrix Analysis and Applications},
  volume={45},
  number={3},
  pages={1621--1642},
  year={2024},
  publisher={SIAM}
}

@book{van2012subspace,
  title={Subspace Identification for Linear Systems: Theory—Implementation—Applications},
  author={Van Overschee, Peter and De Moor, BL0888},
  year={2012},
  publisher={Springer Science \& Business Media}
}

@article{brockett1976volterra,
  title={Volterra series and geometric control theory},
  author={Brockett, Roger W},
  journal={Automatica},
  volume={12},
  number={2},
  pages={167--176},
  year={1976},
  publisher={Elsevier}
}

@article{guo2020constructing,
  title={Constructing least-squares polynomial approximations},
  author={Guo, Ling and Narayan, Akil and Zhou, Tao},
  journal={SIAM Review},
  volume={62},
  number={2},
  pages={483--508},
  year={2020},
  publisher={SIAM}
}

@article{phan2020stable,
  title={Stable low-rank tensor decomposition for compression of convolutional neural network},
  author={Phan, Anh-Huy and Sobolev, Konstantin and Sozykin, Konstantin and Ermilov, Dmitry and Gusak, Julia and Tichavsk{\`y}, Petr and Glukhov, Valeriy and Oseledets, Ivan and Cichocki, Andrzej},
 journal={European Conference on Computer Vision},
  pages={522--539},
  year={2020},
  organization={Springer}
}

@article{yan2014image,
  title={Image-based process monitoring using low-rank tensor decomposition},
  author={Yan, Hao and Paynabar, Kamran and Shi, Jianjun},
  journal={IEEE Transactions on Automation Science and Engineering},
  volume={12},
  number={1},
  pages={216--227},
  year={2014},
  publisher={IEEE}
}

@article{pfluger2010spatially,
  title={Spatially adaptive sparse grids for high-dimensional data-driven problems},
  author={Pfl{\"u}ger, Dirk and Peherstorfer, Benjamin and Bungartz, Hans-Joachim},
  journal={Journal of Complexity},
  volume={26},
  number={5},
  pages={508--522},
  year={2010},
  publisher={Elsevier}
}

@article{bertsimas2020sparsehigh,
  title={Sparse high-dimensional regression},
  author={Bertsimas, Dimitris and Van Parys, Bart},
  journal={The Annals of Statistics},
  volume={48},
  number={1},
  pages={300--323},
  year={2020},
  publisher={JSTOR}
}

@article{jansson2003subspace,
  title={Subspace identification and ARX modeling},
  author={Jansson, Magnus},
  journal={IFAC Proceedings Volumes},
  volume={36},
  number={16},
  pages={1585--1590},
  year={2003},
  publisher={Elsevier}
}

@article{de1997stabilizing,
  title={Stabilizing predictive control of nonlinear ARX models},
  author={De Nicolao, Giuseppe and Magni, Lalo and Scattolini, Riccardo},
  journal={Automatica},
  volume={33},
  number={9},
  pages={1691--1697},
  year={1997},
  publisher={Elsevier}
}

@article{watson1967linear,
  title={Linear least squares regression},
  author={Watson, Geoffrey S},
  journal={The Annals of Mathematical Statistics},
  pages={1679--1699},
  year={1967},
  publisher={JSTOR}
}

@article{chen2022explicit,
	title={Explicit solutions and stability properties of homogeneous polynomial dynamical systems},
	author={Chen, Can},
	journal={IEEE Transactions on Automatic Control},
	volume={68},
	number={8},
	pages={4962--4969},
	year={2022},
	publisher={IEEE}
}

@article{batselier2022low,
	title={Low-rank tensor decompositions for nonlinear system identification: A tutorial with examples},
	author={Batselier, Kim},
	journal={IEEE Control Systems Magazine},
	volume={42},
	number={1},
	pages={54--74},
	year={2022},
	publisher={IEEE}
}

@article{kargas2020nonlinear,
  title={Nonlinear system identification via tensor completion},
  author={Kargas, Nikos and Sidiropoulos, Nicholas D},
 journal={Proceedings of the AAAI Conference on Artificial Intelligence},
  volume={34},
  number={04},
  pages={4420--4427},
  year={2020}
}

@article{bousse2017tensor,
  title={Tensor-based large-scale blind system identification using segmentation},
  author={Bouss{\'e}, Martijn and Debals, Otto and De Lathauwer, Lieven},
  journal={IEEE Transactions on Signal Processing},
  volume={65},
  number={21},
  pages={5770--5784},
  year={2017},
  publisher={IEEE}
}

@article{batselier2017tensor,
	title={A Tensor Network Kalman filter with an application in recursive MIMO Volterra system identification},
	author={Batselier, Kim and Chen, Zhongming and Wong, Ngai},
	journal={Automatica},
	volume={84},
	pages={17--25},
	year={2017},
	publisher={Elsevier}
}

@article{brunton2016discovering,
  title={Discovering governing equations from data by sparse identification of nonlinear dynamical systems},
  author={Brunton, Steven L and Proctor, Joshua L and Kutz, J Nathan},
  journal={Proceedings of the national academy of sciences},
  volume={113},
  number={15},
  pages={3932--3937},
  year={2016},
  publisher={National Academy of Sciences}
}

@article{cui2024discrete,
  title={On discrete-time polynomial dynamical systems on hypergraphs},
  author={Cui, Shaoxuan and Zhang, Guofeng and Jard{\'o}n-Kojakhmetov, Hildeberto and Cao, Ming},
  journal={IEEE Control Systems Letters},
  volume={8},
  pages={1078--1083},
  year={2024},
  publisher={IEEE}
}

@article{kolda2009tensor,
  title={Tensor decompositions and applications},
  author={Kolda, Tamara G and Bader, Brett W},
  journal={SIAM review},
  volume={51},
  number={3},
  pages={455--500},
  year={2009},
  publisher={SIAM}
}

@article{craciun2019polynomial,
  title={Polynomial dynamical systems, reaction networks, and toric differential inclusions},
  author={Craciun, Gheorghe},
  journal={SIAM Journal on Applied Algebra and Geometry},
  volume={3},
  number={1},
  pages={87--106},
  year={2019},
  publisher={SIAM}
}

@book{brauer2019mathematical,
  title={Mathematical models in epidemiology},
  author={Brauer, Fred and Castillo-Chavez, Carlos and Feng, Zhilan and others},
  volume={32},
  year={2019},
  publisher={Springer}
}

@article{gorodetsky2018high,
  title={High-dimensional stochastic optimal control using continuous tensor decompositions},
  author={Gorodetsky, Alex and Karaman, Sertac and Marzouk, Youssef},
  journal={The International Journal of Robotics Research},
  volume={37},
  number={2-3},
  pages={340--377},
  year={2018},
  publisher={SAGE Publications Sage UK: London, England}
}

@article{oseledets2011tensor,
	title={Tensor-train decomposition},
	author={Oseledets, Ivan V},
	journal={SIAM Journal on Scientific Computing},
	volume={33},
	number={5},
	pages={2295--2317},
	year={2011},
	publisher={SIAM}
}

@article{oseledets2009breaking,
	title={Breaking the curse of dimensionality, or how to use SVD in many dimensions},
	author={Oseledets, Ivan V and Tyrtyshnikov, Eugene E},
	journal={SIAM Journal on Scientific Computing},
	volume={31},
	number={5},
	pages={3744--3759},
	year={2009},
	publisher={SIAM}
}

@book{guckenheimer2013nonlinear,
	title={Nonlinear {O}scillations, {D}ynamical {S}ystems, and {B}ifurcations of {V}ector {F}ields},
	author={Guckenheimer, John and Holmes, Philip},
	volume={42},
	year={2013},
	publisher={Springer Science \& Business Media}
}

@article{grasedyck2010hierarchical,
	title={Hierarchical singular value decomposition of tensors},
	author={Grasedyck, Lars},
	journal={SIAM journal on matrix analysis and applications},
	volume={31},
	number={4},
	pages={2029--2054},
	year={2010},
	publisher={SIAM}
}

@article{kolda2023matlab,
	title={Tensor Toolbox for MATLAB, Version 3.6},
	author={Bader, Brett W and Kolda, Tamara G},
        journal={www.tensortoolbox.org},
	year={2023},
}

@article{phan2013candecomp,
  title={CANDECOMP/PARAFAC decomposition of high-order tensors through tensor reshaping},
  author={Phan, Anh-Huy and Tichavsk{\`y}, Petr and Cichocki, Andrzej},
  journal={IEEE Transactions on Signal Processing},
  volume={61},
  number={19},
  pages={4847--4860},
  year={2013},
  publisher={IEEE}
}

@article{jansson1998consistency,
  title={On consistency of subspace methods for system identification},
  author={Jansson, Magnus and Wahlberg, Bo},
  journal={Automatica},
  volume={34},
  number={12},
  pages={1507--1519},
  year={1998},
  publisher={Elsevier}
}

@article{hong2020generalized,
  title={Generalized canonical polyadic tensor decomposition},
  author={Hong, David and Kolda, Tamara G and Duersch, Jed A},
  journal={SIAM review},
  volume={62},
  number={1},
  pages={133--163},
  year={2020},
  publisher={SIAM}
}

@article{delabays2025hypergraph,
  title={Hypergraph reconstruction from dynamics},
  author={Delabays, Robin and De Pasquale, Giulia and D{\"o}rfler, Florian and Zhang, Yuanzhao},
  journal={Nature Communications},
  volume={16},
  number={1},
  pages={2691},
  year={2025},
  publisher={Nature Publishing Group UK London}
}

@article{lubich2013dynamical,
  title={Dynamical approximation by hierarchical Tucker and tensor-train tensors},
  author={Lubich, Christian and Rohwedder, Thorsten and Schneider, Reinhold and Vandereycken, Bart},
  journal={SIAM Journal on Matrix Analysis and Applications},
  volume={34},
  number={2},
  pages={470--494},
  year={2013},
  publisher={SIAM}
}

@article{bertsimas2020sparse,
  title={Sparse high-dimensional regression},
  author={Bertsimas, Dimitris and Van Parys, Bart},
  journal={The Annals of Statistics},
  volume={48},
  number={1},
  pages={300--323},
  year={2020},
  publisher={JSTOR}
}

@article{ostertagova2012modelling,
  title={Modelling using polynomial regression},
  author={Ostertagov{\'a}, Eva},
  journal={Procedia engineering},
  volume={48},
  pages={500--506},
  year={2012},
  publisher={Elsevier}
}

@article{mao2025model,
  title={Model Reduction of Homogeneous Polynomial Dynamical Systems via Tensor Decomposition},
  author={Mao, Xin and Chen, Can},
  journal={IEEE Transactions on Automatic Control},
  year={2026},
  publisher={IEEE}
}

@article{mao2025identification,
  title={Identification of Hypergraph Dynamics via Physics-Informed Neural Networks},
  author={Mao, Xin and Dong, Anqi and Chen, Can},
  journal={IEEE Control Systems Letters},
  volume={9},
  pages={2525--2530},
  year={2025},
  publisher={IEEE}
}

@article{cui2025analysis,
  title={Analysis of higher-order Lotka-Volterra models: Application of S-tensors and the polynomial complementarity problem},
  author={Cui, Shaoxuan and Zhao, Qi and Zhang, Guofeng and Jardon-Kojakhmetov, Hildeberto and Cao, Ming},
  journal={IEEE Transactions on Automatic Control},
  year={2025},
  publisher={IEEE}
}

@article{proctor2016dynamic,
  title={Dynamic mode decomposition with control},
  author={Proctor, Joshua L and Brunton, Steven L and Kutz, J Nathan},
  journal={SIAM Journal on Applied Dynamical Systems},
  volume={15},
  number={1},
  pages={142--161},
  year={2016},
  publisher={SIAM}
}

@article{zhao2010recursive,
  title={Recursive identification for nonlinear ARX systems based on stochastic approximation algorithm},
  author={Zhao, Wen-Xiao and Chen, Han-Fu and Zheng, Wei Xing},
  journal={IEEE Transactions on Automatic Control},
  volume={55},
  number={6},
  pages={1287--1299},
  year={2010},
  publisher={IEEE}
}

@book{chen2024tensor,
  title={Tensor-based dynamical systems: {T}heory and {A}pplications},
  author={Chen, Can},
  year={2024},
  publisher={Springer Nature}
}

@book{chesi2009homogeneous,
  title={Homogeneous polynomial forms for robustness analysis of uncertain systems},
  author={Chesi, Graziano and Garulli, Andrea and Tesi, Alberto and Vicino, Antonio},
  volume={390},
  year={2009},
  publisher={Springer Science \& Business Media}
}

@article{samardzija1983stability,
  title={Stability properties of autonomous homogeneous polynomial differential systems},
  author={Samardzija, Nikola},
  journal={Journal of Differential Equations},
  volume={48},
  number={1},
  pages={60--70},
  year={1983},
  publisher={Elsevier}
}

@article{isaksson2002identification,
  title={Identification of ARX-models subject to missing data},
  author={Isaksson, Alf J},
  journal={IEEE Transactions on Automatic Control},
  volume={38},
  number={5},
  pages={813--819},
  year={2002},
  publisher={IEEE}
}

@article{najm2009uncertainty,
  title={Uncertainty quantification and polynomial chaos techniques in computational fluid dynamics},
  author={Najm, Habib N},
  journal={Annual review of fluid mechanics},
  volume={41},
  number={1},
  pages={35--52},
  year={2009},
  publisher={Annual Reviews}
}

@article{ivakhnenko2007polynomial,
  title={Polynomial theory of complex systems},
  author={Ivakhnenko, Alexey Grigorevich},
  journal={IEEE transactions on Systems, Man, and Cybernetics},
  number={4},
  pages={364--378},
  year={2007},
  publisher={IEEE}
}

@book{cox2020applications,
  title={Applications of polynomial systems},
  author={Cox, David A},
  volume={134},
  year={2020},
  publisher={American Mathematical Soc.}
}

@article{carlberg2013gnat,
  title={The GNAT method for nonlinear model reduction: effective implementation and application to computational fluid dynamics and turbulent flows},
  author={Carlberg, Kevin and Farhat, Charbel and Cortial, Julien and Amsallem, David},
  journal={Journal of Computational Physics},
  volume={242},
  pages={623--647},
  year={2013},
  publisher={Elsevier}
}

@article{lassila2014model,
  title={Model order reduction in fluid dynamics: challenges and perspectives},
  author={Lassila, Toni and Manzoni, Andrea and Quarteroni, Alfio and Rozza, Gianluigi},
  journal={Reduced Order Methods for modeling and computational reduction},
  pages={235--273},
  year={2014},
  publisher={Springer}
}

@article{grilli2017higher,
  title={Higher-order interactions stabilize dynamics in competitive network models},
  author={Grilli, Jacopo and Barab{\'a}s, Gy{\"o}rgy and Michalska-Smith, Matthew J and Allesina, Stefano},
  journal={Nature},
  volume={548},
  number={7666},
  pages={210--213},
  year={2017},
  publisher={Nature Publishing Group UK London}
}

@article{cencetti2021temporal,
  title={Temporal properties of higher-order interactions in social networks},
  author={Cencetti, Giulia and Battiston, Federico and Lepri, Bruno and Karsai, M{\'a}rton},
  journal={Scientific reports},
  volume={11},
  number={1},
  pages={7028},
  year={2021},
  publisher={Nature Publishing Group UK London}
}

@article{malizia2024reconstructing,
  title={Reconstructing higher-order interactions in coupled dynamical systems},
  author={Malizia, Federico and Corso, Alessandra and Gambuzza, Lucia Valentina and Russo, Giovanni and Latora, Vito and Frasca, Mattia},
  journal={Nature Communications},
  volume={15},
  number={1},
  pages={5184},
  year={2024},
  publisher={Nature Publishing Group UK London}
}

@article{ragnarsson2012block,
  title={Block tensor unfoldings},
  author={Ragnarsson, Stefan and Van Loan, Charles F},
  journal={SIAM Journal on Matrix Analysis and Applications},
  volume={33},
  number={1},
  pages={149--169},
  year={2012},
  publisher={SIAM}
}

@article{mao2025tensor,
  title={Tensor-based homogeneous polynomial dynamical system analysis from data},
  author={Mao, Xin and Dong, Anqi and He, Ziqin and Mei, Yidan and Mei, Shenghan and Chen, Can},
  journal={arXiv preprint arXiv:2503.17774},
  year={2025}
}

@article{pickard2023observability,
	title={Observability of hypergraphs},
	author={Pickard, Joshua and Surana, Amit and Bloch, Anthony and Rajapakse, Indika},
	journal={2023 62nd IEEE Conference on Decision and Control (CDC)},
	pages={2445--2451},
	year={2023},
	organization={IEEE}
}

@article{bairey2016high,
  title={High-order species interactions shape ecosystem diversity},
  author={Bairey, Eyal and Kelsic, Eric D and Kishony, Roy},
  journal={Nature communications},
  volume={7},
  number={1},
  pages={12285},
  year={2016},
  publisher={Nature Publishing Group UK London}
}

@article{chellaboina2009modeling,
  title={Modeling and analysis of mass-action kinetics},
  author={Chellaboina, Vijaysekhar and Bhat, Sanjay P and Haddad, Wassim M and Bernstein, Dennis S},
  journal={IEEE Control Systems Magazine},
  volume={29},
  number={4},
  pages={60--78},
  year={2009},
  publisher={IEEE}
}

@article{chen2021controllability,
  title={Controllability of hypergraphs},
  author={Chen, Can and Surana, Amit and Bloch, Anthony M and Rajapakse, Indika},
  journal={IEEE Transactions on Network Science and Engineering},
  volume={8},
  number={2},
  pages={1646--1657},
  year={2021},
  publisher={IEEE}
}

@article{dong2024controllability,
  title={Controllability and Observability of Temporal Hypergraphs},
  author={Dong, Anqi and Mao, Xin and Vasudevan, Ram and Chen, Can},
  journal={IEEE Control Systems Letters},
  year={2024},
  publisher={IEEE}
}

\newpage
\appendix
\section{Algorithms}


\noindent
{\setlength{\arrayrulewidth}{0.8pt}
\begin{tabular*}{\linewidth}{@{\extracolsep{\fill}}l}
\hline
\textbf{Function} \textsc{LeafLS}
$(p,\textbf X_0,\{\textbf V_q\},\{\textbf C_{\mathcal Q}\},\mathcal T)$ \\
\hline
\end{tabular*}}
\begin{algorithmic}[1]
\State \textbf{Input:} leaf index $p$, data $\textbf X_0=[\textbf x(1)\ \cdots\ \textbf x(T)]$, factors $\{\textbf V_q\}_{q=1}^k$, transfers $\{\textbf C_\mathcal Q\}_{\mathcal Q\in\mathcal T}$, tree $\mathcal T$
\State \textbf{Output:} stacked regressor matrix $\textbf H_p$
\For{$j=1$ to $T$}
    \State Construct left contraction \eqref{eq:htdleft}
    \State Construct right contraction
    \[
    \textbf R_p(j)
    =
    (\otimes_{\mathcal Q\in\mathcal G_{d-1}}\textbf C_\mathcal Q)\cdots
    (\otimes_{\mathcal Q\in\mathcal G_{0}}\textbf C_\mathcal Q)
    \]
    \State Construct $\textbf H_p(j)$ based on \eqref{eq:htdH}
\EndFor
\State Stack $\textbf H_p=[\textbf H_p(1)^\top\ \textbf H_p(2)^\top\ \cdots\ \textbf H_p(T)^\top]^\top$
\State \Return $\textbf H_p$
\end{algorithmic}
\vspace{-17pt}
\rule{\linewidth}{0.6pt}

\noindent
{\setlength{\arrayrulewidth}{0.6pt}
\begin{tabular*}{\linewidth}{@{\extracolsep{\fill}}l}
\hline
\textbf{Function} \textsc{InternalLS}
$(\mathcal P,l,\textbf X_0,\{\textbf V_q\},\{\textbf C_\mathcal Q\},\mathcal T)$\\
\hline
\end{tabular*}}
\begin{algorithmic}[1]
\State \textbf{Input:} internal node $\mathcal P\in\mathcal G_l$, level $l$, data $\textbf X_0$, factors $\{\textbf V_q\}_{q=1}^k$, transfers $\{\textbf C_\mathcal Q\}_{\mathcal Q\in\mathcal T}$, tree $\mathcal T$
\State \textbf{Output:} stacked regressor matrix $\textbf H_\mathcal P$
\For{$j=1$ to $T$}
    \State Compute right contraction with $\textbf C_{\mathcal P}$ replaced by $\textbf I_{r_{\mathcal P}}$
    \[
    \textbf R_{\mathcal P}(j)
    =
    (\otimes_{\mathcal Q\in\mathcal G_{l}}\textbf C_{\mathcal Q})
    \cdots
    (\otimes_{\mathcal Q\in\mathcal G_{0}}\textbf C_{\mathcal Q})
    \]
     \State Compute left contraction
            \begin{align*}
            \textbf L_{\mathcal P}(j)
            =
            \left(
            \textbf V_k
            \otimes
            \textbf x(j)^\top\textbf V_{k-1}
            \otimes \cdots \otimes
            \textbf x(j)^\top\textbf V_{1}
            \right)\\
            (\otimes_{\mathcal Q\in\mathcal G_{d-1}}\textbf C_{\mathcal Q})
            \cdots
            (\otimes_{\mathcal Q\in\mathcal G_{l+1}}\textbf C_{\mathcal Q})
            \end{align*}
    \State Form local regressor
    \[
    \textbf H_\mathcal P(j)
    =
    (\textbf R_\mathcal P(j)^\top\otimes\textbf L_\mathcal P(j))\,\textbf S_\mathcal P
    \]
\EndFor
\State Stack $\textbf H_\mathcal P=[\textbf H_\mathcal P(1)^\top\ \textbf H_\mathcal P(2)^\top \ \cdots\ \textbf H_\mathcal P(T)^\top]^\top$
\State \Return $\textbf H_\mathcal P$
\end{algorithmic}
\vspace{-17pt}
\rule{\linewidth}{0.6pt}

\newpage

\end{document}